\def\Rewards{\textit{rew}}
\def\Probabilistic{\mathsf{P}}
\def\pr#1#2{\mbox{pr}_{#1}(#2)}
\def\val{\mathit{val}}
\def\poly{\mathit{poly}}
\def\size{\mathit{size}}
\def\MP{\mathit{MP}}
\def\support#1{\mathit{Supp}(#1)}
\def\FP{\mathit{FP}}
\def\Prob#1#2{\mathit{Prob}^{#1, #2}}
\def\MDPProb#1{\mathit{Prob}^{#1}}
\def\GamePaths{\mathit{Paths}}
\def\FinGamePaths{\mathit{FPaths}}
\def\MDPPaths#1{\mathit{Paths}^{#1}}
\def\MCFinPaths#1{\mathit{FPaths}}
\def\MDP#1{\mathcal{#1}}
\def\MC#1{\mathcal{#1}}
\def\EC#1{\mathcal{#1}}
\def\EndComp{\mathit{EC}}
\def\pr#1#2{\operatorname{pr}_{#1}(#2)}
\DeclareMathOperator*{\argmin}{argmin}
\DeclareMathOperator*{\argmax}{argmax}
\def\Expect#1#2{\mathbb{E}^{#1, #2}}
\def\ExpectMDP#1{\mathbb{E}^{#1}}
\def\ExpectMC{\mathbb{E}}
\def\StochG{\mathcal{G}}
\def\post{\mathit{post}}
\def\pre{\mathit{pre}}
\def\Apre{\forall\mathit{Pre}}
\def\Epre{\exists\mathit{Pre}}
\def\AFairpre{\forall\mathit{Pre}_f}
\def\EFairpre{\exists\mathit{Pre}_f}
\def\out{\mathit{out}}
\def\Dist{\mathcal{D}}
\def\reward{\mathit{r}}
\def\cone{\mathit{cyl}}
\def\cylinder#1{\mathit{cyl}(#1)}
\def\strat#1{\pi_{#1}}
\def\starredstrat#1{\pi^*_{#1}}
\def\hatstrat#1{\hat{\pi}_{#1}}
\def\limit{\mathit{limit}}
\def\Strategies#1{\Pi_{#1}}
\def\MemorylessStrats#1{\Pi^{M}_{#1}}
\def\FairStrats#1{\Pi^{\mathcal{F}}_{#1}}
\def\MemorylessFairStrats#1{\Pi^{M\mathcal{F}}_{#1}}
\def\SemiMarkovFairStrats#1{\Pi^{S\mathcal{F}}_{#1}}
\def\SemiMarkovStrats#1{\Pi^{S}_{#1}}
\def\DetMemorylessStrats#1{\Pi^{MD}_{#1}}
\def\DetMemorylessFairStrats#1{\Pi^{MD\mathcal{F}}_{#1}}
\def\DetStrats#1{\Pi^{D}_{#1}}
\def\uniformstrat#1{\pi^{\mathrm{u}}_{#1}}
\def\U{\mathbin{\textsf{U}}}
\def\UpperBound{\mathbf{U}}
\def\LTL{\textsf{LTL}}
\def\PCTL{\textsf{PCTL}}
\def\Prism{\textsf{Prism}}
\newcommand{\postmin}{\post^{\min}}
\newcommand{\disttoT}[1]{{\lVert{#1}\rVert}}
\definecolor{lightblue}{RGB}{231,255,255}
\definecolor{lightred}{RGB}{255,224,224}
\definecolor{lightgreen}{RGB}{224,255,224}
\definecolor{lightyellow}{RGB}{255,255,224}
\definecolor{lightpurple}{RGB}{255,224,255}
\definecolor{darkerred}{RGB}{64,0,0}
\definecolor{darkred}{RGB}{128,0,0}
\definecolor{darkblue}{RGB}{0,0,128}
\definecolor{darkgreen}{RGB}{0,128,0}
\definecolor{darkpurple}{RGB}{128,0,128}
\definecolor{black}{RGB}{0,0,0}
\def\THICKhrulefill{\leavevmode \leaders \hrule height 5pt\hfill \kern \z@}
\mathchardef\mhyphen="2D
\title{Playing Against  Fair Adversaries in Stochastic Games with Total Rewards%
  \thanks{This work was supported by ANPCyT PICT-2017-3894 (RAFTSys),  ANPCyT PICT 2019-03134, SeCyT-UNC 33620180100354CB (ARES), and EU Grant agreement ID: 101008233 (MISSION).}\thanks{\textbf{A revised version of this paper was accepted in CAV 22.}}
}
\titlerunning{Playing Against  Fair Adversaries in Stochastic Games with Total Rewards} %optional, in case that the title is too long; the running title should fit into the top page column
\author{Pablo F. Castro \inst{1,3} \and
Pedro R. D'Argenio \inst{2,3,4} \and \\
Luciano Putruele \inst{1,3}  \and 
Ramiro Demasi \inst{2,3}
}
\authorrunning{P.F. Castro et al.}
\institute{Departamento de Computaci\'on, FCEFQyN, Universidad Nacional de 
  R\'{\i}o Cuarto,
  % R\'{\i}o Cuarto, C\'ordoba,
  Argentina
  % \mailsb
  \and 
  FAMAF, Universidad Nacional de C\'ordoba
  %C\'ordoba,
  Argentina,
  %\mailsa
  \and
  Consejo Nacional de Investigaciones Cient\'ificas y T\'ecnicas (CONICET), Argentina \and
  Saarland University, Saarland Informatics Campus,
  %Saarbr{\"{u}}cken,
  Germany}
\begin{document}

\maketitle

\begin{abstract}
We investigate zero-sum turn-based two-player stochastic games in which the objective of one player is to maximize the amount of rewards obtained during a play, while
the other %player
aims at minimizing it. %such a value. %Particularly, w
We focus on games in which the minimizer plays in a 
fair way. We believe that these kinds of games enjoy interesting applications in software verification, where the maximizer plays the role of a system intending to maximize the
number of  ``milestones'' achieved, and the minimizer represents the behavior of some uncooperative but yet fair environment.
Normally, to study total reward properties, games are requested to be stopping (i.e., they reach a terminal state with probability 1).  %In our setting, w
We relax the property to request that the game is stopping only under a fair minimizing player.
%% Fairness assumptions
%% over the environment have been show useful in computer science, particularly,  when reasoning about liveness properties and fallible environments.
We prove that these games are determined, i.e., each state of the game has a value defined. Furthermore, we show that both
players have memoryless and deterministic optimal strategies, and the game value can be computed by approximating the greatest-fixed point of a set of functional equations. We implemented our approach in a prototype tool, and evaluated it on an illustrating example and an Unmanned Aerial Vehicle case study.  

% \keywords{Probabilistic Automata, Fault-Tolerance, Simulation Relations, Stochastic  Games, Quantitative Analysis}
\end{abstract}

\setcounter{page}{1}

\section{Introduction} \label{sec:intro}
	Game theory \cite{MorgensternNeuman42}  admits an elegant and profound mathematical theory. 
	In the last decades, it has received widespread attention from computer scientists because it has important applications to software synthesis and verification. 
	The analogy is appealing, the operation of a system under an uncooperative environment (faulty hardware, malicious agents, unreliable communication channels, etc.) can be modeled as a game 
	between two players (the system and the environment), in which the system tries to fulfill certain goals, whereas the environment tries to prevent this from happening. 
	This view is particularly useful for \emph{controller synthesis}, i.e., to automatically generate decision-making policies from high-level specifications. 
	Thus, synthesizing a controller consists of computing optimal strategies for a given game.
		
	In this paper we  focus on zero-sum, perfect information, two-player, turn-based stochastic games with (non-negative) rewards \cite{FilarV96}. 
	Intuitively, these games are played in a graph by two players who move a token in turns. Some vertices are probabilistic, in the sense that, if a token is in a probabilistic vertex, then 
	the next vertex is randomly selected. Furthermore, the players select their moves using strategies. Associated with each vertex there is a reward (which, in this paper, is taken to be non-negative).
	The goal of Player $1$ is to maximize the expected amount of collected rewards during the game, whereas Player $2$ aims at minimizing this value. 
	This is what~\cite{SvorenovaKwiatkowska16} calls \emph{total reward objective}.
	These kinds of games have been shown useful to reason about several classes of systems such as autonomous vehicles, 
	fault-tolerant systems, communication protocols, energy production plants, etc.  %(See~\cite{SvorenovaKwiatkowska16} for concrete case studies.)
        Particularly, in this paper we consider those games in which one of the players employs fair strategies. 

	Fairness restrictions, understood as fair resolutions of non-determinism of actions, play an important role in software verification and controller synthesis. 
	Especially, fairness assumptions over environments make possible the verification of liveness properties on open systems. 
	Several authors have indicated the need for fairness assumptions over the environment 
	in the controller synthesis approach, e.g., \cite{DBLP:conf/fossacs/AsarinCV10,DBLP:conf/icse/DIppolitoBPU11}.
        As a simple example consider an autonomous vehicle that needs to traverse a field where moving objects may interfere in its path.  Though the precise behaviour of the objects may be unknown, it is reasonable to assume that they will not continuously obstruct the vehicle attempts to avoid them.  In this sense, while stochastic behaviour may be a consequence of the vehicle faults, we can only assume a fair behaviour of the surrounding moving objects.
        %% As a simple example consider a communication protocol that transmits bits between two or  more processes via an unreliable channel. 
	%% Clearly, no protocol can guarantee the transmission of information if the environment (i.e., the channel) always loses messages.  Similarly, and to provide more examples, we can assume that a processing module will never receive an infinite number of jobs within a time unit, or that a repairable piece of hardware will not be breaking continuously preventing the development of a particular task, etc.
        In this work, we consider stochastic games in which one of the players (the one playing the environment) is assumed to play only with strong fair strategies.

        In order to guarantee that the expected value of accumulated rewards is well defined in (perhaps infinite) plays, some kind of stopping criteria is needed.  A common way to do this is to force the strategies to decide to stop with some positive probability in every decision.  This corresponds to the so-called discounted stochastic games~\cite{Shapley1095,FilarV96}, and has the implications that the collected rewards become less important  as the game progresses (the ``importance reduction'' is given by the discount factor).  Alternatively, one may be interested in knowing the expected \emph{total} reward, that is, the expected accumulated reward \emph{without} any loss of it as time progresses.  For this value to be well defined, the game itself needs to be stopping. That is, no matter the strategies played by the players, the probability of reaching a terminal state needs to be $1$~\cite{Condon90,FilarV96}.
	%% In this work we consider stochastic games in which we have fairness assumptions over one of the players (the environment). Usually, some stopping criteria is used for 
	%% guaranteeing that the expected value of total reward is well-defined in (perhaps infinite) plays. This can be done in different ways, either using discounted rewards functions \cite{Shapley1095,FilarV96} or 
	%% assuming that a \emph{terminal} state will be reached with probability $1$ \cite{Condon90,FilarV96}.
        We focus on this last type of game.  However, we study here games that may not be stopping in general (i.e. for every strategy), but instead, require that they become stopping only when the minimizer plays in a fair way.  
	%% We study here games that may be not stopping in general, but they become stopping when the minimizer plays in a fair way.
        We use a  notion of (almost-sure) strong fairness, mostly following the ideas introduced in~\cite{DBLP:journals/dc/BaierK98} for Markov decision processes. We show that these kinds of games are determined, i.e., each state of the game has a value defined.  Furthermore, we show that memoryless and deterministic optimal  strategies exist for both players. Moreover, the value of the game can be calculated via the greatest fixed point of the corresponding functionals. 
It is important to remark that most of the properties discussed in this paper hold when the fairness assumptions are  made over the minimizer. Similar properties may not hold if the role of players is changed. 
However, these conditions encompass a large class of scenarios, where the system intends to maximize the total collected reward and the environment has the opposite objective.

%\paragraph{Contributions.} 
In summary,  the contributions of this paper are the following: (1) we introduce  the notion of stopping under fairness stochastic game, a generalization of stopping game that takes into account fair environments; (2) we prove that it can be decided in polynomial time whether a game is stopping under fairness; (3) we show  that  these kinds of games are determined and both players possess optimal stationary strategies,  which can be computed using Bellman equations; and (4) we implemented these ideas in a prototype tool,  which was used  to evaluate practical case studies.
%\paragraph{Structure.}

%%%%%%%%%%%%%%%%%%%%%%%%%%%%%%%%%%%%%%%%%%%%%%%%%%%%%%%%%%%%%%%%%%%%%%%%%
%% PARA LA POSIBLE VERSION FINAL:
%% Considerar revertir este p\'arrafo al que est\'a comentado debajo
%% dado que lo mutil\'e un poco para reducir espacio.
%%%%%%%%%%%%%%%%%%%%%%%%%%%%%%%%%%%%%%%%%%%%%%%%%%%%%%%%%%%%%%%%%%%%%%%%%
The paper is structured as follows. Section \ref{sec:mot_example} introduces an illustrating example to motivate the use of having fairness restrictions over the minimizer.  Section \ref{sec:background} fixes terminology and introduces background concepts. 
In Section \ref{sec:fair-strats} we describe a polynomial procedure to check whether a game stops under fairness assumptions, 
we also prove that determinacy is preserved in these games as well as the existence of (memoryless and deterministic) optimal strategies. 
Experimental results are described in Section \ref{sec:experimental_eval}. 
Finally, Sections \ref{sec:related_work} and  \ref{sec:conclusions} discuss related work and draw some conclusions, respectively.  Full proofs are gathered in the Appendix.

%% The paper is structured as follows. Section \ref{sec:mot_example} introduces an illustrating example to motivate the usefulness of having fairness restrictions over the minimizer on stochastic games. Section \ref{sec:background} fixes terminology and introduces background concepts. 
%% In Section \ref{sec:fair-strats} we describe a polynomial procedure to check whether a game stops under fairness assumptions, 
%% we also prove that determinacy is preserved in these games as well as the existence of (memoryless and deterministic) optimal strategies. 
%% Experimental results are described in Section \ref{sec:experimental_eval}. 
%% Finally, in Sections \ref{sec:related_work} and  \ref{sec:conclusions} we discuss related work and draw some conclusions, respectively.  Due to space constraints,  full proofs are gathered in the Appendix.

\newcommand{\roborta}{Roborta\xspace}

\section{\roborta vs.\ the Fair Light (A Motivating Example)} \label{sec:mot_example}

\begin{wrapfigure}[18]{l}{78mm}
\vspace{-11.6mm}
{\fontsize{6.6}{6.6}\selectfont\ttfamily
\begin{tabbing}
x\=xxxxxxxxxx\=xxxxxx\=xxx\=xxxx\=xxx\=xxxx\= \kill    
module \roborta\_vs\_the\_light\\[1ex]
\>col : [0..WIDTH] init 0; \\
\>row : [0..LENGTH] init 0; \\
\>light : [0..3] init 0; \>\>\>\> // current light color \\%[1ex]
\>                   \>\>\>\>// 0: red (light's turn) \\%[1ex]
\>                   \>\>\>\>// 1: yellow (\roborta moves sideways) \\%[1ex]
\>                   \>\>\>\>// 2: green (\roborta moves foreward) \\%[1ex]
\>                   \>\>\>\>// 3: off (light fails, any move) \\[1ex]
%\> // board allowed movements: 0:left, 1:both, 2:right \\[1ex]
\> // light moves \\[1ex]
\>[l\_y] (light=0) \> \>-> \>(1-Q) : (light'=1) + Q : (light'=3);\\[1ex]
\>[l\_g] (light=0) \> \>-> \>(1-Q) : (light'=2) + Q : (light'=3);\\[1ex]
%% \>[l\_y] (light=0) \> \>-> \>(1-Q) : (light'=1) + \\
%% \> 					 \> \> \> Q : (light'=3);\\[1ex]
%% \>[l\_g] (light=0) \> \>-> \>(1-Q) : (light'=2) + \\
%% \> 					 \> \> \> Q : (light'=3);\\[1ex]
\> // \roborta moves \\[1ex]
\>[r\_l]  ((light=1) | (light=3)) \& (MOVES[col,row] <= 1)  \\
\>                    \>\>-> \>(1-P) : (light'=0) \& (col'=(col-1)\%WIDTH) + \\       
\>                     \>\>\>  P : (light'=0) ; \\[1ex]

\>[r\_r] ((light=1) | (light=3)) \& (MOVES[col,row] >= 1)\\
\>                    \>\>-> \> (1-P) : (light'=0) \& (col'=(col+1)\%WIDTH) + \\
\>                     \>\>\> P : (light'= 0); \\[1ex]
\>[r\_f] ((light=2) | (light=3)) \& (row < LENGTH) \\
\>                    \>\>-> \> (1-P) : (light'=0) \& (row'=row+1)  + \\
\>                     \>\>\> P : (light'= 0);\\[1ex]
endmodule\\[-5ex]
\end{tabbing}}
\caption{Model for the  Game} \label{fig:robot_game_model}
\end{wrapfigure}
		Consider the following scenario. \roborta the robot is navigating a grid of $4 \times 4$ cells. There is a light signal  
determining the possible movements of the robot: if the light is yellow, she must move  sideways (at a border cell, \roborta is allowed to wrap around to the other side); if the light is green she ought to move forward; if the light is red, she cannot perform any movement; if the light is off, the robot is free to move either sideways or forward.  The light signal and \roborta change their states in turns.  A (non-negative) reward is associated with each location of the grid.  Also, the sideway movements can be restricted to only one direction in some locations.
Moreover, we consider possible failures on the behavior of the robot and the light. If \roborta fails, she looses her turn to move, and if the light fails, it turns itself off.  The failures occur with a given probability. The goal of \roborta is to collect as much rewards as possible.  In opposition, the light aims at minimizing this value.
The specification of this game is captured in Fig.~\ref{fig:robot_game_model} (using {\Prism}-like notation \cite{DBLP:conf/cav/KwiatkowskaNP11}). 
%% %
%% \begin{wrapfigure}[11]{r}{52mm}
%% \vspace{-5mm}
%% {\fontsize{6.6}{6.6}\selectfont\ttfamily
%% \centering
%% \includegraphics[scale=0.45]{}\hspace{1em}\mbox{}}
%% \caption{A robot on a $4 \times 4$ grid} \label{fig:robot_game_grid}
%% \end{wrapfigure}
%% 
%% \noindent
In this model, \texttt{WIDTH} and \texttt{LENGTH} are constants defining the dimension of the grid.  \texttt{MOVES} is a two-dimensional array modeling the possible sideways movements in the grid (\texttt{0} allows the robot to move only to the left, \texttt{1}, to either side, and \texttt{2}, only to the right).  \texttt{P} and \texttt{Q}  define the failure probabilities of \roborta and the light respectively.
%The rewards assigned to each location of the grid are stored in a matrix named \texttt{REWARDS}.
Fig.~\ref{fig:robot_game_grid} shows the assignment of rewards to each location of the $4 \times 4$ grid as well as the sideway movement restrictions (shown on the bottom-right of each location with white arrows).
The game starts at the location $(0, 0)$ and it stops when \roborta escapes through the end of the grid.

\begin{wrapfigure}[11]{r}{48mm}
\vspace{-9mm}
{\fontsize{6.6}{6.6}\selectfont\ttfamily
\centering
\includegraphics[scale=0.45]{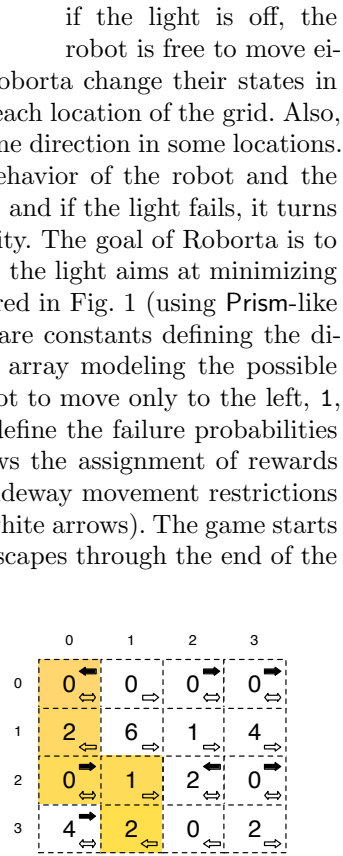}\hspace{1em}\mbox{}}
\caption{A robot on a $4 \times 4$ grid} \label{fig:robot_game_grid}
\end{wrapfigure}
	%A possible scenario in this game is as follows. The robot is in cell $(0,2)$ and the light is yellow, thus the robot can only move to locations $(0, 1)$ or $(0, 3)$. 
	A possible scenario in this game is as follows. \roborta starts in cell $(0,0)$ and, in an attempt to minimize the rewards accumulated by the robot, the environment switches the yellow light on.
	For the sake of simplicity, we assume no failures on the light, i.e., $\texttt{Q}=0$. 
	Notice that, if the environment plays always in this way (signaling a yellow light), then \roborta will never achieve the goal and 
the game never stops.  This scenario occurs when the light plays in an unfair way, i.e., an action (the one that turns the green light on) is enabled infinitely often, but it is not executed infinitely often.
	Assuming fairness over the environment, we can ensure that a green light will be eventually switched on, allowing the robot to move forward.

	The best strategy for \roborta when the light is yellow is shown in black arrows on the top-right of the locations without movement restrictions.
	%As a result, the best strategy to achieve the goal for the robot can be observed in the highlighted portion of the grid in yellow in
	As a result, when both players play their optimal strategies, the path taken by \roborta to achieve the goal can be observed in the yellow-highlighted portion of the grid  in
Fig.~\ref{fig:robot_game_grid}.  In Section \ref{sec:experimental_eval}, we evaluate this problem experimentally with different 
configurations of this game.

\section{Preliminaries} \label{sec:background}

We  introduce some basic definitions and results on stochastic games that will be necessary across the paper.

A (discrete) \emph{probability distribution} $\mu$ over a denumerable 
set $S$ is a function $\mu: S \rightarrow [0, 1] $  such that 
$\mu(S) = \sum_{s \in S} \mu(s) = 1$. 
Let $\Dist(S)$ denote the set of all probability distributions on $S$. $\Delta_s \in \Dist(S)$ denotes the Dirac distribution for $s \in S$, i.e., 
$\Delta_s(s) =1$ and $\Delta_s(s') = 0$ for all $s' \in S$ such that $s'\neq s$.
%\remarkPRD{Notar que cambie la notaci\'on de Dirac de $\delta_s$ por $\Delta_s$ por superposici'on de $\delta$}
The \textit{support} set of $\mu$ is defined by $\support{\mu} = \{s |~\mu (s) > 0\}$.

	 Given a set $V$, $V^*$ (resp. $V^\infty$) denotes the set of all finite sequences (resp. infinite sequences) of elements of $V$. Concatenation is represented using juxtaposition.  We use variables $\omega, \omega', \dots \in V^\infty$ as ranging over infinite sequences, and variables $\hat{\omega}, \hat{\omega}', \dots \in V^*$ as ranging over finite sequences. The $i$-th element of a finite (resp. infinite) sequence $\hat{\omega}$ (resp. $\omega$) is denoted 
$\hat{\omega}_i$ (resp. $\omega_i$). Furthermore, for any finite sequence $\hat{\omega}$, $|\hat{\omega}|$ denotes its length. For $\omega \in V^\infty$, $\inf(\omega)$ denotes the set of items appearing infinitely often in $\omega$. Given
$S \subseteq V^*$, $S^k$ is the set obtained by concatenating $k$ times the sequences in $S$.
	 
A \emph{stochastic game} \cite{ChatterjeeH12,SvorenovaKwiatkowska16} is a tuple $\StochG = ( V,  (V_1, V_2, V_\Probabilistic), \delta  ) $, where $V$ is a finite set of vertices (or states) with $V_1, V_2, V_\Probabilistic \subseteq V$ being a partition of $V$, and $\delta : V \times V \rightarrow [0,1]$ is a probabilistic transition function, such that for every $v \in V_1\cup V_2$, $\delta(v,v') \in  \{0,1\}$, for any $v' \in V$; and $\delta(v,\cdot) \in \Dist(V)$ for $v \in V_\Probabilistic$.
If $V_\Probabilistic = \emptyset$, then $\mathcal{G}$ is called a two-player game graph. Moreover, if $V_1 = \emptyset$ or $V_2 = \emptyset$, then $\mathcal{G}$ is a \emph{Markov decision process} (or MDP). Finally, in case that $V_1= \emptyset$ and $V_2 = \emptyset$, $\mathcal{G}$ is a \emph{Markov chain} (or MC). For all states $v \in V$ we define $\post^\delta(v) = \{v' \in V \mid \delta(v,v')>0 \}$, the set of successors of $v$. Similarly, $\pre^\delta(v') = \{v \in V \mid \delta(v,v')>0 \}$ as the set of predecessors of $v'$, we omit the index $\delta$ when it is clear from context.  Also, when useful, we fix an initial state for a game, in such a case we use the notation $\mathcal{G}_v$ to indicate that the game starts from $v$. Furthermore, we assume that $\post(v) \neq \emptyset$ for every
%$v \in V_1 \cup V_2$.
$v \in V$.
A vertex $v \in V$ is said to be \emph{terminal} if $\delta(v,v) = 1$, and $\delta(v,v')=0$ for all $v \neq v'$.
%\remarkPRD{Pasar al ap\'endice la parte de end component si no se usa en el texto principal}\remarkPRD{Pasar solo si se necesita espacio. Menciono el concepto en el texto}
Most results on MDPs rely on the notion of \emph{end component} \cite{BaierK08}, we straightforwardly extend this notion to two-player games: an end component of $\StochG$ is a pair $(V',\delta')$ such that (a) $V' \subseteq V$; (b) $\delta'(v) = \delta(v)$ for $v \in V_\Probabilistic$; (c) $\emptyset \neq \post^{\delta'}(v) \subseteq \post^\delta(v)$ for $v \in V_1 \cup V_2$; (d) $\post^{\delta'}(v) \subseteq V'$ for all $v \in V'$; (e) the underlying graph of $(V',\delta')$ is strongly connected.  Note that an end component can also be considered as being a game. The set of end components of $\StochG$ is denoted $\EndComp(\StochG)$.
 
%
%	 Another way of defining stochastic games is by partitioning the state space into Player 1 and Player 2 states and	considering a set of possible actions for the players. In this setting, the transition function assigns a probability distribution for each pair $(s,a)$ composed of a state and an action. Note that this definition (used, for instance, in \cite{FilarV96})  is equivalent to the definition given above \cite{SvorenovaKwiatkowska16}.
%From now on, given a tuple $(x_0,\dots,x_n)$, we denote $x_i$ by $\pr{i}{x_0,\dots,x_n}$. 
%
A \emph{path} in the game $\StochG$ is an infinite sequence of vertices $v_0 v_1 \dots$ such that $\delta(v_k, v_{k+1})>0$ for every $k \in \mathbb{N}$.
 %% The set of all paths is named $\GamePaths_{\StochG}$, and the set of finite prefixes of paths is denoted $\FinGamePaths_{\StochG}$. Similarly, the set of  paths starting at vertex $v$ is written $\GamePaths_{\StochG,v}$, and $\FinGamePaths_{\StochG,v}$ denotes the set of finite paths starting at $v$.
$\GamePaths_{\StochG}$ denotes the set of all paths,  and $\FinGamePaths_{\StochG}$ denotes the set of finite prefixes of paths. 
Similarly, $\GamePaths_{\StochG,v}$  and  $\FinGamePaths_{\StochG,v}$ denote the set of paths and the set of finite paths starting at vertex $v$.
	 
%The set of all plays is denoted by $\Omega$,
%and the set of plays starting in a vertex $v$ is written $\Omega_v$. The set of states appearing an infinite number of times in %a play $\omega$ is denoted $\inf(\omega)$.

A \emph{strategy} for Player $i$ (for $i\in\{1,2\}$) in a game $\StochG$ is a function $\strat{i}: V^*  V_i \rightarrow \Dist(V)$ that assigns a probabilistic distribution to each finite sequence of states such that $\strat{i}(\hat{\omega}v)(v') > 0$ only if $v' \in \post(v)$.  The set of all the strategies for Player $i$ is named $\Strategies{i}$. A strategy $\strat{i}$ is said to be  \emph{pure} or \emph{deterministic} if, for every $\hat{\omega} v \in V^*V_i$, $\strat{i}(\hat{\omega} v)$  is a Dirac distribution, and it is called \emph{memoryless} if $\strat{i}(\hat{\omega} v) = \strat{i}(v)$, for every $\hat{\omega} \in V^*$.
%
%% The set of all the memoryless strategies for Player $i$ is denoted $\MemorylessStrats{i}$,
%% $\DetStrats{i}$ denotes the set of deterministic strategies, and $\DetMemorylessStrats{i}$ denotes the set of all deterministic and memoryless strategies for that player.
%
Let $\MemorylessStrats{i}$ and $\DetStrats{i}$ be respectively
the set of all memoryless strategies and 
the set of all deterministic strategies for Player $i$.  $\DetMemorylessStrats{i} = \MemorylessStrats{i} \cap \DetStrats{i}$ is the set of all its deterministic and memoryless strategies.

	Given two strategies $\strat{1} \in \Strategies{1}$, $\strat{2} \in \Strategies{2}$ and an initial vertex $v$,  the \emph{result} of the game  is a Markov chain \cite{ChatterjeeH12}, denoted 
$\StochG^{\strat{1}, \strat{2}}_v$. An event $\mathcal{A}$ is a measurable set in the Borel $\sigma$-algebra generated by the cones of $\GamePaths_{\StochG}$.  The \emph{cone} or \emph{cylinder} spanned by the finite path $\hat{\omega} \in \FinGamePaths_{\StochG}$ is the set $\cone(\hat{\omega})=\{\omega \in \GamePaths_\StochG \mid \forall 0 \leq i < |\hat{\omega}|: \omega_i = \hat{\omega}_i \}$. $\Prob{\strat{1}}{\strat{2}}_{\StochG,v}$ is the associated probability measure obtained when fixing strategies $\strat{1}$, $\strat{2}$, and an initial vertex $v$  \cite{ChatterjeeH12}. Intuitively, $\Prob{\strat{1}}{\strat{2}}_{\StochG,v}(\mathcal{A})$
is the probability that strategies $\strat{1}$ and $\strat{2}$ generates a path belonging to the set $\mathcal{A}$ when the game $\StochG$ starts in $v$. When no confusion is possible, we just write $\Prob{\strat{1}}{\strat{2}}_{\StochG,v}(\hat{\omega})$ instead of $\Prob{\strat{1}}{\strat{2}}_{\StochG,v}(\cone(\hat{\omega}))$.
%Similarly, $\Paths{\strat{1}}{\strat{2}}_{\StochG, v}$ is the set of paths
%of $\StochG^{\strat{1}, \strat{2}}$ starting at vertex $v$ (i.e., the plays generated by strategies $\strat{1}$ and $\strat{2}%$). Similarly,  $\FinPaths{\strat{1}}{\strat{2}}_{\StochG, v}$ denotes the set of finite paths generated by the strategies. Also, 
Similar notations are used  for MDPs and MCs.  A stochastic game (defined as above) is said to be \emph{stopping} \cite{Condon92} if 
for all pair of strategies $\strat{1}, \strat{2}$ the probability of reaching a terminal state is $1$. We  use {\LTL} notation to represent specific set of paths, e.g., $\Diamond T = \{\omega \in \GamePaths_{\StochG} \mid \exists i \geq 0 : \omega_i \in T \}$ is the set of all the plays in the game that reach vertices in $T$.

%  Given a game, a vertex $v \in V$ is said to be \emph{terminal} if $\delta(v,v) = 1$, and $\delta(v,v')=0$ for all $v \neq v'$. A stochastic game (defined as above) is said to be \emph{stopping} \cite{Condon92} if 
%for all pair of strategies $\strat{1}, \strat{2}$ the probability of reaching a terminal state is $1$.

%[RD2ALL] Review this last paragraph, may more explanation is needed, page 398 paper ChatterjeeH12

%A \emph{Boolean objective} for $\mathcal{G}$ is a set $\Phi \subseteq \Omega$, we say that a play $\omega$ is winning for Player $1$ at vertex $v$ if $\omega \in \Phi$, otherwise it is winning for Player $2$ 
%(i.e., we only consider \emph{zero-sum} games).   A strategy $\strat{1}$ is a \emph{sure winning strategy} for Player $1$  from vertex $v$ if, for every strategy $\strat{2}$ for Player $2$, $\out_v(\strat{1}, \strat{2}) \subseteq \Phi$. $\strat{1}$ is said to be \emph{almost-sure winning} if for every strategy $\strat{2}$ for Player $2$, we have $\Prob{\strat{1}}{\strat{2}}(\Phi)=1$.
%	Sure  and almost-sure winning strategies for Player $2$ are defined in a similar way.
%Reachability games are games with Boolean objectives of the style: 
%$\Phi = \{ \omega_0,\omega_1,\dots \in \Omega \mid  \exists i : \omega_i \in V' \}$, for some set $V' \subseteq V$. A standard result is that, if a reachability game has a sure winning strategy, then
%it has a pure memoryless sure winning strategy \cite{ChatterjeeH12}. 

A \emph{quantitative objective} or \emph{payoff function} is a measurable  function $f: V^{\infty} \to \mathbb{R}$.   Let $\Expect{\strat{1}}{\strat{2}}_{\StochG,v}[f]$ be the expectation of measurable
function $f$ under probability $\Prob{\strat{1}}{\strat{2}}_{\StochG,v}$.  The goal of Player 1 is to maximize this value whereas the goal of 
Player $2$ is to minimize it.  Sometimes quantitative objective functions can be defined via \emph{rewards}. These are assigned by a \emph{reward function} $\reward:V \to \mathbb{R}^+$.  A \emph{stochastic game with rewards} is a tuple $(V, (V_1, V_2, V_\Probabilistic), \delta, \reward \rangle$ composed of a stochastic game and a reward function.  We assume that for every terminal vertex $v$,  $\reward(v) = 0$.
%Unless stated otherwise, we assume that  $\reward(v) = 0 $ for all terminal vertices $v$.

	The value of the game for Player $1$ at vertex $v$ under strategy $\strat{1}$ is defined as the infimum over all the values  
resulting from Player $2$ strategies in that vertex, i.e. $\inf_{\strat{2} \in \Strategies{2}} \Expect{\strat{1}}{\strat{2}}_{\StochG,v}[f]$.
The \emph{value of the game} for Player $1$ is defined as the supremum of the values of all Player $1$ strategies, i.e., $\sup_{\strat{1} \in \Strategies{1}} \inf_{\strat{2} \in \Strategies{2}} \Expect{\strat{1}}{\strat{2}}_{\StochG,v}[f]$.
Analogously, the value of the game for a Player $2$ under strategy $\strat{2}$ and the value of the game 
for Player $2$ are defined as $\sup_{\strat{1} \in \Strategies{1}}  \Expect{\strat{1}}{\strat{2}}_{\StochG,v}[f]$ 
and $\inf_{\strat{2} \in \Strategies{2}} \sup_{\strat{1} \in \Strategies{1}}  \Expect{\strat{1}}{\strat{2}}_{\StochG,v}[f]$, respectively. We say that a game is \emph{determined} if both values are the same, that is,
	$\sup_{\strat{1} \in \Strategies{1}} \inf_{\strat{2} \in \Strategies{2}} \Expect{\strat{1}}{\strat{2}}_{\StochG,v}[f]
	=
	\inf_{\strat{2} \in \Strategies{2}} \sup_{\strat{1} \in \Strategies{1}} \Expect{\strat{1}}{\strat{2}}_{\StochG,v}[f]$.
%% \[
%% 	\sup_{\strat{1} \in \Strategies{1}} \inf_{\strat{2} \in \Strategies{2}} \Expect{\strat{1}}{\strat{2}}_{\StochG,v}[f]
%% 	=
%% 	\inf_{\strat{2} \in \Strategies{2}} \sup_{\strat{1} \in \Strategies{1}} \Expect{\strat{1}}{\strat{2}}_{\StochG,v}[f].
%% \]
	Martin \cite{Martin98} proved the determinacy of stochastic games for Borel and bounded objective functions.
	
	In this paper we focus  on the  \emph{total accumulated reward payoff} function, i.e.: $\Rewards(\omega) = \sum^\infty_{i=0} \reward(\omega_i)$. Since $\Rewards$ is unbounded, the results of Martin \cite{Martin98} do not apply to this function.  In this paper we restrict ourselves to non-negative rewards, as shown in the next sections, non-negative rewards are enough to deal with interesting case studies, we briefly discuss in Section \ref{sec:conclusions} the possible extension of the results presented here to games having negative rewards.
\section{Stopping Games and Fair Strategies}\label{sec:fair-strats}

We begin this section by introducing the notions of \emph{(almost sure) fair strategy} and \emph{stopping games under fairness}. 
%%
%% It is worth noting that most of the definitions below  can be applied to both players. However, from now on, we assume that Player $2$ represents the environment,  which tries to minimize the amount of rewards obtained by the system, thus  fairness restrictions will be applied to this player. 
%%
From now on, we assume that Player $2$ represents the environment,  which tries to minimize the amount of rewards obtained by the system, thus  fairness restrictions will be applied to this player. 

%intuitively, one player represents the environment and the other represents the system. In our setting, the environment  wants to minimize the amount of rewards obtained by the system (clock ticks, bits successfully transmitted, etc), and the system wants to maximize these rewards. 
%In the following, we assume that Player $2$ is the environment, thus from now on fairness restrictions are applied to this player. 
\begin{definition}
   Given a stochastic game $\StochG = (V, (V_1, V_2, V_\Probabilistic), \delta)$.
The set of fair plays for Player $2$ (denoted $\FP^2$) is defined as follows:
\[
	\FP^2 = \{ \omega \in \GamePaths_{\StochG} \mid \forall v' \in V_2: v' \in \inf(\omega)  \Rightarrow \post(v') \subseteq \inf(\omega) \}
\]
%%    Given a stochastic game $\StochG = (V, (V_1, V_2, V_\Probabilistic), \delta)$ and  $i \in \{1,2\}$.
%% The set of fair plays for Player $i$ from vertex $v$ (denoted $\FP^i_v$) is defined as follows:
%% \[
%% 	\FP^i_v = \{ \omega \in \GamePaths_{\StochG,v} \mid \forall v' \in V_i: v' \in \inf(\omega)  \Rightarrow \post(v') \subseteq \inf(\omega) \}
%% \]
\end{definition}
	Alternatively,  if we consider each vertex as a proposition,  $\FP^2$ can be written using {\LTL} notation as: 
	$\bigwedge_{v \in V_2} \bigwedge_{v' \in \post(v)}(\Box \Diamond v \Rightarrow \Box \Diamond v')$.  This property is 	 $\omega$-regular, thus it is measurable  in the $\sigma$-algebra generated by the cones of $\GamePaths_{\StochG}$ (see e.g. \cite[p.804]{BaierK08}.)  This is a state-based notion of fairness, but it can be straightforwardly extended to settings where transitions are considered. For the sake of simplicity we do not do so in this paper.
	
%	Note that the set of fair plays is $\omega$-regular, thus it is measurable  in the $\sigma$-algebra generated by the cones of $\GamePaths_{\StochG}$ \cite{Katoen08}. 
	
	%% The following definition introduces the notion of  (almost-sure) \emph{fair strategies} for Player $2$.
	Next, we introduce the notion of  (almost-sure) \emph{fair strategies} for Player $2$.
        %% (Almost-sure) fair strategies for Player $1$ are defined analogously. 
\begin{definition} Given a stochastic game $\StochG = (V, (V_1, V_2, V_\Probabilistic), \delta)$,
a strategy $\strat{2} \in \Strategies{2}$ is said to be \emph{almost-sure fair} (or simply \emph{fair}) iff it holds that:
%$\Prob{\strat{1}}{\strat{2}}_{\StochG,v}(\FP^2_v) = 1$,
$\Prob{\strat{1}}{\strat{2}}_{\StochG,v}(\FP^2) = 1$,
for every $\strat{1} \in \Strategies{1}$ and $v \in V$. 
\end{definition}
The set of all the  fair strategies for Player $2$ is denoted by $\FairStrats{2}$. We combine this notation  with the notation introduced in Section \ref{sec:background}, e.g.: $\MemorylessFairStrats{2}$ refers to the set of all  memoryless and fair strategies for Player $2$. 
The previous definition is based on the notion of fair scheduler as introduced for Markov decision processes~\cite{DBLP:journals/dc/BaierK98,BaierK08}.

Note that for stopping games, every strategy is fair, because the probability of visiting a vertex infinitely often is $0$.
Also notice that there are games which are not stopping, but they become stopping if Player $2$ uses only fair strategies. This is the main idea behind the notion of \emph{stopping under fairness} as introduced in the following definition.
	
\begin{definition}\label{def:stopping-under-fairness}
  A stochastic game $\StochG = (V, (V_1, V_2, V_\Probabilistic), \delta)$ is said  to be \emph{stopping under fairness} if for all strategies $\strat{1} \in \Strategies{1}, \strat{2} \in \FairStrats{2}$ and vertex $v \in V$, it holds that
  $\Prob{\strat{1}}{\strat{2}}_{\StochG,v}(\Diamond T)=1$,  where $T$ is the set of terminal vertices of $\StochG$. 
\end{definition}

%\subsection{Checking the stopping criteria}
\paragraph{Checking the stopping criteria.}

This section is devoted to the effective
characterization of games that are stopping under fairness.
%% The next part of this section is devoted to the effective
%% characterization of games that are stopping under fairness.
The
following lemma states that for every game that is not stopping under
fairness there is a \emph{memoryless deterministic} strategy for
Player 1 and a fair strategy for Player 2 that witnesses it.

\begin{lemma}\label{lemma:memoryless-strat}
  Let $\StochG = (V, (V_1, V_2, V_\Probabilistic), \delta)$ be a
  stochastic game, $v \in V$, and $T$ the set of terminal states of $\StochG$.
  If $\Prob{\strat{1}}{\strat{2}}_{\mathcal{G}, v}(\Diamond T) < 1$
  for some
  $\strat{1} \in \Strategies{1}$ and $\strat{2} \in \FairStrats{2}$,
  then, for some memoryless and deterministic strategy
  $\strat{1}' \in \DetMemorylessStrats{1}$ and fair strategy
  $\strat{2}' \in \FairStrats{2}$,
  $\Prob{\strat{1}'}{\strat{2}'}_{\mathcal{G}, v}(\Diamond T) < 1$.
\end{lemma}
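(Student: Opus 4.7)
The idea is to extract from $(\strat{1},\strat{2})$ a fair end component disjoint from $T$ and then use memoryless strategies that (i)~drive the play into that component and (ii)~keep it there. First, I would unpack the hypothesis: since $\Prob{\strat{1}}{\strat{2}}_{\mathcal{G},v}(\Diamond T)<1$ and $\strat{2}$ is fair, the set $\neg\Diamond T \cap \FP^2$ has positive $\Prob{\strat{1}}{\strat{2}}_{\mathcal{G},v}$-measure, so there is a path $\omega$ with $v$ as first vertex, never visiting $T$, and satisfying the fairness condition. Let $U=\inf(\omega)$. Invoking the standard end-component argument (the extension to two-player games is routine because probabilistic vertices behave as in an MDP and the only extra condition for Player~$2$ vertices is supplied by the fairness premise), $U$ is an end component of $\mathcal{G}$ with the strong property that $\post^{\delta}(u)\subseteq U$ for every $u\in (V_{2}\cup V_{\Probabilistic})\cap U$, and $\post^{\delta}(u)\cap U\neq\emptyset$ for every $u\in V_{1}\cap U$. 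Moreover $U\cap T=\emptyset$.

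Next I would build the candidate memoryless strategies. Define $\strat{2}'$ memorylessly as the uniform distribution over $\post^{\delta}(u)$ for each $u\in V_{2}$; this is fair because, under any opponent strategy, every successor of a state visited infinitely often is chosen almost surely infinitely often (a standard Borel--Cantelli argument, as used in~\cite{DBLP:journals/dc/BaierK98}). For $\strat{1}'$ I would proceed in two stages. Within $V_{1}\cap U$ I fix, for each $u$, a deterministic choice of some successor in $\post^{\delta}(u)\cap U$, which is non-empty by the end component property. Outside $U$, I consider the MDP $\mathcal{G}[\strat{2}']$ obtained by resolving Player~$2$'s moves with $\strat{2}'$, and I invoke the classical result that, for reachability objectives in finite MDPs, memoryless deterministic strategies suffice for achieving the maximum reachability probability. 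Taking the target to be $U$, I pick $\strat{1}'$ on $V_{1}\setminus U$ to be any memoryless deterministic maximiser.

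It remains to check that $\Prob{\strat{1}'}{\strat{2}'}_{\mathcal{G},v}(\Diamond T)<1$. Because $\strat{2}'$ assigns positive probability to every move that $\strat{2}$ assigns positive probability to (and to more), the finite path from $v$ to $U$ witnessed by $\omega$ has positive probability under $(\strat{1},\strat{2}')$, so the \emph{maximum} probability of reaching $U$ from $v$ in $\mathcal{G}[\strat{2}']$ is strictly positive; hence the memoryless $\strat{1}'$ I chose also reaches $U$ from $v$ with positive probability. Once inside $U$, the constructed strategies trap the play: from $V_{1}\cap U$, $\strat{1}'$ picks an $U$-successor; from $V_{2}\cap U$, $\strat{2}'$ only allows transitions to $\post^{\delta}(\cdot)\subseteq U$; and from $V_{\Probabilistic}\cap U$ all positive-probability successors lie in $U$. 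Since $U\cap T=\emptyset$, a positive-probability set of paths never reaches $T$, giving $\Prob{\strat{1}'}{\strat{2}'}_{\mathcal{G},v}(\Diamond T)<1$.

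The main obstacle I expect is the end-component extraction in the first paragraph: one must carefully argue that fairness of $\strat{2}$ turns the set $\inf(\omega)$ of a positive-measure fair, $T$-avoiding path into a genuine end component in which Player~$2$'s full successor set is preserved (so that the later uniform strategy keeps the play inside $U$). A secondary, minor subtlety is ensuring that modifying the memoryless maximiser $\strat{1}'$ inside $V_{1}\cap U$ does not spoil its reachability value: this is immediate because the modification only affects what happens after the first visit to $U$, which is precisely when the reachability event has already been recorded.
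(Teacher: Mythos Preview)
Your proof is correct and follows essentially the same strategy as the paper: both isolate an end component disjoint from $T$ in which every Player~2 vertex has its full $\post^{\delta}$-set contained (the paper obtains this via the MDP limit theorem applied to the merged-player MDP together with a contradiction argument, you via $\inf(\omega)$ for a single fair $T$-avoiding path), then take $\strat{2}'$ to be the uniform strategy and build a memoryless deterministic $\strat{1}'$ that reaches and stays trapped in that component. The only cosmetic difference is that the paper constructs $\strat{1}'$ explicitly by following a witnessing finite path while skipping loops, whereas you invoke the existence of memoryless deterministic optimal reachability strategies in the MDP $\mathcal{G}[\strat{2}']$; both devices serve the same purpose.
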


The proof of this lemma follows by noticing that, if
$\Prob{\strat{1}}{\strat{2}}_{\mathcal{G}, v}(\Diamond T) < 1$, there
must be a finite path that leads with some probability to an end
component not containing a terminal state and which is a trap for the
fair strategy $\strat{2}$.  This part of the game enables the
construction of a memoryless deterministic strategy for Player 1 by
ensuring that it follows the same finite path (but skipping loops) and
that it traps Player 2 in the same end component.

The next theorem states that checking stopping under fairness in a
stochastic game $\StochG$ can be reduced to check the stopping
criteria in the MDP obtained from $\StochG$ by fixing a strategy in
Player 2 that selects among the output transitions according to a
uniform distribution.  Thus, this theorem enables a graph solution to
determine stopping under fairness.

%The next theorem is central to reduce the problem of checking whether a game is stopping to that of checking whether an MDP version of it is stopping.

\begin{theorem}\label{thm:uniform-prob}
  Let $\StochG = (V, (V_1, V_2, V_\Probabilistic), \delta)$ be a
  stochastic game and $T$ its set of terminal states.
  Consider the Player 2 (memoryless) strategy
  $\uniformstrat{2} : V_2 \rightarrow \Dist(V)$ defined by
  $\uniformstrat{2}(v)(v') = \frac{1}{\# \post(v)}$, for all $v \in
  V_2$ and $v' \in post(v)$.
  Then, $\StochG$ is stopping under fairnes iff
  $\Prob{\strat{1}}{\uniformstrat{2}}_{\StochG,v}(\Diamond T)=1$ for
  every $v\in V$ and $\strat{1} \in \Strategies{1}$.
%% Then, for all $v \in V$ it holds that:  $\Prob{\strat{1}}{\strat{2}}_{\StochG,v}(\Diamond T)=1$, for 
%% every  $\strat{1} \in \Strategies{1}$ and fair $\strat{2} \in \FairStrats{2}$, iff  
%% $\Prob{\strat{1}}{\uniformstrat{2}}_{\StochG,v}(\Diamond T)=1$ for every $\strat{1} \in \Strategies{1}$.
\end{theorem}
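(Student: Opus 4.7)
The plan is to prove both directions by connecting fair strategies to end components, exploiting that $\uniformstrat{2}$ assigns positive probability to every graph edge out of a Player $2$ state.

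For the forward direction ($\Rightarrow$), I would first check that $\uniformstrat{2}$ is itself fair. Fix any $\strat{1} \in \Strategies{1}$ and $v \in V$ and consider the Markov chain $\StochG^{\strat{1}, \uniformstrat{2}}$. If a state $v' \in V_2$ is visited infinitely often along a path, then on each such visit every successor $v'' \in \post(v')$ is chosen with probability $\frac{1}{\#\post(v')} > 0$; a standard Borel–Cantelli argument yields that almost surely $v'' \in \inf(\omega)$. Hence $\Prob{\strat{1}}{\uniformstrat{2}}_{\StochG,v}(\FP^2) = 1$, i.e., $\uniformstrat{2} \in \FairStrats{2}$. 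The conclusion then follows immediately from Definition~\ref{def:stopping-under-fairness}.

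For the converse ($\Leftarrow$), I would argue by contrapositive. Assume $\StochG$ is not stopping under fairness, so some $\strat{1} \in \Strategies{1}$, $\strat{2} \in \FairStrats{2}$, and $v \in V$ give $\Prob{\strat{1}}{\strat{2}}_{\StochG,v}(\Diamond T) < 1$. By Lemma~\ref{lemma:memoryless-strat}, we may replace these by a memoryless deterministic $\strat{1}' \in \DetMemorylessStrats{1}$ and a fair $\strat{2}' \in \FairStrats{2}$ witnessing the same inequality. Fixing $\strat{1}'$ turns $\StochG$ into an MDP $\StochG^{\strat{1}'}$ in which only the Player $2$ states retain nondeterminism. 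By the end-component theorem for MDPs, under $(\strat{1}', \strat{2}')$ the set $\inf(\omega)$ is almost surely an end component of $\StochG^{\strat{1}'}$; since $\Prob{\strat{1}'}{\strat{2}'}_v(\Diamond T) < 1$ and terminal states are absorbing, there exists, with positive probability, an end component $C$ reachable from $v$ with $C \cap T = \emptyset$. Because $\strat{2}'$ is fair, every successor of every Player $2$ state $v'' \in C$ lies in $\inf(\omega) = C$; combined with $\strat{1}'$ being deterministic, this makes $C$ closed in the underlying graph.

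To finish, I would transfer this witness to $\uniformstrat{2}$. Positive reachability of $C$ under $(\strat{1}', \strat{2}')$ yields a finite graph path $\hat{\omega}$ from $v$ into $C$; since $\uniformstrat{2}$ and the distributions at probabilistic vertices assign positive probability to every graph edge, $\hat{\omega}$ also has positive probability under $(\strat{1}', \uniformstrat{2})$. Closure of $C$ under the dynamics of $\StochG^{\strat{1}', \uniformstrat{2}}$ guarantees that once entered, the play remains in $C$ forever and therefore never reaches $T$. Hence $\Prob{\strat{1}'}{\uniformstrat{2}}_{\StochG,v}(\Diamond T) \leq 1 - \Prob{\strat{1}'}{\uniformstrat{2}}_{\StochG,v}(\cone(\hat{\omega})) < 1$, contradicting the hypothesis.

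The main obstacle is the identification step that turns a fair-but-non-absorbing play into an end component $C$ whose Player $2$ states have all their successors inside $C$. This requires combining the standard end-component theorem with a careful use of the fairness definition: any Player $2$ successor escaping $C$ would be enabled infinitely often inside $C$ and, by fairness, be visited infinitely often, contradicting $\inf(\omega) = C$. Once this closure property is established, the uniform-strategy reachability argument is routine.
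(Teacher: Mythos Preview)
Your proposal is correct and follows essentially the same approach as the paper: the easy direction is handled by noting that $\uniformstrat{2}$ is fair (the paper states this without the Borel--Cantelli justification you supply), and the hard direction proceeds by contraposition via Lemma~\ref{lemma:memoryless-strat}, extracting a $T$-free end component that is closed for Player~2 by fairness, and transferring reachability and absorption to $\uniformstrat{2}$. The only cosmetic difference is that the paper invokes the duality of Theorem~10.133 in~\cite{BaierK08} to locate the end component, whereas you appeal directly to the limit/end-component theorem combined with the fairness closure argument you spell out at the end; both routes yield the same closed component.
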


While the ``only if'' part of the theorem is direct, the ``if'' part
is proved by contraposition using Lemma~\ref{lemma:memoryless-strat}.

In practice, one needs some procedure to check whether a stochastic
game is stopping under fairness. As already remarked above, a game
could be not stopping for all pairs of strategies, but it could be
stopping when fairness assumptions are made about Player $2$'s
behavior.
One could use Theorem~\ref{thm:uniform-prob} to transform game
$\StochG$ into the MDP $\StochG^{\uniformstrat{2}}$ by fixing
$\uniformstrat{2}$ in $\StochG$ and check whether
$\MDPProb{\strat{1}}_{\StochG^{\uniformstrat{2}},v}(\Diamond T)=1$ for
all $v\in V$.  Instead, we use Theorem~\ref{thm:uniform-prob} to
provide a direct algorithm. Using this, we prove that we can decide in
polynomial time whether a game is stopping under fairness or not.
The main idea is to use a modification of the standard $\pre$
operator, as shown in the following definition:
\begin{align*}
  \EFairpre(C) \ = \ {}&\{ v \in V \mid \delta(v,C) > 0\} \\
  \AFairpre(C) \ = \ {}&\{ v \in V_2\cup V_\Probabilistic \mid \delta(v,C)>0\} \\
                       &\cup \ \{ v \in  V_1 \mid \forall v' \in V : \delta(v,v') > 0 \Rightarrow v' \in C \}
	%% \EFairpre(C) \ = \ {}& \{ v \in V \mid \delta(v,C) > 0\} \\
	%% 	  %     & \cup \{ v \in  V_1 \cup V_2  \mid \delta(v,C)>0 \}\\
	%% \AFairpre(C) \ = \ {}&\{ v \in V_\Probabilistic \cup V_2\mid \delta(v,C) > 0\}
	%% 	       \ \cup \ \{ v \in  V_1   \mid \forall v' \in V : \delta(v,v') > 0 \Rightarrow v' \in C \}
	%% %% \AFairpre(C) = &\{ v \in V_\Probabilistic \cup V_2\mid \delta(v,C) > 0\} \\
	%% %% 	      & \cup \{ v \in  V_1   \mid \forall v' \in V : \delta(v,v') > 0 \Rightarrow v' \in C \}
\end{align*}
As usual we consider the transitive closures of these operators denoted $\EFairpre^*$ and $\AFairpre^*$, respectively.
%% These operators allows us to check if a given game is stopping under fairness for the strategy $\uniformstrat{2}$, as defined in Theorem \ref{thm:uniform-prob}.
%
\begin{theorem}\label{thm:stopping-algorithm}
  Let $\StochG = (V, (V_1, V_2, V_\Probabilistic), \delta)$, be a
  stochastic game and let $T$ be the set of its terminal states. Then,
  %% $\Prob{\strat{1}}{\strat{2}}_{\StochG,v}(\Diamond T) = 1$ for every
  %% $\strat{1} \in \Strategies{1}$ and $\strat{2} \in \FairStrats{2}$
  %% iff $v \in V\setminus \EFairpre^*(V \setminus \AFairpre^*(T))$.
  \begin{inparaenum}[(1)]
  \item%
    $\Prob{\strat{1}}{\strat{2}}_{\StochG,v}(\Diamond T) = 1$ for every
    $\strat{1} \in \Strategies{1}$ and $\strat{2} \in \FairStrats{2}$
    iff $v \in V\setminus \EFairpre^*(V \setminus \AFairpre^*(T))$, and
  \item%
    $\StochG$ is stopping under fairness iff
    $\EFairpre^*(V \setminus \AFairpre^*(T)) = \emptyset$.
  \end{inparaenum}
\end{theorem}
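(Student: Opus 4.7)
My plan is to deduce Part (2) from Part (1) and focus the work on Part (1). The game is stopping under fairness iff the condition of Part (1) holds at every $v \in V$, which is equivalent to $V \setminus \EFairpre^*(V \setminus \AFairpre^*(T)) = V$, i.e.\ to $\EFairpre^*(V \setminus \AFairpre^*(T)) = \emptyset$. So the real content lies in Part (1). Abbreviate $W = V \setminus \AFairpre^*(T)$ and $U = V \setminus \EFairpre^*(W)$; my first step will be to record the closure properties of $W$ and $U$ under the transition relation.

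Since $V$ is finite and $\AFairpre$ is monotone, $\AFairpre^*(T)$ stabilises, so $\AFairpre(\AFairpre^*(T)) \subseteq \AFairpre^*(T)$. Reading off the definitions I then obtain: every $w \in W \cap V_1$ has \emph{at least one} successor in $W$ (otherwise all its successors lie in $\AFairpre^*(T)$, which would force $w$ itself into $\AFairpre^*(T)$); while every $w \in W \cap (V_2 \cup V_\Probabilistic)$ has \emph{all} its successors in $W$ (a single successor in $\AFairpre^*(T)$ would suffice to pull $w$ in). Symmetrically, $\EFairpre(\EFairpre^*(W)) \subseteq \EFairpre^*(W)$ yields the crucial closure property of $U$: if $v \in U$ and $\delta(v,v') > 0$ then $v' \in U$. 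In particular, $U \subseteq \AFairpre^*(T)$, and every play starting in $U$ will remain in $U$ with probability one.

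For the ``if'' direction of Part (1) I fix $v \in U$ and arbitrary $\strat{1} \in \Strategies{1}$, $\strat{2} \in \FairStrats{2}$. I define a rank $\rho(v') = \min\{i \mid v' \in A_i\}$ on $\AFairpre^*(T)$, where $A_0 = T$ and $A_{i+1} = A_i \cup \AFairpre(A_i)$, and assume for contradiction that $\Prob{\strat{1}}{\strat{2}}_{\StochG,v}(\Diamond T) < 1$. Since $\Prob{\strat{1}}{\strat{2}}_{\StochG,v}(\FP^2) = 1$, the event ``the play is fair and stays in $U \setminus T$ forever'' has positive probability, so by finiteness of $U$ there is a nonempty $V_0 \subseteq U \setminus T$ that occurs as $\inf(\omega)$ with positive probability. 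I will pick $v^* \in V_0$ of minimum rank (so $\rho(v^*) \geq 1$) and derive a contradiction by cases on its type: if $v^* \in V_1$, all successors have rank $< \rho(v^*)$ and, since $v^*$ is visited infinitely often, at least one lies in $V_0$, contradicting minimality; if $v^* \in V_\Probabilistic$, a second Borel--Cantelli argument applied to the infinitely many visits to $v^*$ shows every positive-probability successor is almost surely in $\inf(\omega) = V_0$, and in particular the lower-rank one, again contradicting minimality; if $v^* \in V_2$, fairness of $\strat{2}$ directly forces $\post(v^*) \subseteq \inf(\omega) = V_0$.

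For the ``only if'' direction I suppose $v \in \EFairpre^*(W)$. Unfolding the closure yields a finite sequence $v = v_0, v_1, \ldots, v_k$ with $v_k \in W$ and $\delta(v_i, v_{i+1}) > 0$ for each $i < k$. I will then exhibit witnesses to $\Prob{\strat{1}}{\strat{2}}_{\StochG,v}(\Diamond T) < 1$ by taking $\strat{2} = \uniformstrat{2}$ (fair by a routine Borel--Cantelli argument on the uniform choices) and letting $\strat{1}$ play the prefix $v_0, \ldots, v_k$ whenever possible and, once in $W$, always choose a successor in $W$ (which exists for $V_1$ vertices by the closure property above). The prefix will then have positive probability under $(\strat{1}, \strat{2})$, and once in $W$ the play stays in $W \subseteq V \setminus T$ forever because of Player~1's choice and the closure of $W$ on $V_2 \cup V_\Probabilistic$ vertices. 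The main obstacle I expect is the ranking argument in the ``if'' direction, in particular the Borel--Cantelli step at probabilistic vertices; the remaining ingredients are bookkeeping around the fixed-point structure of $\AFairpre^*$ and $\EFairpre^*$.
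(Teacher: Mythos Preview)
Your argument is correct, but it proceeds quite differently from the paper. The paper's proof is a two-line reduction: it observes that on the MDP $\StochG^{\uniformstrat{2}}$ the operators $\EFairpre$ and $\AFairpre$ coincide with the standard $\Epre$ and $\Apre$, invokes Theorem~\ref{thm:uniform-prob} to replace the quantification over all fair $\strat{2}$ by the single uniform strategy, and then cites Lemmas~10.110 and~10.111 of~\cite{BaierK08} for the resulting MDP characterisation. You instead work directly on the game: you extract the closure properties of $W$ and $U$ from the fixed-point equations, run a rank-based minimality argument (with the case split on $V_1$/$V_\Probabilistic$/$V_2$ and the conditional Borel--Cantelli step) for the ``if'' direction, and build explicit witness strategies for the ``only if'' direction. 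Your route is longer but self-contained---it needs neither Theorem~\ref{thm:uniform-prob} (and hence not Lemma~\ref{lemma:memoryless-strat}) nor the external MDP lemmas---and it makes transparent exactly how fairness at $V_2$ vertices substitutes for the probabilistic branching in the standard MDP argument. The paper's route is shorter and more modular precisely because that work has already been absorbed into Theorem~\ref{thm:uniform-prob} and the cited lemmas. One small point worth tightening in your ``only if'' direction: if the path $v_0,\dots,v_k$ revisits a $V_1$ vertex, the instruction ``play the prefix'' may be ambiguous for a memoryless choice; make $\strat{1}$ explicitly history-dependent on the prefix (or truncate to the first entry into $W$) to avoid this.
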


It is direct to see that $\EFairpre$ and $\AFairpre$ can be computed using standard graph algorithms which are polynomial in time w.r.t.\ the size of the game, thus we obtain the following theorem.

\begin{theorem}\label{th:fair-is-poly}
  Checking whether the stochastic game $\StochG$ is stopping under
  fairness or not is in $O(\poly(\size(\StochG)))$.
\end{theorem}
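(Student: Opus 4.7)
The plan is to use Theorem~\ref{thm:stopping-algorithm}(2) as a reduction: deciding stopping under fairness amounts to computing the set $\EFairpre^*(V \setminus \AFairpre^*(T))$ and checking whether it is empty. So it suffices to argue that both transitive closures can be evaluated in polynomial time in $\size(\StochG)$, and that emptiness of the resulting set is checked trivially.

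First I would observe that both $\EFairpre$ and $\AFairpre$ are \emph{monotone} operators on the finite lattice $2^V$: by inspecting their definitions, if $C \subseteq C'$ then $\EFairpre(C) \subseteq \EFairpre(C')$ and $\AFairpre(C) \subseteq \AFairpre(C')$. Hence their transitive closures can be computed by standard Kleene iteration, starting from the seed set and iterating $X_{i+1} = X_i \cup \AFairpre(X_i)$ (respectively for $\EFairpre$) until $X_{i+1}=X_i$. Since each non-terminal iteration strictly enlarges $X_i$, the fixed point is reached in at most $|V|$ iterations.

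Next I would bound the cost of a single iteration. Given a representation of $\delta$ whose size is linear in $\size(\StochG)$, deciding $v \in \EFairpre(C)$ (``some successor of $v$ lies in $C$'') and $v \in \AFairpre(C)$ (``$v$ is probabilistic or a Player 2 vertex with some successor in $C$, or a Player 1 vertex with all successors in $C$'') both reduce to scanning the successor list of $v$ once. Thus each iteration is $O(|V| + |E|) \subseteq O(\poly(\size(\StochG)))$, and computing $\AFairpre^*(T)$ or $\EFairpre^*(\cdot)$ costs $O(|V|\cdot(|V|+|E|))$, which is polynomial.

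Finally, the outer computation makes one call to $\AFairpre^*$ starting from $T$, one set complement ($V\setminus \AFairpre^*(T)$, trivially linear), one call to $\EFairpre^*$ on that set, and an $O(|V|)$ emptiness test; by Theorem~\ref{thm:stopping-algorithm}(2) the answer to this test is exactly the answer to the stopping-under-fairness question. The only subtlety worth spelling out is the complexity bookkeeping (i.e., that $|V|$ and $|E|$ are both bounded by $\size(\StochG)$ and that the probability values $\delta(v,v')$ need only be tested for being zero or non-zero in these operators, so their encoding length plays no role in the graph computation). No heavy step is expected: the entire argument is a straightforward consequence of the fixed-point characterization already established in Theorem~\ref{thm:stopping-algorithm}.
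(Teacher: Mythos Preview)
Your proposal is correct and follows exactly the approach the paper takes: the paper simply remarks that $\EFairpre$ and $\AFairpre$ can be computed with standard polynomial-time graph algorithms and invokes Theorem~\ref{thm:stopping-algorithm}(2). Your write-up merely makes this explicit by spelling out the Kleene iteration and its cost, which is a faithful (and more detailed) rendering of the same argument.
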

%% \begin{theorem}\label{th:fair-is-poly}
%%   Given a stochastic game $\StochG = (V, (V_1, V_2, V_\Probabilistic), \delta)$, it can be checked in time $O(\poly(\size(\StochG)))$ whether $\StochG$ is almost-sure stopping under fairness or not.
%% \end{theorem}

%	One main property in game theory is determinacy. Intuitively, if a game is determined each vertex has a value for the players. Another intuitive reading of determinacy is that the prior knowledge of the opponent's strategy gives us no advantage. 
	
%\subsection{Determinacy of Stopping Games under Fairness}
\paragraph{Determinacy of Stopping Games under Fairness.}
%	As briefly mentioned in the introduction, the determinacy of a large class of games was established by  Martin in \cite{Martin98}, in particular, the determinacy of stochastic games with payoff functions that are Borel and bounded follows from Martin's results. The objective  function $\Rewards$ is unbounded, so Martin's theorems cannot be applied to these kinds of functions. In \cite{FilarV96}, the determinacy of stopping stochastic games with total rewards objective functions is proven for stopping games, but note that here we restrict the class of strategies that Player $2$ can use, and thus the value of the game  may change. In the following we prove that this restriction does not affect the determinacy of the  games.

%As briefly mentioned in the introduction, t
The determinacy of stochastic games with Borel and bounded payoff functions follows from Martin's results~\cite{Martin98}.  The  function $\Rewards$ is unbounded, so Martin's theorems do not apply to it. In \cite{FilarV96}, the determinacy of a general class of stopping stochastic games (called \emph{transient}) with total rewards is proven.  However, note that we restrict Player $2$ to only play with fair strategies and hence, the last result does not apply either.
In \cite{PatekBertsekas99} the authors classify Player $2$'s strategies into proper (those ensuring termination) and improper (those prolonging the game indefinitely). For proving determinacy, the authors assume that the value of the game for Player $2$'s  improper strategies is $\infty$.  It is worth noting that, for proving the results below, we do not make any assumption about unfair strategies.
%\remarkPC{Quizas estos comentarios puedan ir en related work, sino podríamos agregar algo the shortest stochastic games.}\remarkPRD{A mi me parece que est\'a bien que se diga aqu\'i aunque se repita en el related work}
%
In the following we prove that the restriction to fair plays does not affect the determinacy of the  games.

The notion of \emph{semi-Markov} strategies \cite{FilarV96} will be important for the theorems in this section. Intuitively, a semi-Markov strategy only takes into account the length of a play, the initial state, and the current state to select the next step in the play.  
%In some sense, it possesses a very limited memory, a counter.
        
\begin{definition} Let $\StochG = (V, (V_1, V_2, V_\Probabilistic), \delta)$ be a stochastic game. A strategy $\strat{i} \in \Strategies{i}$ is called semi-Markov if: $\strat{i}(v \hat{\omega}v') = \strat{i}(v \hat{\omega}'v')$, for every $v \in V$ and $\hat{\omega}, \hat{\omega}' \in V^*$ such that $|\hat{\omega}|=|\hat{\omega}'|$. 
\end{definition}

The set of all semi-Markov (resp. semi-Markov fair) strategies for player $i$ is denoted $\SemiMarkovStrats{i}$ (resp. $\SemiMarkovFairStrats{i}$). The importance of semi-Markov strategies lies in the fact that, when Player $2$ plays a semi-Markov strategy,   any Player $1$'s strategy can be mimicked by a semi-Markov strategy as stated in the following theorem.

\begin{theorem}\label{th:semmimarkov2}
  Let $\StochG$ be a stopping under fairness stochastic game, and let
  $\strat{2} \in \SemiMarkovFairStrats{2}$ be a fair and semi-Markov
  strategy. Then, for any $\strat{1} \in \Strategies{1}$, there is a
  semi-Markov strategy $\starredstrat{1} \in \SemiMarkovStrats{1}$
  such that
  $\Expect{\strat{1}}{\strat{2}}_{\StochG, v}[\Rewards] =
  \Expect{\starredstrat{1}}{\strat{2}}_{\StochG, v}[\Rewards]$.
\end{theorem}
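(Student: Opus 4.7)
The plan is to construct, for each initial vertex $v_0$, a semi-Markov strategy $\starredstrat{1}$ that reproduces, step by step, the same marginal distribution on states as $\strat{1}$ does against $\strat{2}$. Since $\Rewards$ is a sum of state rewards along the path and the rewards are non-negative, equality of the marginal distributions at every time step yields equality of expectations by Tonelli.

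First, I would fix $v_0 \in V$ and consider the induced Markov chain $\StochG^{\strat{1},\strat{2}}_{v_0}$. For each $n \geq 0$ and each $v \in V_1$ such that $\Prob{\strat{1}}{\strat{2}}_{\StochG,v_0}(X_n = v) > 0$, define
\[
  \starredstrat{1}(v_0 \hat{\omega} v)(v') \;=\; \Prob{\strat{1}}{\strat{2}}_{\StochG,v_0}\bigl(X_{n+1} = v' \,\big|\, X_n = v\bigr)
\]
for every $\hat{\omega} \in V^{*}$ with $|\hat{\omega}| = n$ (and arbitrarily, e.g.\ Dirac on any successor, when the conditioning event has probability zero or the history does not start with $v_0$). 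This assignment depends only on $v_0$, $v$, and $n = |\hat{\omega}|$, hence $\starredstrat{1} \in \SemiMarkovStrats{1}$.

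Next, I would prove by induction on $n$ that for every $v \in V$,
\[
  \Prob{\strat{1}}{\strat{2}}_{\StochG,v_0}(X_n = v) \;=\; \Prob{\starredstrat{1}}{\strat{2}}_{\StochG,v_0}(X_n = v).
\]
The base case $n=0$ is trivial. For the inductive step, I would condition on $X_n$ and use the transition kernel at step $n$ in each chain. This is where semi-Markovness of $\strat{2}$ is essential: the kernel in $\StochG^{\strat{1},\strat{2}}_{v_0}$ at time $n$, from a state in $V_2$, coincides with the kernel in $\StochG^{\starredstrat{1},\strat{2}}_{v_0}$, since $\strat{2}$'s choice depends only on $(v_0, v, n)$. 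For states in $V_\Probabilistic$ the kernel is fixed. For states in $V_1$, the kernel in $\StochG^{\strat{1},\strat{2}}_{v_0}$ at time $n$ from $v$ equals, by the law of total probability, exactly $\starredstrat{1}(v_0\hat{\omega} v)$ for any $\hat{\omega}$ of length $n$, by the very definition of $\starredstrat{1}$. Combining these with the inductive hypothesis yields the claim.

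Finally, since $\reward \geq 0$, Tonelli gives
\[
  \Expect{\strat{1}}{\strat{2}}_{\StochG,v_0}[\Rewards] \;=\; \sum_{n=0}^{\infty} \sum_{v \in V} \reward(v)\, \Prob{\strat{1}}{\strat{2}}_{\StochG,v_0}(X_n = v),
\]
and likewise for $\starredstrat{1}$, so the two expectations coincide (both possibly $\infty$, though the stopping-under-fairness hypothesis together with $\strat{2} \in \SemiMarkovFairStrats{2}$ gives termination with probability one under $\strat{2}$, which can be used to argue finiteness if desired). It remains to show that $\starredstrat{1}$ can be chosen uniformly in $v_0$; this is immediate since the definition already allows dependence on the initial state. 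The main obstacle I anticipate is the bookkeeping in the inductive step to justify that $\starredstrat{1}$ really induces the same one-step transition at time $n$ as the non-Markov $\strat{1}$ does on average; the trick is that we are matching marginals, not joint laws of the trajectory, and this only requires the semi-Markov character of $\strat{2}$ to make Player 2's conditional kernel at step $n$ depend only on the current state.
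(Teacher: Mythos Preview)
Your proposal is correct and essentially identical to the paper's proof: you define $\starredstrat{1}$ via the conditional one-step distribution $\Prob{\strat{1}}{\strat{2}}_{\StochG,v_0}(X_{n+1}=v'\mid X_n=v)$, prove by induction on $n$ (with the same case split on $V_1$, $V_2$, $V_\Probabilistic$, using semi-Markovness of $\strat{2}$ for the $V_2$ case) that the time-$n$ marginals agree, and then sum rewards over time. The paper writes the same events as $\Diamond^k v'$ and carries out the case analysis more explicitly, but the construction, the inductive claim, and the final summation are the same.
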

\begin{proof}[Sketch]
  The proof follows the arguments of Theorem 4.2.7 in \cite{FilarV96}
  adapted to our setting.
  %We sketch the proof, the details can be found in the Appendix.
	
  Consider the event $\Diamond^{k} v' = \{ \omega \in
  \GamePaths_\StochG \mid \omega_k = v'\}$, for $k\geq 0$. That is,
  the set of runs in which $v'$ is reached after exactly $k$ steps.
  We define $\starredstrat{1}$ as follows.
  \[
  \starredstrat{1}(\hat{\omega}v')(v'') =  \Prob{\strat{1}}{\strat{2}}_{\StochG,v}(\Diamond^{k+1} v'' \mid \Diamond^k v') 
  \]
  for every $\hat{\omega}$ such that $\Prob{\strat{1}}{\strat{2}}_{\StochG,v}(\hat{\omega}v') > 0$ and $|\hat{\omega}v'| = k$.  For those $\hat{\omega}$'s with $\Prob{\strat{1}}{\strat{2}}_{\StochG,v}(\hat{\omega}v') = 0$ we define $\starredstrat{1}(\hat{\omega}v')$ to be any arbitrary distribution.   We prove that this strategy satisfies the conditions of the theorem.
  For this, we prove that $\Prob{\strat{1}}{\strat{2}}_{\StochG,v}(\Diamond^{k} v') = \Prob{\starredstrat{1}}{\strat{2}}_{\StochG,v}(\Diamond^{k} v')$ by induction on $k$.  
% \textcolor{red}{   \remarkPC{La parte roja  esta mal y no es necesaria}
%  Using this result, we prove that $\Prob{\strat{1}}{\strat{2}}_{\StochG,v}(\Diamond^{k+1} v' \mid \Diamond^k v'') = \Prob{\starredstrat{1}}{\strat{2}}_{\StochG,v}(\Diamond^{k+1} v' \mid \Diamond^k v'')$.
  %
%  Now, for every $v_0 \dots v_k$ with $v_0=v$,
  %
 % $\Prob{\strat{1}}{\strat{2}}_{\StochG,v}(v_0 \dots v_k)
 % = {\prod^{k-1}_{i=0} \Prob{\strat{1}}{\strat{2}}_{\StochG,v}(\Diamond^{i+1} v_{i+1} \mid \Diamond^i v_i)}
 % = \prod^{k-1}_{i=0} \Prob{\starredstrat{1}}{\strat{2}}_{\StochG,v}(\Diamond^{i+1} v_{i+1} \mid \Diamond^i v_i)
 % = \Prob{\starredstrat{1}}{\strat{2}}_{\StochG,v}(v_0 \dots v_k)$,  
 % and, for the case in which $v_0\neq v$,
 % $\Prob{\strat{1}}{\strat{2}}_{\StochG,v}(v_0 \dots v_k) = 0
 % = \Prob{\starredstrat{1}}{\strat{2}}_{\StochG,v}(v_0 \dots v_k)$.
  %
 % Therefore, for every $\hat{\omega}\in V^*$,
  %$\Prob{\strat{1}}{\strat{2}}_{\StochG,v}(\hat{\omega}) 
 % = \Prob{\starredstrat{1}}{\strat{2}}_{\StochG,v}(\hat{\omega})$ and
  %from this last equality,
 % $\Expect{\strat{1}}{\strat{2}}_{\StochG,v}[\Rewards] = \Expect{\starredstrat{1}}{\strat{2}}_{\StochG,v}[\Rewards]$ follows.
  %}
  Hence:
  \begin{align*}
  \Expect{\strat{1}}{\strat{2}}_{\StochG,v}[\Rewards]   &  = \sum^\infty_{N=0} \sum_{\hat{\omega} \in V^{N+1}} \Prob{\strat{1}}{\strat{2}}_{\StochG,v}(\hat{\omega})\reward(\hat{\omega}_N)\\
  & = \sum^\infty_{N=0} \sum_{v \in V} \Prob{\strat{1}}{\strat{2}}_{\StochG,v}(\Diamond^N v)\reward(v) \\
  & =  \sum^\infty_{N=0} \sum_{v \in V} \Prob{\starredstrat{1}}{\strat{2}}_{\StochG,v}(\Diamond^N v)\reward(v) \\
  & = \Expect{\starredstrat{1}}{\strat{2}}_{\StochG,v}[\Rewards] 
  \end{align*}
  %\remarkPC{Cambiar esto en el appendix}
\qed
\end{proof}
%
%%First, we prove  that, if Player 1 plays in a memoryless way,  Player 2 can restrict herself to memoryless fair strategies, without affecting the result of the game.

The results below relate the games presented in this paper with
transient games \cite{FilarV96,Altman99}.  Roughly speaking, the
following property states that, in games that stop under fairness, the
expected time that both players spend in non-terminal states is
finite, given that the two players play memoryless strategies.

\begin{lemma}\label{lemma:bound-prob-stationary-strats}
  Let $\StochG = (V, (V_1, V_2, V_\Probabilistic), \delta)$ be a
  stochastic game that is stopping under fairness with $T$ being the
  set of terminal states.
  Let $\strat{1} \in \MemorylessStrats{1}$ be a memoryless strategy
  for Player~1 and $\strat{2} \in \MemorylessFairStrats{2}$ a
  memoryless fair strategy for Player~2.  Then
  %\[
  $\sum^\infty_{N=1} \sum_{\hat{\omega} \in (V \setminus T)^{N}} \Prob{\strat{1}}{\strat{2}}_{\StochG,v}(\hat{\omega}) < \infty$.
  %\]
\end{lemma}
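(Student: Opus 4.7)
The plan is to exploit that fixing memoryless strategies $\strat{1} \in \MemorylessStrats{1}$ and $\strat{2} \in \MemorylessFairStrats{2}$ collapses $\StochG$ into a finite Markov chain $\StochG^{\strat{1},\strat{2}}_v$ on the state space $V$, and then to apply a standard geometric-decay argument for finite absorbing chains. Let me write $p_N(v) = \sum_{\hat{\omega} \in (V \setminus T)^N} \Prob{\strat{1}}{\strat{2}}_{\StochG,v}(\hat{\omega})$; this is exactly the probability that the chain started at $v$ remains in non-terminal states during its first $N$ steps. The claim we need is that $\sum_{N \geq 1} p_N(v) < \infty$.

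First, I would invoke Definition~\ref{def:stopping-under-fairness}: since $\StochG$ is stopping under fairness and $\strat{2} \in \FairStrats{2}$, we have $\Prob{\strat{1}}{\strat{2}}_{\StochG,v}(\Diamond T) = 1$ for every $v \in V$. Equivalently, the events $\{\omega \mid \omega_0\ldots\omega_{N-1} \in (V\setminus T)^N\}$ form a decreasing sequence whose intersection has probability zero, so $p_N(v) \to 0$ as $N \to \infty$ for every $v \in V$.

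Next, using that $V$ is finite, I would derive a \emph{uniform} geometric bound. Since $p_N(v) \to 0$ pointwise on the finite set $V$, there exists $N_0$ such that $p_{N_0}(v) \leq \alpha$ for all $v \in V$, for some constant $\alpha < 1$. By the Markov property of the chain $\StochG^{\strat{1},\strat{2}}$ (which holds because the strategies are memoryless), for any $k \geq 1$,
\[
p_{kN_0}(v) \ = \ \sum_{\hat{\omega} \in (V \setminus T)^{N_0}} \Prob{\strat{1}}{\strat{2}}_{\StochG,v}(\hat{\omega}) \cdot p_{(k-1)N_0}(\hat{\omega}_{N_0-1}) \ \leq \ p_{N_0}(v) \cdot \max_{v' \in V} p_{(k-1)N_0}(v'),
\]
and an easy induction yields $p_{kN_0}(v) \leq \alpha^k$. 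Since the sequence $p_N(v)$ is monotonically non-increasing in $N$, we bound the tail as $\sum_{N=kN_0}^{(k+1)N_0-1} p_N(v) \leq N_0 \alpha^k$, so $\sum_{N \geq 1} p_N(v) \leq N_0 \sum_{k \geq 0} \alpha^k = N_0/(1-\alpha) < \infty$.

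The only delicate step is justifying that $p_N$ tends to $0$ \emph{pointwise} from every state $v$, which relies on the fact that the hypothesis of stopping under fairness quantifies over every starting vertex; so the uniform bound argument goes through without needing to track which strategy was originally chosen at the ``real'' initial vertex. The main potential obstacle is being careful that $\strat{2}$, being almost-surely fair for the game started at $v$, remains fair (in the sense needed to apply stopping under fairness) at every intermediate state reached during the play; but this is automatic from the definition of $\FairStrats{2}$, which requires $\Prob{\strat{1}}{\strat{2}}_{\StochG,v'}(\FP^2) = 1$ for \emph{every} $v' \in V$, so the reasoning above applies uniformly at each vertex visited.
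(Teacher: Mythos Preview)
Your proof is correct and follows essentially the same approach as the paper: both fix the memoryless strategies to obtain a finite absorbing Markov chain, then derive a uniform geometric bound on the probability $p_N(v)$ of avoiding $T$ for $N$ steps and sum the resulting geometric series. The only cosmetic difference is that the paper obtains its contraction constant explicitly as $1-\lambda^{|V|}$ (with $\lambda$ the minimum positive transition probability), whereas you extract $N_0$ and $\alpha$ abstractly from pointwise convergence $p_N(v)\to 0$ on the finite set $V$; also note that your displayed Markov-property identity has a harmless off-by-one (the right-hand side equals $p_{kN_0-1}(v)$), but since $p_{kN_0}(v)\le p_{kN_0-1}(v)$ the inequality you need still follows.
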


This result can be extended to all the strategies of Player~1. The
main idea behind the proof is to fix a stationary fair strategy for
Player~2 (e.g., a uniform distributed strategy).  This yields an MDP
that stops for every strategy of Player~1, and furthermore, it can be
seen as a one-player \emph{transient game} (as defined in
\cite{FilarV96}).  Hence, the result follows from Lemma
\ref{lemma:bound-prob-stationary-strats} and Theorem 4.2.12 in
\cite{FilarV96}.

\begin{theorem}\label{th:games-are-bounded}
  Let $\StochG$ be a stochastic game that is stopping under fairness
  and let $T$ be the set of terminal states.  In addition, let
  $\strat{1} \in \Strategies{1}$ be a strategy for Player $1$ and
  $\strat{2} \in \MemorylessFairStrats{2}$ be a fair and memoryless
  strategy for Player $2$.  Then
  %\[
  $\sum^\infty_{N=0} \sum_{\hat{\omega} \in v(V \setminus T)^N} \Prob{\strat{1}}{\strat{2}}_{\StochG,v}(\hat{\omega}) < \infty$.
  %\]
\end{theorem}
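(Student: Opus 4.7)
The plan is to reduce the two-player setting to a one-player MDP by fixing Player~2's memoryless fair strategy, and then to invoke results from the theory of transient Markov decision processes. First, since $\strat{2} \in \MemorylessFairStrats{2}$ is memoryless, replacing Player~2 by $\strat{2}$ in $\StochG$ yields an MDP $\StochG^{\strat{2}}$ in which the states that were in $V_2$ behave as probabilistic states obeying the distribution $\strat{2}(v)$. Player~1 strategies in $\StochG^{\strat{2}}$ are in one-to-one correspondence with Player~1 strategies in $\StochG$, and the induced path probability measures agree: $\MDPProb{\strat{1}}_{\StochG^{\strat{2}},v}(\cdot) = \Prob{\strat{1}}{\strat{2}}_{\StochG,v}(\cdot)$.

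Second, I would show that $\StochG^{\strat{2}}$ is a stopping MDP. Since $\strat{2}$ is almost-sure fair and $\StochG$ is stopping under fairness, Definition~\ref{def:stopping-under-fairness} gives $\Prob{\strat{1}}{\strat{2}}_{\StochG,v}(\Diamond T) = 1$ for every $\strat{1} \in \Strategies{1}$ and every $v \in V$, and therefore $\MDPProb{\strat{1}}_{\StochG^{\strat{2}},v}(\Diamond T) = 1$ as well. Lemma~\ref{lemma:bound-prob-stationary-strats} moreover guarantees that, for every memoryless Player~1 strategy in $\StochG^{\strat{2}}$, the expected sojourn time in $V\setminus T$ is finite. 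These two facts place $\StochG^{\strat{2}}$ precisely in the class of \emph{transient} MDPs studied in Chapter~4 of~\cite{FilarV96}.

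Third, I would invoke Theorem~4.2.12 of~\cite{FilarV96}, which extends the finiteness of the expected sojourn time from stationary strategies to arbitrary history-dependent strategies in a transient MDP. Applied to $\StochG^{\strat{2}}$, this yields
\[
\sum_{N=0}^{\infty} \sum_{\hat{\omega} \in v(V\setminus T)^N} \MDPProb{\strat{1}}_{\StochG^{\strat{2}},v}(\hat{\omega}) \ < \ \infty
\]
for every $\strat{1} \in \Strategies{1}$. Using the equality of measures noted in the first step, the left-hand side coincides with the expression in the statement, completing the proof.

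The main obstacle is confirming that our stopping-under-fairness MDP matches the exact hypotheses of Filar--Vrieze's Theorem~4.2.12 -- in particular that almost-sure absorption under every strategy in a finite-state MDP is enough to invoke their transience conclusion for general (non-stationary) strategies. This can be cross-checked by noting that, in a finite-state MDP which stops almost surely under every strategy, a standard compactness argument provides a uniform $N$ and $p>0$ with $\MDPProb{\strat{1}}_{\StochG^{\strat{2}},v}(\Diamond^{\leq N} T) \geq p$ for all $\strat{1}$ and $v$, so the tail of the sojourn sum decays geometrically and finiteness holds unconditionally on the strategy class.
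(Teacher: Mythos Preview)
Your proposal is correct and matches the paper's approach: the main text explicitly sketches this very reduction (fix $\strat{2}$ to obtain a one-player transient MDP, then combine Lemma~\ref{lemma:bound-prob-stationary-strats} with Theorem~4.2.12 of~\cite{FilarV96}). The appendix proof opts to unfold the Filar--Vrieze argument directly rather than invoke it as a black box, but the underlying idea is the same.
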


%\remarkPRD{@Pablo: Revis\'a bien lo que escrib\'i en este p\'arrafo!!}
Using the previous theorem, some fairly simple calculations lead to
the fact that the value of the total accumulated reward payoff game is
well-defined for any strategy of the players.  This is stated in the
next theorem.

\begin{theorem}\label{th:memoryless-strat-p2-bounded-expectation}
  Let $\StochG= (V, (V_1, V_2, V_\Probabilistic), \delta)$ be a
  stochastic game that is stopping under fairness,
  $\strat{1} \in \Strategies{1}$ a strategy for Player~1, and
  $\strat{2} \in \FairStrats{2}$ a fair strategy for Player~2.
  Then, for all $v \in V$,
  $\Expect{\strat{1}}{\strat{2}}_{\StochG,v}[\Rewards] < \infty$.
\end{theorem}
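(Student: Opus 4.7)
The plan is to reduce the claim to the summability bound of Theorem~\ref{th:games-are-bounded}. Since $\reward$ is non-negative, Tonelli's theorem lets me interchange the sum and the expectation,
\[
\Expect{\strat{1}}{\strat{2}}_{\StochG,v}[\Rewards] \ = \ \sum_{N=0}^{\infty} \Expect{\strat{1}}{\strat{2}}_{\StochG,v}[\reward(\omega_N)],
\]
and since $V$ is finite the constant $M := \max_{v'\in V} \reward(v')$ is finite. Because $\reward$ vanishes on $T$, the $N$-th summand is bounded above by $M\cdot\Prob{\strat{1}}{\strat{2}}_{\StochG,v}(\omega_N\in V\setminus T)$.

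Now, every terminal vertex is absorbing (by convention $\delta(u,u)=1$ for $u\in T$), so---taking without loss of generality $v\in V\setminus T$, since the case $v\in T$ gives $\Rewards\equiv 0$ trivially---the event $\omega_N\in V\setminus T$ coincides with the event that the entire prefix $\omega_0\omega_1\cdots\omega_N$ lies in $V\setminus T$. Consequently
\[
\Prob{\strat{1}}{\strat{2}}_{\StochG,v}(\omega_N\in V\setminus T) \ = \ \sum_{\hat{\omega}\in v(V\setminus T)^N} \Prob{\strat{1}}{\strat{2}}_{\StochG,v}(\hat{\omega}),
\]
and combining with the previous display yields
\[
\Expect{\strat{1}}{\strat{2}}_{\StochG,v}[\Rewards] \ \leq \ M \sum_{N=0}^{\infty} \sum_{\hat{\omega}\in v(V\setminus T)^N} \Prob{\strat{1}}{\strat{2}}_{\StochG,v}(\hat{\omega}).
\]
It then suffices to invoke Theorem~\ref{th:games-are-bounded} to conclude that the right-hand side is finite.

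The hard part is matching the hypotheses at this last step: Theorem~\ref{th:games-are-bounded} is stated for $\strat{2}\in\MemorylessFairStrats{2}$, while here $\strat{2}$ is only assumed to be an arbitrary fair strategy. To bridge this gap I would first use Theorem~\ref{th:semmimarkov2} to replace $\strat{1}$ by a semi-Markov witness with identical expected reward, and then combine the stopping-under-fairness hypothesis with a transient-MDP argument in the spirit of Theorem~4.2.12 of~\cite{FilarV96}, obtaining the required summability for general fair $\strat{2}$. The remaining ingredients---Tonelli, bounding by $M$, and the absorbing-$T$ identity---are routine.
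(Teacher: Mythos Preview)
Your reduction is essentially identical to the paper's proof: set $M=\max_{v'\in V}\reward(v')$, rewrite the expectation as a double sum over finite prefixes, discard the prefixes that have already entered $T$ (since $\reward$ vanishes there), bound what remains by $M$ times the total mass of non-terminal prefixes, and invoke Theorem~\ref{th:games-are-bounded}. The paper does exactly this, without the Tonelli language.

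You are right to flag the hypothesis mismatch at the last step: Theorem~\ref{th:games-are-bounded} assumes $\strat{2}\in\MemorylessFairStrats{2}$, whereas the present statement has only $\strat{2}\in\FairStrats{2}$. The paper's own proof simply cites Theorem~\ref{th:games-are-bounded} here and does not address the discrepancy, so you have in fact been more careful than the source. However, your proposed bridge does not work either. Theorem~\ref{th:semmimarkov2} requires $\strat{2}$ to be \emph{semi-Markov} fair, so it cannot be applied when $\strat{2}$ is a general fair strategy; and the transient-MDP argument behind Theorem~4.2.12 of~\cite{FilarV96} (and behind the proof of Theorem~\ref{th:games-are-bounded}) again depends on fixing a memoryless $\strat{2}$ to obtain a finite MDP.

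The gap is not repairable, because the statement as written is false. Take $V=\{v,t\}$ with $V_1=V_\Probabilistic=\emptyset$, $V_2=\{v\}$, $t$ terminal, $\delta(v,v)=\delta(v,t)=1$, and $\reward(v)=1$. This game is stopping under fairness. Define $\strat{2}(v^{n+1})(t)=\frac{1}{n+2}$. Then, for the (vacuous) $\strat{1}$, one has $\Prob{\strat{1}}{\strat{2}}_{\StochG,v}(\omega_n=v)=\prod_{k=0}^{n-1}\frac{k+1}{k+2}=\frac{1}{n+1}$, so $t$ is reached almost surely and hence $\strat{2}\in\FairStrats{2}$; yet $\Expect{\strat{1}}{\strat{2}}_{\StochG,v}[\Rewards]=\sum_{n\geq 0}\frac{1}{n+1}=\infty$. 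The argument (yours and the paper's) is valid only under the stronger hypothesis $\strat{2}\in\MemorylessFairStrats{2}$, which is all that is actually used downstream: Corollary~\ref{coro:inf-for-strat2-is-bounded} applies it to $\uniformstrat{2}$, and the proof of Theorem~\ref{th:semimarkov-to-memoryless} applies it to a fixed memoryless fair $\strat{2}$.
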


As a consequence of this theorem, the value of the game is bounded
from above for any Player $1$'s strategy.
%\remarkPC{Sacar este corolario y usar el Teorema 6}
\begin{corollary}\label{coro:inf-for-strat2-is-bounded}
  Let $\StochG= (V, (V_1, V_2, V_\Probabilistic), \delta)$ be a
  stochastic game that is stopping under fairness and
  $\strat{1} \in \Strategies{1}$ a strategy for Player $1$.
  Then, for every vertex $v \in V$,
  $\inf_{\strat{2} \in \FairStrats{2}} \Expect{\strat{1}}{\strat{2}}_{\StochG,v}[\Rewards] < \infty$.
\end{corollary}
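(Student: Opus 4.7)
The plan is to obtain the finite bound on the infimum by exhibiting a single concrete fair strategy for Player~2 whose associated expected reward is already finite, and then invoke the trivial fact that the infimum is bounded above by any individual value. The natural candidate is the memoryless uniform strategy $\uniformstrat{2}$ already used in Theorem~\ref{thm:uniform-prob}, namely $\uniformstrat{2}(v)(v') = \frac{1}{\#\post(v)}$ for $v \in V_2$ and $v' \in \post(v)$.

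The first step is to verify that $\uniformstrat{2} \in \FairStrats{2}$. Fix any $\strat{1} \in \Strategies{1}$ and any initial vertex $v \in V$. I would argue that under $\Prob{\strat{1}}{\uniformstrat{2}}_{\StochG,v}$, for every $v' \in V_2$ and every $v'' \in \post(v')$, the event $\{v' \in \inf(\omega)\} \setminus \{v'' \in \inf(\omega)\}$ has probability zero. Because each visit to $v'$ picks $v''$ with probability at least $1/\#\post(v') \geq 1/|V|$ independently of the history, a standard second Borel--Cantelli style argument (e.g.\ Lévy's extension of Borel--Cantelli for adapted events) gives that $v'$ visited infinitely often implies $v''$ visited infinitely often almost surely. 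Taking the intersection over the finitely many pairs $(v', v'')$ with $v' \in V_2$ and $v'' \in \post(v')$ yields $\Prob{\strat{1}}{\uniformstrat{2}}_{\StochG,v}(\FP^2) = 1$, so $\uniformstrat{2}$ is indeed fair.

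Once $\uniformstrat{2}$ has been placed in $\FairStrats{2}$, Theorem~\ref{th:memoryless-strat-p2-bounded-expectation} applies to the pair $(\strat{1}, \uniformstrat{2})$ (the hypothesis required only that $\strat{2}$ be fair, and here it is additionally memoryless). This gives $\Expect{\strat{1}}{\uniformstrat{2}}_{\StochG,v}[\Rewards] < \infty$. Since taking an infimum over a set containing $\uniformstrat{2}$ cannot increase this value,
\[
\inf_{\strat{2} \in \FairStrats{2}} \Expect{\strat{1}}{\strat{2}}_{\StochG,v}[\Rewards] \ \leq \ \Expect{\strat{1}}{\uniformstrat{2}}_{\StochG,v}[\Rewards] \ < \ \infty,
\]
which is the desired conclusion.

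The only non-routine step is the fairness of $\uniformstrat{2}$; the rest is bookkeeping. If one prefers to avoid the explicit Borel--Cantelli argument, the same conclusion follows from the fact that in the finite-state Markov chain $\StochG^{\strat{1}, \uniformstrat{2}}$ all bottom strongly connected components are recurrent, so every state visited infinitely often communicates with, and hence is followed infinitely often by, each of its successors almost surely. Either route establishes $\uniformstrat{2} \in \FairStrats{2}$, after which the corollary is immediate from Theorem~\ref{th:memoryless-strat-p2-bounded-expectation}.
\qed
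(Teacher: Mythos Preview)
Your proposal is correct and follows essentially the same approach as the paper: both exhibit the uniform strategy $\uniformstrat{2}$ as a concrete fair strategy, apply Theorem~\ref{th:memoryless-strat-p2-bounded-expectation} to it, and conclude by the trivial infimum bound. You are in fact more careful than the paper, which simply takes the fairness of $\uniformstrat{2}$ for granted (it was already asserted in the proof of Theorem~\ref{thm:uniform-prob}); one small caveat is that your alternative BSCC route assumes $\StochG^{\strat{1},\uniformstrat{2}}$ is a finite-state Markov chain, which need not hold when $\strat{1}$ has memory, but your primary Borel--Cantelli argument does not depend on this and stands on its own.
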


\begin{proof}
  Consider any fair strategy $\strat{2}' \in \FairStrats{2}$.
  Then,
  $\inf_{\strat{2} \in \FairStrats{2}} \Expect{\strat{1}}{\strat{2}}_{\StochG,v}[\Rewards]
  \leq \Expect{\strat{1}}{\uniformstrat{2}}_{\StochG,v}[\Rewards] < \infty$,
  where the last inequality follows from
  Theorem~\ref{th:memoryless-strat-p2-bounded-expectation}.
  \qed
  %% Consider the strategy $\uniformstrat{2} \in \FairStrats{2}$ defined
  %% as in Theorem~\ref{thm:uniform-prob}.
  %% % defined as  $\uniformstrat{2}(\hat{\omega}v)(v') = \frac{1}{\# \post(v)}$ for every $v \in V_2$ and $v' \in \post(v)$.
  %% Then, we have that
  %% $\inf_{\strat{2} \in \FairStrats{2}} \Expect{\strat{1}}{\strat{2}}_{\StochG,v}[\Rewards]
  %% \leq \Expect{\strat{1}}{\uniformstrat{2}}_{\StochG,v}[\Rewards] < \infty$,
  %% where the last inequality follows by Theorem
  %% \ref{th:memoryless-strat-p2-bounded-expectation}.
  %% \qed
\end{proof}

The following theorem plays an important role in the rest of the
paper.  Intuitively, it states that, when Player~1 plays with a
memoryless strategy, Player~2 has an optimal deterministic
memoryless fair strategy.

\begin{theorem}\label{th:infima-in-dmf}%
  Let $\StochG= (V, (V_1, V_2, V_\Probabilistic), \delta)$ be a
  stochastic game that is stopping under fairness and let
  $\strat{1} \in \MemorylessStrats{1}$ be a memoryless strategy for
  Player~1.  There exists a deterministic memoryless fair strategy
  $\starredstrat{2}\in\MemorylessFairStrats{2}$ such that
  %% \[
  %%   \inf_{\strat{2} \in \FairStrats{2}} \Expect{\strat{1}}{\strat{2}}_{\StochG,v}[\Rewards]
  %%   =
  %%   \Expect{\strat{1}}{\starredstrat{2}}_{\StochG,v}[\Rewards].
  %% \]  
  $\inf_{\strat{2} \in \FairStrats{2}} \Expect{\strat{1}}{\strat{2}}_{\StochG,v}[\Rewards]
   =
   \Expect{\strat{1}}{\starredstrat{2}}_{\StochG,v}[\Rewards]$, for every $v \in V$.
\end{theorem}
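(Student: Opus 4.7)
The plan is to view $\StochG$ with the memoryless strategy $\strat{1}$ fixed as an MDP in which only Player~2 makes choices, and then to build an optimal deterministic memoryless fair response via a greedy construction driven by a Bellman-style fixed-point characterization of the value function.

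Define $F(v) = \inf_{\strat{2} \in \FairStrats{2}} \Expect{\strat{1}}{\strat{2}}_{\StochG,v}[\Rewards]$; by Corollary~\ref{coro:inf-for-strat2-is-bounded}, $F$ is finite everywhere. The first step is to show that $F$ satisfies the following Bellman-style equations: $F(v)=0$ at terminal states; $F(v) = \reward(v) + \sum_{v' \in \post(v)} \delta(v,v')\, F(v')$ on $V_\Probabilistic$; $F(v) = \reward(v) + \sum_{v' \in \post(v)} \strat{1}(v)(v')\, F(v')$ on $V_1$; and $F(v) = \reward(v) + \min_{v' \in \post(v)} F(v')$ on $V_2$. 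The ``$\leq$'' direction at a $V_2$-state follows by taking, for each $v' \in \post(v)$ and each $\varepsilon > 0$, a fair strategy that moves to $v'$ in one step and then plays $\varepsilon$-optimally from $v'$. The ``$\geq$'' direction uses the tower property for $\Prob{\strat{1}}{\strat{2}}_{\StochG,v}$ conditioned on the first step, together with Theorem~\ref{th:memoryless-strat-p2-bounded-expectation} to justify passing the infimum inside the finite sum. Moreover, I would argue that $F$ is the \emph{greatest} nonnegative solution of these equations, since smaller solutions correspond to values of improper (and therefore non-fair) strategies that stall in zero-reward cycles.

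Next I would construct $\starredstrat{2}$ greedily: at each $v \in V_2$, pick $\starredstrat{2}(v) \in \argmin_{v' \in \post(v)} F(v')$, breaking ties by preferring the successor minimizing the expected hitting time $h(v')$ of $T$ under the uniform reference strategy $\uniformstrat{2}$ of Theorem~\ref{thm:uniform-prob}; $h$ is finite everywhere by stopping under fairness. By construction $\starredstrat{2}$ is deterministic and memoryless. To check that it is fair, assume for contradiction that the Markov chain $\StochG^{\strat{1}, \starredstrat{2}}$ contains a BSCC $C$ disjoint from $T$. Since $\uniformstrat{2}$ is fair and hence reaches $T$ almost surely from every state, some $v \in V_2 \cap C$ must have a successor $v' \in \post(v) \setminus C$. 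Inspecting the Bellman equations along cycles inside $C$ forces $F$ to be constant on $C$ and all rewards in $C$ to vanish, hence $F(v') \leq F(\starredstrat{2}(v))$; the tie-breaking rule then demands $\starredstrat{2}(v) = v'$, contradicting $\starredstrat{2}(v) \in C$.

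Once fairness of $\starredstrat{2}$ is established, the inequality $\Expect{\strat{1}}{\starredstrat{2}}_{\StochG, v}[\Rewards] \geq F(v)$ holds by definition of $F$, and the reverse inequality follows by unfolding the Bellman fixed point along $\starredstrat{2}$-induced paths, using almost-sure termination to ensure convergence of the partial sums. The main obstacle I anticipate is precisely the fairness verification: showing that the tie-breaking based on $h$ is strong enough to exclude trapping in non-terminal BSCCs. This rests on a delicate interplay between the structure of greatest Bellman fixed points inside end components (where reward along cycles must vanish, so multiple argmins typically coexist) and the strict decrease of $h$ along any escape path guaranteed by stopping under fairness.
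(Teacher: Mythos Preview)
Your overall strategy---fix $\strat{1}$, characterise the value vector $F$ by Bellman equations, then choose a deterministic memoryless greedy strategy for Player~2 with a tie-breaking rule that forces termination---is the same as the paper's.  The gap is exactly where you suspect it: the tie-breaking by the hitting time $h$ computed under $\uniformstrat{2}$ in the \emph{full} game does not guarantee fairness.

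Concretely, take $V=V_2=\{a,b,c,c_1,\dots,c_k,d,t\}$ with $t$ terminal, edges $a\to b$, $a\to c$, $b\to a$, $b\to d$, $c\to c_1\to\cdots\to c_k\to t$, $d\to t$, and $\reward(d)=1$, all other rewards $0$.  One checks $F\equiv 0$ on $\{a,b,c,c_1,\dots,c_k,t\}$ and $F(d)=1$, so $\postmin(a)=\{b,c\}$ but $\postmin(b)=\{a\}$.  Under $\uniformstrat{2}$ the expected hitting times satisfy $h(c)=k{+}1$, $h(a)=3+\tfrac{2k}{3}$, $h(b)=3+\tfrac{k}{3}$; for $k\geq 7$ we get $h(b)<h(c)$, so your rule sets $\starredstrat{2}(a)=b$ and $\starredstrat{2}(b)=a$.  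The resulting strategy cycles in $\{a,b\}$, never visits $c$, and is therefore not fair.  The step ``the tie-breaking rule then demands $\starredstrat{2}(v)=v'$'' fails: $v'=c$ is indeed in $\operatorname*{argmin}$, but $h(c)>h(b)$ because the uniform chain leaves $b$ quickly through the \emph{non-argmin} edge $b\to d$.  (Your preceding ``hence $F(v')\leq F(\starredstrat{2}(v))$'' is also unjustified for a generic exit; it is true for \emph{some} exit, but that requires the $\epsilon$-optimal argument below.)

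The paper repairs this by first passing to the \emph{reduced} MDP $\StochG^{\strat{1}}_{\min}$ in which only argmin edges from $V_2$ are kept, and proving (via an $\epsilon$-optimal fair strategy argument) that $\StochG^{\strat{1}}_{\min}$ is itself stopping under fairness.  Only then is the tie-breaker defined, namely as the \emph{graph distance to $T$ inside $\StochG^{\strat{1}}_{\min}$}.  This distance is finite precisely because the reduced MDP stops, and at a minimum-distance vertex of any candidate BSCC there is necessarily an argmin successor with strictly smaller distance, hence outside the BSCC.  Your $h$ lacks this monotone-escape property because it mixes in non-argmin edges; replacing it by any quantity computed \emph{within} the argmin subgraph (after establishing that subgraph stops) would rescue the argument.
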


As an immediate consequence of Theorem~\ref{th:infima-in-dmf}, we have
that, if Player 1 plays with a memoryless strategy, it suffices that
Player~2 considers only memoryless fair strategies to find its
optimal value:

\begin{corollary}\label{th:memoryless-infima}%%
  Let $\StochG= (V, (V_1, V_2, V_\Probabilistic), \delta)$ be a stochastic game that is stopping under fairness
  and let $\strat{1} \in \MemorylessStrats{1}$ be a memoryless
  strategy for Player~1. Then,
  $\inf_{\strat{2} \in \FairStrats{2}} \Expect{\strat{1}}{\strat{2}}_{\StochG,v}[\Rewards]
  =
  \inf_{\strat{2} \in \MemorylessFairStrats{2}} \Expect{\strat{1}}{\strat{2}}_{\StochG,v}[\Rewards]$, for every $v \in V$.
%%   \[
%%     \inf_{\strat{2} \in \FairStrats{2}} \Expect{\strat{1}}{\strat{2}}_{\StochG,v}[\Rewards]
%%     =
%%     \inf_{\strat{2} \in \MemorylessFairStrats{2}} \Expect{\strat{1}}{\strat{2}}_{\StochG,v}[\Rewards].
%% \]
\end{corollary}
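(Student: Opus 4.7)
The plan is to derive this corollary as a direct consequence of Theorem~\ref{th:infima-in-dmf}, which supplies the crucial witness: a deterministic memoryless fair strategy that attains the infimum over all fair strategies.

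First I would establish one direction by a set-inclusion argument. Since every memoryless fair strategy is in particular a fair strategy, we have $\MemorylessFairStrats{2} \subseteq \FairStrats{2}$, so taking an infimum over the smaller set cannot be smaller than taking it over the larger one:
\[
\inf_{\strat{2} \in \FairStrats{2}} \Expect{\strat{1}}{\strat{2}}_{\StochG,v}[\Rewards]
\;\leq\;
\inf_{\strat{2} \in \MemorylessFairStrats{2}} \Expect{\strat{1}}{\strat{2}}_{\StochG,v}[\Rewards].
\]

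For the reverse inequality I would invoke Theorem~\ref{th:infima-in-dmf}: since $\StochG$ is stopping under fairness and $\strat{1}$ is memoryless, there is a deterministic memoryless fair strategy $\starredstrat{2} \in \MemorylessFairStrats{2}$ achieving the infimum over $\FairStrats{2}$. Because $\starredstrat{2}$ lies in $\MemorylessFairStrats{2}$, one gets
\[
\inf_{\strat{2} \in \MemorylessFairStrats{2}} \Expect{\strat{1}}{\strat{2}}_{\StochG,v}[\Rewards]
\;\leq\;
\Expect{\strat{1}}{\starredstrat{2}}_{\StochG,v}[\Rewards]
\;=\;
\inf_{\strat{2} \in \FairStrats{2}} \Expect{\strat{1}}{\strat{2}}_{\StochG,v}[\Rewards].
\]
Combining the two inequalities yields the claimed equality for every $v \in V$.

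Since the corollary follows by sandwiching the two infima using the witness provided by Theorem~\ref{th:infima-in-dmf}, there is no genuine obstacle; the only subtlety worth noting is that one must be sure the attaining strategy $\starredstrat{2}$ lies in the restricted class, which is exactly what Theorem~\ref{th:infima-in-dmf} guarantees (it is both memoryless and fair, and in fact deterministic). Consequently, for purposes of computing the value against a memoryless Player~1 strategy, one may restrict attention to $\MemorylessFairStrats{2}$ without loss.
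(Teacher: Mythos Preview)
Your proof is correct and is precisely the argument the paper has in mind: the corollary is stated as an immediate consequence of Theorem~\ref{th:infima-in-dmf}, and your two-inequality sandwich via the inclusion $\MemorylessFairStrats{2} \subseteq \FairStrats{2}$ together with the attaining strategy $\starredstrat{2}\in\MemorylessFairStrats{2}$ is exactly how that immediacy is unpacked.
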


Semi-Markov strategies can be thought of as sequences of memoryless strategies. The next theorem uses this fact to show that, when Player $2$ plays a memoryless and fair strategy,  semi-Markov strategies do not improve the value that Player $1$ can obtain via memoryless strategies. The proof of the following theorem adapts the ideas of Theorem 4.2.9 in \cite{FilarV96} to our games.
%A detailed proof can be found in the Appendix. 

\begin{theorem}\label{th:semimarkov-to-memoryless} For any stochastic game $\StochG$ that is stopping under fairness, and vertex $v$, it holds that:
\[\adjustlimits
	\sup_{\strat{1} \in \SemiMarkovStrats{1}} \inf_{\strat{2} \in \MemorylessFairStrats{2}} \Expect{\strat{1}}{\strat{2}}_{\StochG,v}[\Rewards]
	= \adjustlimits
	\sup_{\strat{1} \in \MemorylessStrats{1}} \inf_{\strat{2} \in \MemorylessFairStrats{2}} \Expect{\strat{1}}{\strat{2}}_{\StochG,v}[\Rewards]
\]
\end{theorem}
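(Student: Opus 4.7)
The inequality
$\sup_{\strat{1}\in\SemiMarkovStrats{1}}\inf_{\strat{2}\in\MemorylessFairStrats{2}}\Expect{\strat{1}}{\strat{2}}_{\StochG,v}[\Rewards] \geq \sup_{\strat{1}\in\MemorylessStrats{1}}\inf_{\strat{2}\in\MemorylessFairStrats{2}}\Expect{\strat{1}}{\strat{2}}_{\StochG,v}[\Rewards]$
is immediate from $\MemorylessStrats{1}\subseteq\SemiMarkovStrats{1}$, so the work lies in the reverse inequality. The plan is to show that for every $\strat{1}\in\SemiMarkovStrats{1}$ one can exhibit a \emph{memoryless} $\bar{\pi}_1\in\MemorylessStrats{1}$ whose $\inf$-value against $\MemorylessFairStrats{2}$ is no smaller than the $\inf$-value of $\strat{1}$; taking the supremum over $\strat{1}$ then yields the equality. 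The construction is the averaging trick of \cite{FilarV96}'s Theorem~4.2.9, adapted so that all relevant series converge thanks to the stopping-under-fairness hypothesis.

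Given $\strat{1}\in\SemiMarkovStrats{1}$, I would first pick a minimiser $\strat{2}^\star\in\MemorylessFairStrats{2}$ of $\Expect{\strat{1}}{\cdot}_{\StochG,v}[\Rewards]$; its existence is established along the lines of Theorem~\ref{th:infima-in-dmf}, using that this expectation is multilinear in the finitely many stage-choices of Player~2 and is finite by Corollary~\ref{coro:inf-for-strat2-is-bounded}, while $\MemorylessFairStrats{2}$ is the convex hull of the (finitely many) deterministic memoryless fair strategies. Writing $\strat{1}$ as the sequence $(\strat{1}^{(0)},\strat{1}^{(1)},\dots)$ of its one-step memoryless rules and letting $P_k(u)=\Prob{\strat{1}}{\strat{2}^\star}_{\StochG,v}(\Diamond^{k}u)$, set
\[
  \bar{\pi}_1(u)\ =\ \frac{1}{\mu(u)}\sum_{k\geq 0} P_k(u)\,\strat{1}^{(k)}(u)\qquad (u\in V_1),
\]
with $\mu(u)=\sum_{k\geq 0} P_k(u)$, which is finite by Theorem~\ref{th:games-are-bounded}. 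Substituting $\bar{\pi}_1$ into the balance equations of the Markov chain induced by $(\bar{\pi}_1,\strat{2}^\star)$ and comparing with those induced by $(\strat{1},\strat{2}^\star)$ shows that both chains produce the same occupation measure, hence $\Expect{\bar{\pi}_1}{\strat{2}^\star}_{\StochG,v}[\Rewards]=\Expect{\strat{1}}{\strat{2}^\star}_{\StochG,v}[\Rewards]=\inf_{\strat{2}\in\MemorylessFairStrats{2}}\Expect{\strat{1}}{\strat{2}}_{\StochG,v}[\Rewards]$. This is precisely the Filar--Vrieze averaging step transplanted into our setting.

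The main obstacle is the last step: ensuring that $\inf_{\strat{2}\in\MemorylessFairStrats{2}}\Expect{\bar{\pi}_1}{\strat{2}}_{\StochG,v}[\Rewards]\geq\Expect{\bar{\pi}_1}{\strat{2}^\star}_{\StochG,v}[\Rewards]$, since $\bar{\pi}_1$ was tuned specifically against $\strat{2}^\star$ and, a priori, some other $\strat{2}'\in\MemorylessFairStrats{2}$ might exploit $\bar{\pi}_1$ more aggressively. The way out is to exploit that $\bar{\pi}_1$ is now memoryless: Theorem~\ref{th:infima-in-dmf} applies and yields a deterministic memoryless fair best response $\tilde{\strat{2}}$ against $\bar{\pi}_1$, and then an occupation-measure comparison between $(\bar{\pi}_1,\tilde{\strat{2}})$ and $(\strat{1},\tilde{\strat{2}})$, together with the fact that $\strat{2}^\star$ already minimises $\Expect{\strat{1}}{\cdot}_{\StochG,v}[\Rewards]$, forces $\Expect{\bar{\pi}_1}{\tilde{\strat{2}}}_{\StochG,v}[\Rewards]\geq\Expect{\strat{1}}{\strat{2}^\star}_{\StochG,v}[\Rewards]$. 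Verifying this last comparison via the balance-equation identities is the technical heart of the proof and is exactly the part of Filar--Vrieze~4.2.9 that must be re-done carefully in the fair-adversary setting; once it is in place, taking $\sup$ over $\strat{1}\in\SemiMarkovStrats{1}$ concludes.
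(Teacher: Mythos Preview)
Your approach diverges from the paper's in a significant way, and the divergence creates a real gap.

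The paper does \emph{not} build a memoryless strategy out of the given semi-Markov $\strat{1}$ by averaging.  Instead it fixes an arbitrary $\strat{2}\in\MemorylessFairStrats{2}$, takes a memoryless $\starredstrat{1}$ that is optimal for Player~1 in the resulting MDP $\StochG^{\strat{2}}$, and uses the one-step Bellman optimality inequality
\[
  \reward(v_n)+\sum_{v_{n+1}\in\post(v_n)}\delta^{\strat{1}^n,\strat{2}}(v_n,v_{n+1})\,\Expect{\starredstrat{1}}{\strat{2}}_{\StochG,v_{n+1}}[\Rewards]
  \ \leq\
  \Expect{\starredstrat{1}}{\strat{2}}_{\StochG,v_n}[\Rewards]
\]
to telescope along the semi-Markov decomposition $\strat{1}=(\strat{1}^0,\strat{1}^1,\dots)$, obtaining $\Expect{\strat{1}}{\strat{2}}_{\StochG,v}[\Rewards]\leq\Expect{\starredstrat{1}}{\strat{2}}_{\StochG,v}[\Rewards]$ after passing to the limit (using Lemma~\ref{lemma:sum-of-nonterminal-is-zero}).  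This is precisely the dominance argument of Filar--Vrieze Theorem~4.2.9; no occupation-measure averaging appears there either.

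Your construction of $\bar{\pi}_1$ is correct up to the point $\Expect{\bar{\pi}_1}{\strat{2}^\star}_{\StochG,v}[\Rewards]=\Expect{\strat{1}}{\strat{2}^\star}_{\StochG,v}[\Rewards]$: the balance-equation identity you describe does match the two occupation measures under the \emph{same} $\strat{2}^\star$.  The problem is the last step.  You need $\inf_{\strat{2}\in\MemorylessFairStrats{2}}\Expect{\bar{\pi}_1}{\strat{2}}_{\StochG,v}[\Rewards]\geq\Expect{\bar{\pi}_1}{\strat{2}^\star}_{\StochG,v}[\Rewards]$, but the infimum is automatically $\leq$ that value, so you are really asserting that $\strat{2}^\star$ is a best response to $\bar{\pi}_1$.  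Your proposed way out --- an ``occupation-measure comparison between $(\bar{\pi}_1,\tilde{\strat{2}})$ and $(\strat{1},\tilde{\strat{2}})$'' --- does not work: the weights $P_k(u)$ you used to define $\bar{\pi}_1$ were computed under $\strat{2}^\star$, and they change as soon as Player~2 switches to $\tilde{\strat{2}}$.  The occupation measures of $(\bar{\pi}_1,\tilde{\strat{2}})$ and $(\strat{1},\tilde{\strat{2}})$ therefore need not agree, and there is no balance-equation identity to appeal to.  In short, $\bar{\pi}_1$ is tailored to $\strat{2}^\star$ and nothing prevents another $\tilde{\strat{2}}$ from punishing it below $\Expect{\strat{1}}{\strat{2}^\star}_{\StochG,v}[\Rewards]$; the argument as sketched does not close this circle.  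The Bellman-dominance route of the paper avoids this difficulty entirely because it never tries to manufacture a specific memoryless competitor from $\strat{1}$.
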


Using the previous theorem, we can prove the following
useful property.
\begin{theorem}\label{th:reduce-to-memoryless}
  For any stochastic game $\StochG$ that is stopping under fairness we have:
  \[
  \adjustlimits \sup_{\strat{1} \in \Strategies{1}} \inf_{\strat{2} \in \FairStrats{2}} \Expect{\strat{1}}{\strat{2}}_{\StochG,v}[\Rewards]
  =
  \adjustlimits \sup_{\strat{1} \in \MemorylessStrats{1}} \inf_{\strat{2} \in \MemorylessFairStrats{2}} \Expect{\strat{1}}{\strat{2}}_{\StochG,v}[\Rewards]
 \]
\end{theorem}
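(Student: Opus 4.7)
The proof establishes the two inequalities separately. For the direction $\mathrm{RHS} \leq \mathrm{LHS}$, Corollary~\ref{th:memoryless-infima} lets us rewrite $\mathrm{RHS} = \sup_{\strat{1} \in \MemorylessStrats{1}} \inf_{\strat{2} \in \FairStrats{2}} \Expect{\strat{1}}{\strat{2}}_{\StochG,v}[\Rewards]$, and since $\MemorylessStrats{1} \subseteq \Strategies{1}$ the conclusion is immediate.

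For the harder direction $\mathrm{LHS} \leq \mathrm{RHS}$, Theorems~\ref{th:semmimarkov2} and~\ref{th:semimarkov-to-memoryless} play the central role, but not via a pointwise substitution: the semi-Markov counterpart supplied by Theorem~\ref{th:semmimarkov2} depends on the Player~2 strategy, so a direct replacement inside $\sup_{\strat{1}} \inf_{\strat{2}}$ goes the wrong way. The plan is instead to route through an intermediate $\inf\sup$ expression. First, since $\MemorylessFairStrats{2} \subseteq \FairStrats{2}$, we have $\mathrm{LHS} \leq \sup_{\strat{1} \in \Strategies{1}} \inf_{\strat{2} \in \MemorylessFairStrats{2}} \Expect{\strat{1}}{\strat{2}}_{\StochG,v}[\Rewards]$, and by the weak minimax inequality this is at most $\inf_{\strat{2} \in \MemorylessFairStrats{2}} \sup_{\strat{1} \in \Strategies{1}} \Expect{\strat{1}}{\strat{2}}_{\StochG,v}[\Rewards]$. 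For each fixed $\strat{2} \in \MemorylessFairStrats{2}$ I would then show the inner $\sup_{\strat{1} \in \Strategies{1}}$ can be restricted to $\MemorylessStrats{1}$: Theorem~\ref{th:semmimarkov2} (applicable because memoryless fair implies semi-Markov fair) reduces $\Strategies{1}$ to $\SemiMarkovStrats{1}$ against this $\strat{2}$, and because the induced game is a stopping MDP for Player~1 with nonnegative total rewards, classical MDP theory further collapses $\SemiMarkovStrats{1}$ down to $\MemorylessStrats{1}$.

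At this point we have $\mathrm{LHS} \leq \inf_{\strat{2} \in \MemorylessFairStrats{2}} \sup_{\strat{1} \in \MemorylessStrats{1}} \Expect{\strat{1}}{\strat{2}}_{\StochG,v}[\Rewards]$, and exchanging sup and inf on the right is the main obstacle. My plan to close the loop is a minimax argument on the memoryless-restricted game: $\MemorylessStrats{1}$ is a compact convex product of simplices and the expected total reward depends multilinearly on the memoryless-strategy parameters, while Theorem~\ref{th:infima-in-dmf} ensures that, when Player~1 plays memoryless, the inf over $\MemorylessFairStrats{2}$ is attained by a deterministic memoryless fair strategy. Pairing such an optimal Player~2 response with the optimal memoryless best-response of Player~1 from the stopping-MDP argument above yields a saddle point whose value equals both sides of the desired exchange, giving $\inf\sup = \sup\inf = \mathrm{RHS}$. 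An alternative route, aligned with the greatest-fixed-point characterization alluded to in the abstract, is to verify that both sides coincide with the maximal fixed point of the game's value operator, whose existence on the bounded-value space $\mathbb{R}_{\geq 0}^{V}$ is guaranteed by Theorem~\ref{th:memoryless-strat-p2-bounded-expectation} together with monotonicity of the operator.
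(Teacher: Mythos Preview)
Your direction $\mathrm{RHS}\leq\mathrm{LHS}$ is exactly the paper's: apply Corollary~\ref{th:memoryless-infima} and enlarge the outer strategy set.

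For $\mathrm{LHS}\leq\mathrm{RHS}$ the paper does \emph{not} pass through an $\inf\sup$.  It writes the direct chain
\[
\adjustlimits\sup_{\Strategies{1}}\inf_{\FairStrats{2}} \ \leq \ \adjustlimits\sup_{\Strategies{1}}\inf_{\MemorylessFairStrats{2}} \ \leq \ \adjustlimits\sup_{\SemiMarkovStrats{1}}\inf_{\MemorylessFairStrats{2}} \ \leq \ \adjustlimits\sup_{\MemorylessStrats{1}}\inf_{\MemorylessFairStrats{2}},
\]
invoking Theorem~\ref{th:semmimarkov2} for the middle inequality and Theorem~\ref{th:semimarkov-to-memoryless} for the last.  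Your observation that the semi-Markov replacement produced by Theorem~\ref{th:semmimarkov2} depends on $\strat{2}$ is correct, and indeed the middle step does not follow from Theorem~\ref{th:semmimarkov2} by a pointwise argument; the paper glosses over this.  Your detour via $\inf_{\MemorylessFairStrats{2}}\sup_{\Strategies{1}}$ and the subsequent reduction of the inner supremum to $\MemorylessStrats{1}$ (for each fixed memoryless fair $\strat{2}$) is in fact what the appendix proofs of Theorems~\ref{th:semmimarkov2} and~\ref{th:semimarkov-to-memoryless} actually establish, so that part of your plan is on solid ground.

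The genuine gap in your proposal is the final swap $\inf_{\MemorylessFairStrats{2}}\sup_{\MemorylessStrats{1}}=\sup_{\MemorylessStrats{1}}\inf_{\MemorylessFairStrats{2}}$.  Neither closure you offer works.  The multilinearity claim is false: with both strategies memoryless the expected total reward is $(I-P)^{-1}\reward$ restricted to non-terminal states, a \emph{rational} function of the transition probabilities (e.g.\ a single self-loop with probability $p$ and reward $\reward$ gives $\reward/(1-p)$), so a Sion-type convex/concave minimax on the simplex product is not available; moreover $\MemorylessFairStrats{2}$ is not obviously convex.  The fixed-point route is circular: the greatest-fixed-point characterization is Theorem~\ref{th:game-determinacy}, whose proof \emph{uses} Theorem~\ref{th:reduce-to-memoryless}.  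In the paper the swap is effectively deferred to Theorem~\ref{th:game-determinacy}, where it is obtained not by a minimax theorem but by showing that $\inf_{\DetMemorylessFairStrats{2}}\sup_{\DetMemorylessStrats{1}}$ is a fixed point of $\Gamma$ and then exhibiting a memoryless $\starredstrat{1}$ (pick $\argmax$ of $\nu\Gamma$) that pushes $\sup_{\MemorylessStrats{1}}\inf_{\MemorylessFairStrats{2}}$ up to $\nu\Gamma$.  If you want a self-contained proof of Theorem~\ref{th:reduce-to-memoryless} along your lines, you would need an independent argument for that swap.
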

\begin{proof}
	First, we prove that the left-hand term is smaller or equal to the right-hand one:
\begin{align*}
  \adjustlimits \sup_{\strat{1} \in \Strategies{1}} \inf_{\strat{2} \in \FairStrats{2}} \Expect{\strat{1}}{\strat{2}}_{\StochG,v}[\Rewards]
  &{} \leq  \adjustlimits \sup_{\strat{1} \in \Strategies{1}} \inf_{\strat{2} \in \MemorylessFairStrats{2}} \Expect{\strat{1}}{\strat{2}}_{\StochG,v}[\Rewards]\\
  &{} \leq \adjustlimits  \sup_{\strat{1} \in \SemiMarkovStrats{1}} \inf_{\strat{2} \in \MemorylessFairStrats{2}} \Expect{\strat{1}}{\strat{2}}_{\StochG,v}[\Rewards]
  \leq \adjustlimits \sup_{\strat{1} \in \MemorylessStrats{1}} \inf_{\strat{2} \in \MemorylessFairStrats{2}} \Expect{\strat{1}}{\strat{2}}_{\StochG,v}[\Rewards]
\end{align*}
%% \begin{align*}
%% 	\sup_{\strat{1} \in \Strategies{1}} \inf_{\strat{2} \in \FairStrats{2}} \Expect{\strat{1}}{\strat{2}}_{\StochG,v}[\Rewards]&{}\leq \sup_{\strat{1} \in \Strategies{1}} \inf_{\strat{2} \in \MemorylessFairStrats{2}} \Expect{\strat{1}}{\strat{2}}_{\StochG,v}[\Rewards]\\
%% 		&{} \leq  \sup_{\strat{1} \in \SemiMarkovStrats{1}} \inf_{\strat{2} \in \MemorylessFairStrats{2}} \Expect{\strat{1}}{\strat{2}}_{\StochG,v}[\Rewards]\\
%% 		&{} \leq  \sup_{\strat{1} \in \MemorylessStrats{1}} \inf_{\strat{2} \in \MemorylessFairStrats{2}} \Expect{\strat{1}}{\strat{2}}_{\StochG,v}[\Rewards]
%% \end{align*}
	The first inequality follows from $\MemorylessFairStrats{2} \subseteq \FairStrats{2}$, the second inequality is due to Theorem \ref{th:semmimarkov2} and the fact that memoryless strategies are semi-Markov, and the last inequality is obtained by applying Theorem \ref{th:semimarkov-to-memoryless}.
	
To prove the other inequality, we calculate:
\[\adjustlimits \sup_{\strat{1} \in \MemorylessStrats{1}} \inf_{\strat{2} \in \MemorylessFairStrats{2}} \Expect{\strat{1}}{\strat{2}}_{\StochG,v}[\Rewards]
=  \adjustlimits \sup_{\strat{1} \in \MemorylessStrats{1}} \inf_{\strat{2} \in \FairStrats{2}} \Expect{\strat{1}}{\strat{2}}_{\StochG,v}[\Rewards]
\leq  \adjustlimits \sup_{\strat{1} \in \Strategies{1}} \inf_{\strat{2} \in \FairStrats{2}} \Expect{\strat{1}}{\strat{2}}_{\StochG,v}[\Rewards].
\]
%% \begin{align*}
%% 	\sup_{\strat{1} \in \MemorylessStrats{1}} \inf_{\strat{2} \in \MemorylessFairStrats{2}} \Expect{\strat{1}}{\strat{2}}_{\StochG,v}[\Rewards]& =  \sup_{\strat{1} \in \MemorylessStrats{1}} \inf_{\strat{2} \in \FairStrats{2}} \Expect{\strat{1}}{\strat{2}}_{\StochG,v}[\Rewards]\\
%% 		& \leq  \sup_{\strat{1} \in \Strategies{1}} \inf_{\strat{2} \in \FairStrats{2}} \Expect{\strat{1}}{\strat{2}}_{\StochG,v}[\Rewards]
%% \end{align*}
	The first equality follows from Corollary \ref{th:memoryless-infima}, and the second inequality is due to suprema properties.
\qed
\end{proof}
	Furthermore, since the values obtained via randomized memoryless strategies are convex combinations of  values
obtained via deterministic memoryless strategies, we obtain the following corollary.
\begin{corollary}\label{cor:value- game-deterministic-strats}For any stochastic game $\StochG$ that is stopping under fairness we have:
\[
	\adjustlimits \sup_{\strat{1} \in \Strategies{1}} \inf_{\strat{2} \in \FairStrats{2}} \Expect{\strat{1}}{\strat{2}}_{\StochG,v}[\Rewards]
	= \adjustlimits
	\sup_{\strat{1} \in \DetMemorylessStrats{1}} \inf_{\strat{2} \in \DetMemorylessFairStrats{2}} \Expect{\strat{1}}{\strat{2}}_{\StochG,v}[\Rewards]
\]
\end{corollary}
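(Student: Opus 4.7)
The plan is to chain three reductions and close with the convex-combination remark pointed to by the statement. First, applying Theorem~\ref{th:reduce-to-memoryless} replaces the LHS by $\adjustlimits\sup_{\strat{1} \in \MemorylessStrats{1}} \inf_{\strat{2} \in \MemorylessFairStrats{2}} \Expect{\strat{1}}{\strat{2}}_{\StochG,v}[\Rewards]$. Second, Theorem~\ref{th:infima-in-dmf} ensures that for every memoryless $\strat{1}$ the infimum over $\FairStrats{2}$ is attained by a deterministic memoryless fair strategy, so the inner infimum coincides with $\inf_{\strat{2} \in \DetMemorylessFairStrats{2}} \Expect{\strat{1}}{\strat{2}}_{\StochG,v}[\Rewards]$. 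Combining these, the LHS rewrites as $\adjustlimits\sup_{\strat{1} \in \MemorylessStrats{1}} \inf_{\strat{2} \in \DetMemorylessFairStrats{2}} \Expect{\strat{1}}{\strat{2}}_{\StochG,v}[\Rewards]$, and the inequality $\text{RHS} \le \text{LHS}$ is immediate from $\DetMemorylessStrats{1} \subseteq \MemorylessStrats{1}$.

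For the converse $\text{LHS} \le \text{RHS}$ the plan is to invoke the convex-combination observation stated in the preamble. For any $\strat{2} \in \DetMemorylessFairStrats{2}$, fixing $\strat{2}$ in $\StochG$ produces the MDP $\StochG^{\strat{2}}$; this MDP is stopping because $\strat{2}$ is fair and $\StochG$ is stopping under fairness (Definition~\ref{def:stopping-under-fairness}). In any stopping MDP with a non-negative total-reward objective, the expected total reward attained by a randomized memoryless Player~1 strategy can be written as a convex combination of expected total rewards attained by deterministic memoryless Player~1 strategies (this is the classical occupation-measure/LP characterization of stationary policies). Consequently, for every $\strat{1} \in \MemorylessStrats{1}$ and every $\strat{2} \in \DetMemorylessFairStrats{2}$ one has $\Expect{\strat{1}}{\strat{2}}_{\StochG,v}[\Rewards] \le \max_{\strat{1}_d \in \DetMemorylessStrats{1}} \Expect{\strat{1}_d}{\strat{2}}_{\StochG,v}[\Rewards]$.

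To conclude, I would pick a memoryless sup-realizer $\starredstrat{1}$ (which exists by compactness of $\MemorylessStrats{1}$ and continuity of the expected total reward, the latter underwritten by Theorem~\ref{th:memoryless-strat-p2-bounded-expectation}) together with its deterministic memoryless fair inf-responder $\starredstrat{2} \in \DetMemorylessFairStrats{2}$ from Theorem~\ref{th:infima-in-dmf}, so that $\Expect{\starredstrat{1}}{\starredstrat{2}}_{\StochG,v}[\Rewards] = \text{LHS}$. A deterministic memoryless best response $\stratprime{1} \in \DetMemorylessStrats{1}$ to $\starredstrat{2}$ in the stopping MDP $\StochG^{\starredstrat{2}}$ then satisfies $\Expect{\stratprime{1}}{\starredstrat{2}}_{\StochG,v}[\Rewards] \ge \Expect{\starredstrat{1}}{\starredstrat{2}}_{\StochG,v}[\Rewards] = \text{LHS}$ by the bound of the previous paragraph. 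The main obstacle is to verify that $\stratprime{1}$ is not worsened by any other $\strat{2} \in \DetMemorylessFairStrats{2}$, i.e., that $\inf_{\strat{2} \in \DetMemorylessFairStrats{2}} \Expect{\stratprime{1}}{\strat{2}}_{\StochG,v}[\Rewards] \ge \text{LHS}$; this is the saddle-point property of $(\starredstrat{1},\starredstrat{2})$ transferred to $(\stratprime{1},\starredstrat{2})$: since $\stratprime{1}$ is a best response to $\starredstrat{2}$ and $(\starredstrat{1},\starredstrat{2})$ is a min--max equilibrium among memoryless strategies, the pair $(\stratprime{1},\starredstrat{2})$ is also an equilibrium, and consequently $\stratprime{1} \in \DetMemorylessStrats{1}$ witnesses the RHS.
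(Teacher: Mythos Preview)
Your first two paragraphs are in line with the paper's (very brief) argument: Theorem~\ref{th:reduce-to-memoryless} reduces to memoryless strategies, Theorem~\ref{th:infima-in-dmf} lets you replace the inner infimum by one over $\DetMemorylessFairStrats{2}$, and the inequality $\text{RHS}\le\text{LHS}$ is immediate. The paper then simply asserts the convex-combination observation to finish; your second paragraph unpacks that observation correctly for a \emph{fixed} $\strat{2}$.

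The gap is in your third paragraph. Two problems compound there. First, you assert that $(\starredstrat{1},\starredstrat{2})$ is a ``min--max equilibrium among memoryless strategies'', but you have only built a sup--inf realizer: $\starredstrat{2}$ is a best response to $\starredstrat{1}$, yet nothing you have proved so far says $\starredstrat{1}$ is a best response to $\starredstrat{2}$. That would require $\sup\inf=\inf\sup$, which is precisely Theorem~\ref{th:game-determinacy}, stated \emph{after} this corollary. Second, even granting a saddle point, the ``transfer'' you invoke is false in general: if $(x^*,y^*)$ is a saddle point and $x'$ is another best response to $y^*$, it does not follow that $y^*$ is a best response to $x'$. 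A $2\times 2$ matrix game with rows $(1,1)$ and $(1,0)$ already shows this: $(x_1,y_1)$ is a saddle point of value $1$, $x_2$ ties $x_1$ against $y_1$, but $\inf_y f(x_2,y)=0<1$. So your conclusion $\inf_{\strat{2}\in\DetMemorylessFairStrats{2}}\Expect{\stratprime{1}}{\strat{2}}_{\StochG,v}[\Rewards]\ge\text{LHS}$ does not follow from what you have.

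The clean way to close the remaining direction without circularity is the one the paper actually carries out in the proof of Theorem~\ref{th:game-determinacy}: take any fixed point $\nu\Gamma$ of the Bellman operator, define $\starredstrat{1}(v)\in\argmax\{\nu\Gamma(v')+\reward(v)\mid v'\in\post(v)\}$ (which is deterministic and memoryless by construction), and show directly that $\inf_{\strat{2}\in\MemorylessFairStrats{2}}\Expect{\starredstrat{1}}{\strat{2}}_{\StochG,v}[\Rewards]\ge\nu\Gamma(v)\ge\text{LHS}$. That argument exhibits the deterministic memoryless witness for Player~1 explicitly, avoiding the unjustified equilibrium step.
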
	

	Summing up, we can compute the value $\sup_{\strat{1} \in \Strategies{1}} \inf_{\strat{2} \in \FairStrats{2}} \Expect{\strat{1}}{\strat{2}}[\Rewards]$, for any vertex $v$, by restricting ourselves to deterministic memoryless strategies. 
	
%{\color{red} \remarkPC{Esta parte debe ser reemplazada por la de mas abajo}
It is worth noting  that,  by Theorem~\ref{th:memoryless-strat-p2-bounded-expectation}, the value $\Expect{\strat{1}}{\strat{2}}_{\StochG,v}[\Rewards]$ is finite 
for every stopping game under fairness $\StochG$ and strategies $\strat{1} \in \DetMemorylessStrats{1}$, $\strat{2} \in \DetMemorylessFairStrats{2}$. 
 Furthermore, because the number of deterministic memoryless strategies is finite  we have that the number 
 $\UpperBound = \max \{ \inf_{\strat{2} \in \DetMemorylessFairStrats{2}}  \sup_{\strat{1} \in \DetMemorylessStrats{1}} \Expect{\strat{1}}{\strat{2}}_{\StochG,v}[\Rewards] \mid v \in V \}$ is well defined.  Consider the functional:
$\Gamma: [0,\UpperBound]^V \rightarrow  [0,\UpperBound]^V$, defined by the following set of equations.
\[
    \Gamma(f)(v) =
    \begin{cases}
     
           \sum_{v' \in \post(v)} \delta(v,v')  f(v') & \text{ if } v \in V_\Probabilistic  \\
           \max \{\reward(v)  + f(v') \mid v' \in \post(v) \} & \text{ if } v \in  V_1 \setminus T, \\
           \min \{\reward(v) + f(v') \mid v' \in \post(v) \} & \text{ if } v \in  V_2 \setminus T, \\
           0 & \text{ if } v \in T.
    \end{cases}
\]
Since $([0,\UpperBound]^V, \leq)$ is a complete lattice and $\Gamma$ is monotonic, by the Knaster-Tarski theorems, the (non-empty) set of fixed points of $\Gamma$ forms a complete lattice. From now on, we denote by
$\nu \Gamma$ the greatest fixed point of $\Gamma$.

	The following theorem states that game determinacy is preserved when Player 2 is restricted to play fair strategies. Furthermore, the value of the game is given by the greatest fixed point of $\Gamma$.
\begin{theorem}\label{th:game-determinacy} Let $\StochG$ be a stochastic game stopping under fair strategies. It holds that:
\[\adjustlimits
	\inf_{\strat{2} \in \FairStrats{2}} \sup_{\strat{1} \in \Strategies{1}} \Expect{\strat{1}}{\strat{2}}_{\StochG,v}[\Rewards] = \adjustlimits \sup_{\strat{1} \in \Strategies{1}}   \inf_{\strat{2} \in \FairStrats{2}}  \Expect{\strat{1}}{\strat{2}}_{\StochG,v}[\Rewards] = \nu \Gamma(v)
\]
\end{theorem}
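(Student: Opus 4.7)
The plan is to establish the chain
\[
\nu\Gamma(v) \leq \sup_{\strat{1} \in \Strategies{1}} \inf_{\strat{2} \in \FairStrats{2}} \Expect{\strat{1}}{\strat{2}}_{\StochG,v}[\Rewards] \leq \inf_{\strat{2} \in \FairStrats{2}} \sup_{\strat{1} \in \Strategies{1}} \Expect{\strat{1}}{\strat{2}}_{\StochG,v}[\Rewards] \leq \nu\Gamma(v),
\]
where the middle inequality is the standard weak-duality fact $\sup\inf\leq\inf\sup$, and the two outer ones are obtained by exhibiting near-optimal memoryless strategies for Player~1 and Player~2, respectively. Chaining these then collapses both sup-inf and inf-sup onto $\nu\Gamma(v)$, simultaneously yielding determinacy and identifying the value.

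For the leftmost inequality, I would pick a deterministic memoryless strategy $\strat{1}^* \in \DetMemorylessStrats{1}$ by selecting, for every $v \in V_1 \setminus T$, a successor $\strat{1}^*(v) \in \argmax_{v' \in \post(v)}(\reward(v) + \nu\Gamma(v'))$ attaining the max in $\Gamma(\nu\Gamma)(v) = \nu\Gamma(v)$. Fix any fair $\strat{2} \in \FairStrats{2}$ and, on the Markov chain $\StochG^{\strat{1}^*,\strat{2}}$, consider the process $M_n = \nu\Gamma(\omega_n) + \sum_{i<n}\reward(\omega_i)$. The fixed-point identity combined with the choice of $\strat{1}^*$ gives $\mathbb{E}[M_{n+1} \mid \omega_0 \dots \omega_n] \geq M_n$: equality at Player~1 states (by argmax) and at chance states (by the averaging rule of $\Gamma$), and the $\geq$ inequality at Player~2 states (because any move from $v \in V_2$ is at least the min in $\Gamma$). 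The stopping-under-fairness hypothesis together with $\strat{2} \in \FairStrats{2}$ guarantees that the hitting time $\tau$ of $T$ is almost surely finite, and boundedness $\nu\Gamma \leq \UpperBound$ allows dominated convergence / optional stopping to conclude $\nu\Gamma(v) = \mathbb{E}[M_0] \leq \mathbb{E}[M_\tau] = \Expect{\strat{1}^*}{\strat{2}}_{\StochG,v}[\Rewards]$, using that $\nu\Gamma$ vanishes on $T$.

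For the rightmost inequality, the symmetric candidate $\strat{2}^*(v) \in \argmin_{v' \in \post(v)}(\reward(v)+\nu\Gamma(v'))$ would make the corresponding $M_n$ a supermartingale, but a deterministic argmin need not be fair: it may trap plays inside an end component that never reaches $T$. This is the main obstacle. To circumvent it I would perturb $\strat{2}^*$ uniformly: for $\epsilon > 0$, let $\strat{2}^\epsilon(v)$ place mass $1-\epsilon$ on $\strat{2}^*(v)$ and distribute $\epsilon$ uniformly over the remaining successors in $\post(v)$. Then $\strat{2}^\epsilon$ has full support on every $V_2$-state and is therefore fair, while Theorem~\ref{th:memoryless-strat-p2-bounded-expectation} ensures finite expected reward against any strategy of Player~1. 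A best response of Player~1 in the one-player MDP $\StochG^{\strat{2}^\epsilon}$ can be taken deterministic memoryless, so $\sup_{\strat{1}}\Expect{\strat{1}}{\strat{2}^\epsilon}_{\StochG,v}[\Rewards]$ depends continuously on the transition matrix determined by $\strat{2}^\epsilon$; exploiting Theorem~\ref{th:games-are-bounded} to bound the expected number of visits to non-terminal states uniformly in $\epsilon$, a routine perturbation estimate yields $\sup_{\strat{1}}\Expect{\strat{1}}{\strat{2}^\epsilon}_{\StochG,v}[\Rewards] \leq \nu\Gamma(v) + O(\epsilon)$. Taking $\epsilon \to 0$ gives $\inf_{\strat{2}\in\FairStrats{2}}\sup_{\strat{1}}\Expect{\strat{1}}{\strat{2}}_{\StochG,v}[\Rewards] \leq \nu\Gamma(v)$.

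Combining the three inequalities sandwiches both values at $\nu\Gamma(v)$, establishing determinacy and the Bellman characterization simultaneously. The Player~1 side reduces to a clean submartingale computation using stopping-under-fairness, whereas the real difficulty lies on the Player~2 side, where the argmin selection must be reconciled with the almost-sure fairness constraint; the perturbation-to-full-support trick above, quantified via the uniform visit bound of Theorem~\ref{th:games-are-bounded}, is what I expect to do the genuine work of the proof.
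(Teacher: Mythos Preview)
Your lower bound ($\nu\Gamma(v)\leq\sup\inf$) via the argmax strategy $\strat{1}^*$ and a submartingale/optional-stopping computation is essentially what the paper does (the paper phrases it as ``$\nu\Gamma(v)\leq\Expect{\starredstrat{1}}{\strat{2}}_{\StochG,v}[\Rewards]$ for every memoryless fair $\strat{2}$, then invoke Theorem~\ref{th:reduce-to-memoryless}'', but the content is the same).

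The upper bound, however, has a genuine gap. Your $\epsilon$-perturbation argument relies on bounding the expected number of visits to non-terminal states \emph{uniformly in $\epsilon$}, and you cite Theorem~\ref{th:games-are-bounded} for this. That theorem gives a finite bound for each fixed memoryless fair $\strat{2}$, but not a uniform one; in fact the bound can blow up as $\epsilon\to 0$. Take $V_2=\{v_0,v_1\}$, a terminal $t$, $\post(v_0)=\{v_1\}$, $\post(v_1)=\{v_0,t\}$, $\reward(v_0)=0$, $\reward(v_1)=1$. Then $\nu\Gamma(v_0)=\nu\Gamma(v_1)=1$, and at $v_1$ both successors lie in the argmin. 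If your $\strat{2}^*$ picks $v_0$, then under $\strat{2}^\epsilon$ the expected number of visits to $v_1$ is $1/\epsilon$ and $\sup_{\strat{1}}\Expect{\strat{1}}{\strat{2}^\epsilon}_{\StochG,v_1}[\Rewards]=1/\epsilon\to\infty$, not $\nu\Gamma(v_1)+O(\epsilon)$. So ``routine perturbation estimate'' does not go through without an additional, non-trivial tie-breaking rule that already guarantees $\strat{2}^*$ is fair --- at which point the perturbation is unnecessary.

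The paper sidesteps this entirely: rather than perturbing an argmin strategy, it observes that the quantity $\inf_{\strat{2}\in\DetMemorylessFairStrats{2}}\sup_{\strat{1}\in\DetMemorylessStrats{1}}\Expect{\strat{1}}{\strat{2}}_{\StochG,v}[\Rewards]$ is itself a fixed point of $\Gamma$ (this uses Theorem~\ref{th:infima-in-dmf}, whose proof in the appendix constructs a deterministic memoryless \emph{fair} optimal $\strat{2}$ via a shortest-path-to-$T$ tie-break inside the argmin set). Being a fixed point in $[0,\UpperBound]^V$, it is dominated by $\nu\Gamma$, and the chain $\sup\inf\leq\inf\sup\leq\inf_{\DetMemorylessFairStrats{2}}\sup_{\DetMemorylessStrats{1}}\leq\nu\Gamma$ closes. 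The hard work you correctly identified --- reconciling the argmin with fairness --- is thus done once and for all inside Theorem~\ref{th:infima-in-dmf}, not by a limiting argument.
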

\begin{proof}
  First, note that   $\inf_{\strat{2} \in \DetMemorylessFairStrats{2}}  \sup_{\strat{1} \in \DetMemorylessStrats{1}} \Expect{\strat{1}}{\strat{2}}_{\StochG,v}[\Rewards]$
  is a fixed point of $\Gamma$.  Thus we have:
   \begin{align*}	
       % \label{th:game-determinacy-eq1-l1}
        \adjustlimits \sup_{\strat{1} \in \Strategies{1}}   \inf_{\strat{2} \in \FairStrats{2}}  \Expect{\strat{1}}{\strat{2}}_{\StochG,v}[\Rewards]
        & \leq \adjustlimits \inf_{\strat{2} \in \FairStrats{2}} \sup_{\strat{1} \in \Strategies{1}} \Expect{\strat{1}}{\strat{2}}_{\StochG,v}[\Rewards] \\
      % \label{th:game-determinacy-e1-l2}
        &   \leq \adjustlimits  \inf_{\strat{2} \in \DetMemorylessFairStrats{2}}  \sup_{\strat{1} \in \DetMemorylessStrats{1}} \Expect{\strat{1}}{\strat{2}}_{\StochG,v}[\Rewards] 
         \leq   \nu \Gamma(v) 
  \end{align*} 
  %First, note that $\inf_{\strat{2} \in \FairStrats{2}} \sup_{\strat{1} \in \Strategies{1}} \Expect{\strat{1}}{\strat{2}}_{\StochG,v}[\Rewards]$ is a fixed point of $\Gamma$. 
  %Thus we have:
  %\[
  %\adjustlimits
  %\sup_{\strat{1} \in \Strategies{1}}   \inf_{\strat{2} \in \FairStrats{2}}  \Expect{\strat{1}}{\strat{2}}_{\StochG,v}[\Rewards]
  %\leq 
  %\adjustlimits \inf_{\strat{2} \in \FairStrats{2}} \sup_{\strat{1} \in \Strategies{1}} \Expect{\strat{1}}{\strat{2}}_{\StochG,v}[\Rewards]
  %\leq
  %\inf_{\strat{2} \in \DetMemorylessFairStrats{2}}  \sup_{\strat{1} \in \DetMemorylessStrats{1}} \Expect{\strat{1}}{\strat{2}}_{\StochG,v}[\Rewards]
  %\leq
  %\nu \Gamma(v)
  %\]
  for any $v$. The first inequality is a standard property of suprema and infima \cite{Kucera2011}, the second inequality holds because  
  $\DetMemorylessFairStrats{2} \subseteq \FairStrats{2}$ and standard properties of MDPs: by fixing a deterministic memoryless  
  fair strategy for Player $2$ we obtain a transient MDP, the optimal strategy for Player $1$ in this MDP is obtained via a deterministic memoryless strategy \cite{Kallenberg83}, 
  the last inequality holds because  $\inf_{\strat{2} \in \DetMemorylessFairStrats{2}}  \sup_{\strat{1} \in \DetMemorylessStrats{1}} \Expect{\strat{1}}{\strat{2}}_{\StochG,v}[\Rewards]$ is 
  a fixed point of $\Gamma$. 
  
  Rest to prove that $\sup_{\strat{1} \in \Strategies{1}}   \inf_{\strat{2} \in \FairStrats{2}}  \Expect{\strat{1}}{\strat{2}}_{\StochG,v}[\Rewards] \geq \nu \Gamma(v)$. Note that, if there is  $\strat{1} \in \Strategies{1}$ such that
  $\inf_{\strat{2} \in \FairStrats{2}}  \Expect{\strat{1}}{\strat{2}}_{\StochG,v}[\Rewards] \geq \nu \Gamma(v)$ the property above follows by properties of supremum. Consider the strategy $\starredstrat{1}$ defined as follows:
  $\starredstrat{1}(v) \in \argmax \{\nu \Gamma(v') + \reward(v) \mid v' \in \post(v) \}$. Note that $\starredstrat{1}$ is a memoryless strategy. For any memoryless and fair strategy $\strat{2} \in \MemorylessFairStrats{2}$ we have
  $\nu \Gamma(v) \leq \Expect{\starredstrat{1}}{\strat{2}}_{\StochG,v}[\Rewards]$ (by definition of $\Gamma$).  Thus,
  $\nu \Gamma(v) \leq \inf_{\strat{2} \in \MemorylessFairStrats{2}} \Expect{\starredstrat{1}}{\strat{2}}_{\StochG,v}[\Rewards]$
  and therefore:
  $\nu \Gamma(v) \leq \sup_{\strat{1} \in \MemorylessStrats{1}} \inf_{\strat{2} \in \MemorylessFairStrats{1}} \Expect{\strat{1}}{\strat{2}}_{\StochG,v}[\Rewards]$.
  Finally, by Theorem \ref{th:reduce-to-memoryless} we get:
  $\nu \Gamma(v) \leq \sup_{\strat{1} \in \Strategies{1}} \inf_{\strat{2} \in \FairStrats{2}} \Expect{\strat{1}}{\strat{2}}_{\StochG,v}[\Rewards]$.
  \qed
\end{proof}

%\subsection{Considerations for an algorithmic solution}
\paragraph{Considerations for an algorithmic solution.}

%\remarkPRD{Revisar la parte en azul que es nueva. La idea es darle a
%  estos ultimos p\'arrafos el status de una subsecci\'on aunque sea
%  peque\~na}
%% We end this section with a discussion about possible algorithms to initially bound the greatest fixed point.
Value iteration \cite{Bellman57} has been used to compute maximum/minimum expected accumulated reward in MDPs, e.g. in the {\Prism} model checker.  Usually, the value is computed by approximating the least fixed point from below using the Bellman equations \cite{Bellman57}. In~\cite{DBLP:conf/cav/Baier0L0W17}, the authors propose to approach these values from both a lower and an upper bound (known as interval iteration \cite{DBLP:journals/tcs/HaddadM18}). To do so,  \cite{DBLP:conf/cav/Baier0L0W17} shows a technique for computing upper bounds for the expected total rewards for MDPs.

The above defined functional $\Gamma$ presents a form of Bellman equations that enables a value iteration algorithm to solve our games.  As we are looking for the greatest fixed point, we need to start with some value vector larger than such a fixed point.
If we take our game and fix a fair strategy for the environment, we obtain an MDP. We can then use the techniques presented in \cite{DBLP:conf/cav/Baier0L0W17} to find an upper bound in this MDP, which in turn is an upper bound in the original game. The obvious fair strategy to use is the one based on the uniform distribution (as in Theorem \ref{thm:uniform-prob}).  We have implemented these ideas in a prototype tool as described in the next section.

\section{Experimental Evaluation} \label{sec:experimental_eval}

\definecolor{Gray}{gray}{0.9}
\definecolor{Gray2}{gray}{0.95}
\newcolumntype{g}{>{\columncolor{Gray}}c}
\newcolumntype{h}{>{\columncolor{Gray2}}c}

In order to evaluate the viability of our approach we have  implemented a prototype tool named \textsf{SynthFairy} (available at \cite{SynthFairy})
and run it on two different sets of examples. %\footnote{Available at \cite{SynthFairy}}. 
The tool takes as input a model describing the game and the state rewards and returns as output 
the optimal expected total reward  for a given  initial state as well as the synthesized optimal controller strategy (under fairness assumptions). 
The input models are specified using a language resembling {\Prism} notation \cite{DBLP:conf/cav/KwiatkowskaNP11}. The experimental  evaluation shows that our approach can cope with  case studies of reasonable complexity.  For computing these values we set a relative error of at most $\varepsilon = 10^{-6}$.
%We have considered variations of two examples:
%For the model presented in Sec.~\ref{sec:mot_example}, the expected total reward when $\verb"P"=0.1$ and $\verb"Q"=0$ is $5.55$. Moreover, 
%the non-trivial part of the synthesized strategy is illustrated with the black arrows in Fig.~\ref{fig:robot_game_grid}.

\paragraph{Roborta vs. the Fair Light.}  Table~\ref{table:resultsRobot} shows the results of the example introduced in Sec.~\ref{sec:mot_example} 
for multiple configurations. 
We considered three variants of the case study: version A (the light does not fail), version B (the light can only fail when trying to signal a green light), and version C (the light can fail when trying to signal any kind of light). 
We assumed that, when Roborta  fails,  she cannot move (this is beneficial to Roborta since she can re-collect the reward);
when the light fails, the robot can freely move into any allowed direction. 
The grid configuration (movement restrictions and rewards) are randomly generated.  For each setting, Table~\ref{table:resultsRobot} describes the results for three different scenarios generated starting at different seeds.  For the grid configuration shown in Sec.~\ref{sec:mot_example} with  parameters $P=0.1$ and $Q=0$, the tool derived the optimal strategy depicted in Fig.~\ref{fig:robot_game_grid} and reports an expected total reward of $5.55$. 

%We explain Table~\ref{table:resultsRobot} with an example.  Take the case of the grid $A \mid {60{\times}8} \mid \text{seed } 1$, with the robot fault probability being $0.1$, the optimal expected total reward is $26.66$ and the number of decisions made by the robot when both players play optimally is $2$.  We consider a decision as a choice resolution introduced when the yellow light is on on a cell without movement restrictions or if the light is off. We use this number to give us an idea that resolutions are not trivial, and to hint a different resolution for closely related instances.
%Indeed, notice that for the case mentioned above, when the fault probability is set to $0.9$ (instead of $0.1$), the number of decisions changes to $0$. 
%This suggests that the strategy of the environment for lower fault probabilities was not worth it anymore, thereby it finds a better strategy. 
% Figure follows ------------------

%\caption{A robot game on a $4 \times 4$ grid starting on location (0,0). Each location has an assigned positive reward. The sideway movement restrictions are depicted as white arrows on the bottom-right of each location. The best strategy for the robot when the light is yellow is shown in black arrows on the top-right of the locations without movement restrictions. The path taken by the robot to achieve the goal is highlighted in yellow} \label{fig:robot_game_grid}

%
\paragraph{Autonomous UAV vs. Human Operator.} We adapted the case study analyzed in  \cite{DBLP:conf/iccps/FengWHT15}. A remotely controlled Unmanned Aerial Vehicle (UAV) is used to perform intelligence, surveillance, and reconnaissance (ISR) missions over a road network. The UAV performs piloting functions autonomously (selecting a path to fly between \emph{waypoints}). The human operator (environment) controls the onboard sensor to capture imagery at a waypoint as well as the piloting functions on certain waypoints (called checkpoints). Note that an operator can continuously try to get a better image by making the UAV loiter around a certain waypoint, this may lead to an unfair behavior.
%We assign rewards to a successful capture of an unvisited waypoint.
Each successful capture from an unvisited waypoint grants a reward.
\begin{wrapfigure}[11]{r}{48mm}
%\vspace{-10mm}
\vspace{-11mm}
%\fontsize{6.6}{6.6}\selectfont\ttfamily
\centering
\includegraphics[scale=0.60]{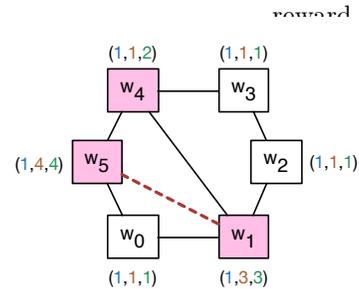}\hspace{1.5em}\mbox{}
\vspace{-10mm}
\caption{UAV Network for ISR missions adapted from \cite{DBLP:conf/iccps/FengWHT15}} \label{fig:uav_game_map}
\end{wrapfigure}
Fig.~\ref{fig:uav_game_map} shows an example of road network consisting of six surveillance waypoints labeled $w_0,w_2,...,w_5$, the edges represent connecting paths, a red-dashed line means that the path is dangerous enough to make the UAV stop working with probability $1$, while on any other path, this probability is $S$. Checkpoints are depicted as pink nodes, therein the operator can still delegate the piloting task to the UAV with probability $D$. Each node is annotated with three possible rewards. For instance,  for $S=0.3$ and $D=0.5$ and the leftmost reward values in each triple, the synthesized strategy for the UAV tries to follow the optimal circuit $w_0,w_1,w_2,w_3,w_4,w_5$. While for  the middle and  rightmost reward values, the optimal circuits to follow are $w_0,w_5,w_0,w_1,w_2,w_3,w_4$ and $w_0,w_5,w_4,w_1,w_2,w_3$, respectively.   Table~\ref{table:resultsUAV} shows the results obtained for  this game for several randomly generated road networks.

All the experiments were run on a MacBook Air with Intel Core i5 at 1.3 GHz and 4 Gb of RAM. The tool and case studies are available in the tool repository.

\begin{table}[t]
  \centering\noindent%
  %\vspace{-0.5cm}
  \caption{Results for Roborta vs. Light Game.  First column describes the grid size.  Second column indicates  the fault probability for the robot ($P$) and light ($Q$). 
  The other columns describe the size of the model, the expected total reward for the optimal strategy, and the computation  time in seconds, respectively,
  for three different randomly generated grid configurations.}
  \label{table:resultsRobot}
  %\vspace{-0.8cm}
  \scalebox{0.9}{
    \begin{tabular}{|c|c|c|c|c|c|g|h|c|g|h|c|}
      \hline
      \multirow{2}{*}{Version} & \multicolumn{2}{c|}{Fault prob.} & \multicolumn{3}{c|}{Size (States/Transitions)} & \multicolumn{3}{c|}{Opt.\ Expect.\ Total Rew.} & \multicolumn{3}{c|}{Time (in sec.)} \\ \hhline{|~|-|-|-|-|-|-|-|-|-|-|-|}
      & $P$ & $Q$ &  \cellcolor{white}s.\ 1 & \cellcolor{white}s.\ 2 & \cellcolor{white}s.\ 3 & \cellcolor{white}\makebox[3.4em][c]{s.\ 1} & \cellcolor{white}\makebox[3.4em][c]{s.\ 2} & \cellcolor{white}\makebox[3.4em][c]{s.\ 3} & \cellcolor{white}\makebox[1.8em][c]{s.\ 1} & \cellcolor{white}\makebox[1.8em][c]{s.\ 2} & \cellcolor{white}\makebox[1.8em][c]{s.\ 3} \\  
      \hline
       \multirow{2}{3em}{\centering $A$ \\ $60{\times}8$}
       & $0.1$ & $-$ & \multirow{2}{*}{\centering $\displaystyle \begin{array}{c} \text{st. }2614 \\ \text{tr. }4418 \end{array}$} & \multirow{2}{*}{\centering $\displaystyle \begin{array}{c} \text{st. }2540 \\ \text{tr. }4266 \end{array}$}  & \multirow{2}{*}{\centering $\displaystyle \begin{array}{c} \text{st. }2552 \\ \text{tr. }4295 \end{array}$}  & $26.66$ & $31.11$ & $27.77$ & $10$ & $9$ & $12$\\ \hhline{|~|-|-|~|~|~|-|-|-|-|-|-|}
       & $0.5$ & $-$ & & & & $48$ & $56$ & $50$ & $14$ & $16$ & $12$\\ \hline

       \multirow{2}{3em}{\centering $A$ \\ $120{\times}16$}
       & $0.1$ & $-$ & \multirow{2}{*}{\centering $\displaystyle \begin{array}{c} \text{st. }10161 \\ \text{tr. }17125 \end{array}$} & \multirow{2}{*}{\centering $\displaystyle \begin{array}{c} \text{st. }10217 \\ \text{tr. }17223 \end{array}$}  & \multirow{2}{*}{\centering $\displaystyle \begin{array}{c} \text{st. }10249 \\ \text{tr. }17319 \end{array}$}  & $62.22$ & $55.55$ & $48.88$ & $99$ & $116$ & $96$\\ \hhline{|~|-|-|~|~|~|-|-|-|-|-|-|}
       & $0.5$ & $-$ & & & & $112$ & $100$ & $88$ & $133$ & $149$ & $137$\\ \hline

       \multirow{4}{3em}{\centering $B$ \\ $60{\times}8$}
       & \multirow{2}{*}{$0.1$} & $0.1$ & \multirow{4}{*}{\centering $\displaystyle \begin{array}{c} \text{st. }4724 \\ \text{tr. }8788 \end{array}$} & \multirow{4}{*}{\centering $\displaystyle \begin{array}{c} \text{st. }4586 \\ \text{tr. }8484 \end{array}$}  & \multirow{4}{*}{\centering $\displaystyle \begin{array}{c} \text{st. }4609 \\ \text{tr. }8542 \end{array}$}  & $42.6$ & $44.59$ & $42.23$ & $20$ & $24$ & $21$\\ \hhline{|~|~|-|~|~|~|-|-|-|-|-|-|}
       & & $0.5$ & & & & $130.14$ & $127.7$ & $136.22$ & $32$ & $33$ & $30$\\ \hhline{|~|-|-|~|~|~|-|-|-|-|-|-|}
       & \multirow{2}{*}{$0.5$} & $0.1$ & & & & $76.68$ & $80.26$ & $76.02$ & $32$ & $31$ & $32$\\ \hhline{|~|~|-|~|~|~|-|-|-|-|-|-|}
       & & $0.5$ & & & & $234.26$ & $229.87$ & $245.21$ & $53$ & $50$ & $56$\\ \hline

       \multirow{4}{3em}{\centering $B$ \\ $120{\times}16$}
       & \multirow{2}{*}{$0.1$} & $0.1$ & \multirow{4}{*}{\centering $\displaystyle \begin{array}{c} \text{st. }18384 \\ \text{tr. }34154 \end{array}$} & \multirow{4}{*}{\centering $\displaystyle \begin{array}{c} \text{st. }18486 \\ \text{tr. }34350 \end{array}$}  & \multirow{4}{*}{\centering $\displaystyle \begin{array}{c} \text{st. }18550 \\ \text{tr. }34542 \end{array}$}  & $91.19$ & $87.27$ & $80.07$ & $260$ & $250$ & $277$\\ \hhline{|~|~|-|~|~|~|-|-|-|-|-|-|}
       & & $0.5$ & & & & $281.83$ & $281.48$ & $265.33$ & $375$ & $366$ & $359$\\ \hhline{|~|-|-|~|~|~|-|-|-|-|-|-|}
       & \multirow{2}{*}{$0.5$} & $0.1$ & & & & $164.15$ & $157.1$ & $144.13$ & $389$ & $378$ & $376$\\ \hhline{|~|~|-|~|~|~|-|-|-|-|-|-|}
       & & $0.5$ & & & & $507.30$ & $506.67$ & $477.6$ & $555$ & $553$ & $535$\\ \hline

       \multirow{4}{3em}{\centering $C$ \\ $60{\times}8$}
       & \multirow{2}{*}{$0.1$} & $0.1$ & \multirow{4}{*}{\centering $\displaystyle \begin{array}{c} \text{st. }5212 \\ \text{tr. }9764 \end{array}$} & \multirow{4}{*}{\centering $\displaystyle \begin{array}{c} \text{st. }5064 \\ \text{tr. }9440 \end{array}$}  & \multirow{4}{*}{\centering $\displaystyle \begin{array}{c} \text{st. }5088 \\ \text{tr. }9500 \end{array}$}  & $46.32$ & $47.07$ & $44.87$ & $23$ & $22$ & $23$\\ \hhline{|~|~|-|~|~|~|-|-|-|-|-|-|}
       & & $0.5$ & & & & $143.35$ & $146.41$ & $153.98$ & $34$ & $32$ & $34$\\ \hhline{|~|-|-|~|~|~|-|-|-|-|-|-|}
       & \multirow{2}{*}{$0.5$} & $0.1$ & & & & $83.37$ & $84.73$ & $80.77$ & $36$ & $34$ & $34$\\ \hhline{|~|~|-|~|~|~|-|-|-|-|-|-|}
       & & $0.5$ & & & & $258.04$ & $263.53$ & $277.17$ & $54$ & $51$ & $52$\\ \hline

       \multirow{4}{3em}{\centering $C$ \\ $120{\times}16$}
       & \multirow{2}{*}{$0.1$} & $0.1$ & \multirow{4}{*}{\centering $\displaystyle \begin{array}{c} \text{st. }20290 \\ \text{tr. }37966 \end{array}$} & \multirow{4}{*}{\centering $\displaystyle \begin{array}{c} \text{st. }20402 \\ \text{tr. }38182 \end{array}$}  & \multirow{4}{*}{\centering $\displaystyle \begin{array}{c} \text{st. }20466 \\ \text{tr. }38374 \end{array}$}  & $98.25$ & $93.74$ & $88.33$ & $297$ & $297$ & $331$\\ \hhline{|~|~|-|~|~|~|-|-|-|-|-|-|}
       & & $0.5$ & & & & $321.18$ & $317.61$ & $311.62$ & $405$ & $403$ & $496$\\ \hhline{|~|-|-|~|~|~|-|-|-|-|-|-|}
       & \multirow{2}{*}{$0.5$} & $0.1$ & & & & $176.85$ & $168.73$ & $158.99$ & $440$ & $426$ & $527$\\ \hhline{|~|~|-|~|~|~|-|-|-|-|-|-|}
       & & $0.5$ & & & & $578.13$ & $571.71$ & $560.92$ & $606$ & $640$ & $723$\\ 

      \hline
    \end{tabular}
  }
  %\vspace{-0.8cm}
\end{table}

\begin{table}[t]
  \centering\noindent%
  %\vspace{-0.5cm}
  \caption{Results for the UAV vs. Operator Game. First column describes the number of waypoints used.  Second column indicates probability of delegation ($D$), and the probability that the UAV stops working ($S$). 
  The other columns show the size of the model, the expected total reward for the optimal strategy, and the computation time in seconds, respectively, for three different randomly generated roadmap configurations.}
  \label{table:resultsUAV}
  %\vspace{-0.8cm}
  \scalebox{0.9}{
    \begin{tabular}{|c|c|c|c|c|c|g|h|c|g|h|c|}
      \hline
      \multirow{2}{*}{Version} & \multicolumn{2}{c|}{Prob.} & \multicolumn{3}{c|}{Size(States/Transitions)} & \multicolumn{3}{c|}{Opt.\ Expect.\ Total Rew.} & \multicolumn{3}{c|}{Time (in sec.)} \\ \hhline{|~|-|-|-|-|-|-|-|-|-|-|-|}
      & $D$ & $S$ &  \cellcolor{white}s.\ 1 & \cellcolor{white}s.\ 2 & \cellcolor{white}s.\ 3 & \cellcolor{white}\makebox[3.4em][c]{s.\ 1} & \cellcolor{white}\makebox[3.4em][c]{s.\ 2} & \cellcolor{white}\makebox[3.4em][c]{s.\ 3} & \cellcolor{white}\makebox[1.8em][c]{s.\ 1} & \cellcolor{white}\makebox[1.8em][c]{s.\ 2} & \cellcolor{white}\makebox[1.8em][c]{s.\ 3} \\  
      \hline
       \multirow{4}{3em}{\centering UAV \\ $6w.$}
       & \multirow{2}{*}{$0.1$} & $0.05$ & \multirow{4}{*}{\centering $\displaystyle \begin{array}{c} \text{st. }501 \\ \text{tr. }857 \end{array}$} & \multirow{4}{*}{\centering $\displaystyle \begin{array}{c} \text{st. }1148 \\ \text{tr. }2166 \end{array}$} & \multirow{4}{*}{\centering $\displaystyle \begin{array}{c} \text{st. }302 \\ \text{tr. }508 \end{array}$} & $16.72$ & $12.47$ & $13.14$ & $8$ & $25$ & $1$\\ \hhline{|~|~|-|~|~|~|-|-|-|-|-|-|}
       & & $0.1$ & & & & $15.73$ & $11.15$ & $12.63$ & $5$ & $14$ & $2$\\ \hhline{|~|-|-|~|~|~|-|-|-|-|-|-|}
       & \multirow{2}{*}{$0.5$} & $0.05$ & & & & $20.49$ & $12.77$ & $17.05$ & $7$ & $22$ & $3$\\ \hhline{|~|~|-|~|~|~|-|-|-|-|-|-|}
       & & $0.1$ & & & & $18.8$ & $11.67$ & $15.95$ & $7$ & $13$ & $2$\\ \hline

       \multirow{4}{3em}{\centering UAV \\ $8w.$}
       & \multirow{2}{*}{$0.1$} & $0.05$ & \multirow{4}{*}{\centering $\displaystyle \begin{array}{c} \text{st. }5335 \\ \text{tr. }9734 \end{array}$} & \multirow{4}{*}{\centering $\displaystyle \begin{array}{c} \text{st. }8289 \\ \text{tr. }15642 \end{array}$} & \multirow{4}{*}{\centering $\displaystyle \begin{array}{c} \text{st. }3167 \\ \text{tr. }5723 \end{array}$} & $17.88$ & $40.59$ & $24.6$ & $47$ & $183$ & $51$\\ \hhline{|~|~|-|~|~|~|-|-|-|-|-|-|}
       & & $0.1$ & & & & $17.11$ & $34.3$ & $21.48$ & $63$ & $128$ & $31$\\ \hhline{|~|-|-|~|~|~|-|-|-|-|-|-|}
       & \multirow{2}{*}{$0.5$} & $0.05$ & & & & $26$ & $42.21$ & $30.87$ & $67$ & $209$ & $41$\\ \hhline{|~|~|-|~|~|~|-|-|-|-|-|-|}
       & & $0.1$ & & & & $23.44$ & $36.08$ & $24.72$ & $51$ & $114$ & $27$\\ \hline

       \multirow{4}{3em}{\centering UAV \\ $10w.$}
       & \multirow{2}{*}{$0.1$} & $0.05$ & \multirow{4}{*}{\centering $\displaystyle \begin{array}{c} \text{st. }15775 \\ \text{tr. }28257 \end{array}$} & \multirow{4}{*}{\centering $\displaystyle \begin{array}{c} \text{st. }11212 \\ \text{tr. }20624 \end{array}$} & \multirow{4}{*}{\centering $\displaystyle \begin{array}{c} \text{st. }21153 \\ \text{tr. }39348 \end{array}$} & $39.76$ & $28.7$ & $19.76$ & $659$ & $656$ & $892$\\ \hhline{|~|~|-|~|~|~|-|-|-|-|-|-|}
       & & $0.1$ & & & & $35.43$ & $23.36$ & $16.2$ & $389$ & $261$ & $586$\\ \hhline{|~|-|-|~|~|~|-|-|-|-|-|-|}
       & \multirow{2}{*}{$0.5$} & $0.05$ & & & & $42.13$ & $30.77$ & $24.56$ & $644$ & $421$ & $891$\\ \hhline{|~|~|-|~|~|~|-|-|-|-|-|-|}
       & & $0.1$ & & & & $37.11$ & $26.08$ & $19.27$ & $412$ & $261$ & $592$\\ \hline

      \hline
    \end{tabular}
  }
  %\vspace{-0.8cm}
\end{table}

\section{Related Work} \label{sec:related_work}
	Stochastic games with payoff functions have been extensively investigated in the literature. In \cite{FilarV96}, several results are presented about \emph{transient games},
a generalized version of stopping stochastic games with total reward payoff. 
In transient games, both players possess optimal (memoryless and deterministic) strategies. 
Most importantly, the games are determined and their value can be computed as the least fixed point of a set of equations. 
Most of these results are based on the fact that the $\Gamma$ functional (see Section \ref{sec:fair-strats}) for transient games has a unique fixed point.
Notice that in this paper we have dealt with games that are stopping only under fairness assumptions. Thus, the corresponding functional 
may have several fixed points. Hence, the main results presented in \cite{FilarV96} do not apply to our setting.

\cite{DBLP:journals/fmsd/ChenFKPS13} and \cite{SvorenovaKwiatkowska16} present logical frameworks for the verification and synthesis of systems.  While~\cite{DBLP:journals/fmsd/ChenFKPS13} provides a solution for a  probabilistic branching temporal logic extended with expected total, discounted, and average reward objective functions, \cite{SvorenovaKwiatkowska16} does the same in a similar extension of a probabilistic linear temporal logic. Both frameworks were implemented in the tool \Prism~\cite{DBLP:conf/cav/KwiatkowskaNP11}. Although a vast class of properties can be expressed in these frameworks, none of them are presented under fair environments.  In fact, these works are on stochastic multiplayer games in which each player is treated equally.
%% \cite{DBLP:journals/fmsd/ChenFKPS13} and \cite{SvorenovaKwiatkowska16} present logical frameworks for the verification and synthesis of systems.  While~\cite{DBLP:journals/fmsd/ChenFKPS13} presents a framework that combines a probabilistic branching temporal logic with expected total, discounted, and average reward objective functions, \cite{SvorenovaKwiatkowska16} does the same in the context of a probabilistic linear temporal logic. Both frameworks were implemented in the tool \Prism~\cite{DBLP:conf/cav/KwiatkowskaNP11}. Although a vast class of properties can be expressed in these frameworks, none of them are presented under fair environments.  In fact, these works are on stochastic multiplayer games in which each player is treated equally.

	\emph{Stochastic shortest path games} \cite{PatekBertsekas99} are two-player stochastic games with (negative or positive) rewards in which the minimizer's strategies are classified into \emph{proper} and \emph{improper},
proper strategies are those ensuring termination.  As proven in \cite{PatekBertsekas99}, these games are determined, and both players posses memoryless optimal strategies.  To prove these results,  the authors assume that the expected game value for improper strategies is $\infty$, this ensures that the corresponding functional is a contraction and thus it has a unique fixpoint. In contrast, we restrict ourselves to non-negative rewards but we do not make any assumptions over unfair strategies, as mentioned above the corresponding functional for our games may have several fixpoints. Furthermore, we proved that the value of the game is given by the greatest fixpoint of $\Gamma$.  In recent years, several authors have investigated stochastic shortest path problems for MDPs (i.e,  one-player games), where the assumption over improper strategies  is relaxed (e.g., \cite{DBLP:conf/lics/Baier0DGS18}); to the best of our knowledge, these results have not be extended to two-player games.  

%% In \cite{SvorenovaKwiatkowska16}, a logical framework is presented for the verification and synthesis of systems. This framework combines probabilistic linear temporal logic with expected total, discounted, and average reward objective functions. 	The framework is implemented in the tool \Prism~\cite{DBLP:conf/cav/KwiatkowskaNP11}. Although a vast class of properties can be expressed in this framework, fairness assumptions over the environment are not dealt with in this line of research.
	
	In \cite{DBLP:conf/ifipTCS/BolligC04} the authors tackle the problem of synthesizing a controller that maximizes the probability of satisfying an {\LTL} property. Fairness strategies are used to reduce this problem to the synthesis of a  controller maximizing a {\PCTL} property over a product game. However, this article does not address expected rewards and game determinacy under fairness assumptions. 
	
	Interestingly, in \cite{DBLP:conf/fossacs/AsarinCV10} the authors consider the problem of winning a (non-stochastic) two-player games with fairness assumptions over the environment. The objective of the system is to guarantee an $\omega$-regular property. The authors show that winning in these games is equivalent to almost-sure winning in a Markov decision process. It must be noted that this work only considers non-stochastic games.  Furthermore, payoff functions are not considered therein.

\section{Concluding remarks} \label{sec:conclusions}
In this paper, we have investigated the properties of stochastic games with total reward payoff under the assumption that the minimizer (i.e., the environment) plays only with fair strategies.  %We have proved that determinacy is preserved for stopping games when fairness is assumed. 
We have shown that, in this scenario, determinacy is preserved and  both players have optimal memoryless and deterministic strategies; furthermore,  the value of the game can be calculated by approximating a greatest fixed point of a Bellman operator. We have only considered non-negative rewards in this paper. A possible way of extending the results presented here to games with negative rewards is to adapt the techniques presented in \cite{DBLP:conf/lics/Baier0DGS18} for MDPs with negative costs, we leave this as a further work.

 In order to show the applicability of our technique, we have presented two examples of applications and an experimental evaluation over diverse instances of these case studies using our prototype tool. We believe that fairness assumptions allow one to consider more realistic behavior of the environment. 
As future work, we plan to build a more robust tool and thus model and evaluate other case studies.
	 
We have not investigated other common payoff functions such as discounted payoff or limiting-average payoff. A benefit of these classes of functions is that the value of games are well-defined even when the games are not stopping.  
At first sight, the notion of fairness is little relevant for games with discounted payoff, since these kinds of payoff functions take most of their value from the initial parts of runs. For limiting-average the situation is different, and fairness assumptions may be relevant as they could change the value of games, we leave this as further work.

%%
%% Bibliography
%%

%% Please use bibtex, 

%\newpage
%\bibliographystyle{plain}
\bibliography{references}

%\newpage

\appendix

\spnewtheorem*{proofofclaim}{Proof of claim}{\itshape}{\rmfamily}
\newenvironment{myproof}[1][]{\begingroup\renewcommand{\proofname}{\textbf{Proof #1}}\begin{proof}}{\end{proof}\endgroup}

\section{Proofs}\label{sec:appendix}

\begin{myproof}[of Lemma~\ref{lemma:memoryless-strat}]
  Fix strategies $\strat{1} \in \Strategies{1}$, $\strat{2} \in
  \FairStrats{2}$ and a vertex $v \in V$ such that
  $\Prob{\strat{1}}{\strat{2}}_{\mathcal{G}, v}(\Diamond T) < 1$.
  Note that we can define an MDP (named $\StochG^{V_1{+}V_2}$) from
  $\StochG$ by considering that $V_1$ and $V_2$ belong to the same
  player.  For this MDP consider the set $U = \{ (V',\delta')
  \mid(V',\delta') \in \EndComp(\StochG^{V_1{+}V_2}) \text{ and }
  V'\cap T = \emptyset \}$ of end components not containing terminal
  vertices.  So, we define a strategy for this player combining
  $\strat{1}$ and $\strat{2}$ (named $\strat{1}{+}\strat{2}$) as
  follows: it behaves as $\strat{1}$ in finite paths ending in a state
  of $V_1$, and behaves as $\strat{2}$ in finite paths ending in a
  state of $V_2$. Note that we have
  $\MDPProb{\strat{1}{+}\strat{2}}_{\StochG^{V_1+V_2},v}(\Diamond T) <
  1$. By the limit properties of MDPs (Theorem 10.120 \cite{BaierK08})
  we have that:
  \begin{equation}\label{lemma:memoryless-strat-eq1}
    \MDPProb{\strat{1}{+}\strat{2}}_{\StochG^{V_1+V_2},v} \left(\{ \omega \in \MDPPaths{\strat{1}{+}\strat{2}}_{\StochG^{V_1+V_2},v} \mid \limit(\omega)  \in U \}\right) > 0,
  \end{equation}
  where $\limit(\omega)$ denotes the end-component that is repeated
  infinitely often in $\omega$, as defined in \cite{BaierK08}.  Note
  that, the fairness of $\strat{2}$ and
  (\ref{lemma:memoryless-strat-eq1}) imply that there is a reachable
  end component $\EC{C}= (V',\delta') \in U$ such that, for all
  $v' \in V_2 \cap V'$, $\post^\delta(v') \subseteq V'$ $(\dag)$.
  This
  can be proven by contradiction: if $(\dag)$ does not hold we have
  that for every end component in $U$ there must be some vertex in
  $V_2$ such that some of its successors is not in the component; but,
  since $\strat{2}$ is fair, we should have
  $\MDPProb{\strat{1}{+}\strat{2}}_{\StochG^{V_1+V_2},v} \left ( \{ \omega \in \MDPPaths{\strat{1}{+}\strat{2}}_{\StochG^{V_1+V_2},v} \mid \limit(\omega) \in U \}\right) = 0$
  contradicting (\ref{lemma:memoryless-strat-eq1}). Thus $(\dag)$ must hold.

  Hence, there is some finite path $\hat{\omega} = v_0 v_1 v_2 \dots v_k$
  in this MDP such that $v_i \notin T$ for every $i$, $v_k \in V'$, and
  $\MDPProb{\strat{1}{+}\strat{2}}_{\StochG^{V_1+V_2},v}(v_0\dots v_k) > 0$.

  Now, we define $\strat{1}'$ as follows:
  $\strat{1}'(v') = \hat{\omega}_{i+1}$, if $i < k$ is the largest
  index such that $v' = \hat{\omega}_{i}$;
  $\strat{1}'(v') = v''$ for some arbitrary $v'' \in
  V'\cap\post^\delta(v)$, if $v' \in V'$;
  otherwise $\strat{1}'(v') = v''$ for some arbitrary
  $v'' \in \post^\delta(v')$.
  $\strat{2}'$ is defined as the uniform strategy, that is:
  $\strat{2}'(v')(v'') = \frac{1}{\# \post^\delta(v')}$, for every $v'
  \in V_2$ and $v'' \in \post^\delta(v')$.
  
  $\strat{1}'$ is defined such that it jumps ahead on $\hat{\omega}$
  skipping all possible loops introduced by $\strat{1}$.  Let
  $\hat{\omega}{\downharpoonleft}_{\strat{1}'}$ be the finite path
  obtained by following $\hat{\omega}$ skipping all loops according
  to $\strat{1}'$.
  Note that $\hat{\omega}{\downharpoonleft}_{\strat{1}'}$ is a valid
  path in $\StochG$, that it still ends in state $v_k$, and that
  $\MDPProb{\strat{1}',\strat{2}'}_{\StochG,v}({\hat{\omega}{\downharpoonleft}_{\strat{1}'}})>0$.
  Moreover, by ($\dag$) and the definition of $\strat{1}'$, we have
  that $\MDPProb{\strat{1}',\strat{2}'}_{\StochG, v_k}(\Box V') = 1$.
  Thus $\MDPProb{\strat{1}',\strat{2}'}_{\StochG, v}(\Diamond T) < 1$.
  Furthermore, note that $\strat{1}'$ is memoryless and $\strat{2}'$
  is fair.
%%   Now, we define $\strat{1}'$ as follows: $\strat{1}'(v') = \hat{\omega}_{i+1}$, if $v' = \hat{\omega}_{i}$ for some $i <k$; 
%% $\strat{1}'(v') = v''$ for some arbitrary $v'' \in \post^\delta(v')$, if $v' \neq \hat{\omega}_{i}$ for all $i \leq k$ and $v' \notin V'$; $\strat{1}'(v') = v''$ for some arbitrary $v'' \in V'\cap \post^\delta(v)$, if
%% $v' \in V'$. $\strat{2}'$ is defined as the uniform strategy, that is: $\strat{2}'(v')(v'') = \frac{1}{\# \post^\delta(v')}$, for every $v' \in V_2$ and $v'' \in \post^\delta(v')$. Note that, $\MDPProb{\strat{1}',\strat{2}'}_{\StochG,v}(v_0\dots v_k) >0$, and 
%% also (by definition of $\strat{1}'$ and ($\dag$)) we have that $\MDPProb{\strat{1}',\strat{2}'}_{\StochG, v_k}(\Box V') = 1$, thus $\MDPProb{\strat{1}',\strat{2}'}_{\StochG, v}(\Diamond T) < 1$. Furthermore, note that $\strat{1}'$ is memoryless and $\strat{2}'$ is fair.%we can define strategies
%% %$\strat{1}'$ and $\strat{2}'$ such that they collaborate to reach $\EC{C}$, and for any vertex $\hat{v} \in V_1 \cap V'$ we set 
%% %$\strat{1}'(\hat{v})=v'$ (for some $v' \in V'$); while $\strat{2}'$ behaves as $\strat{2}$ in nodes belonging to $V_2$. Note that this
%% %ensures that $\strat{2}'$ is fair and $\strat{1}'$ is memoryless. Since the probability of reaching $\EC{C}$ is positive, and no node in $T$ is reached when playing $\strat{1}'$ and $%\strat{2}'$,  we get that $\Prob{\strat{1}'}{\strat{2}'}_{\StochG, v}(\Diamond T) < 1$.
\qed
\end{myproof}

\begin{myproof}[of Theorem~\ref{thm:uniform-prob}]
``If'': Suppose that $\Prob{\strat{1}}{\uniformstrat{2}}_{\StochG,v}(\Diamond T)=1$ for every Player~1's strategy $\strat{1}$.  Moreover, 
assume, for the sake of contradiction, that
$\Prob{\strat{1}'}{\strat{2}'}_{\StochG,v}(\Diamond T) < 1$ for some  $\strat{1}'\in \Strategies{1}$ and $\strat{2}'\in \FairStrats{2}$. 
By Lemma \ref{lemma:memoryless-strat}, we can safely assume that $\strat{1}'$ is memoryless and deterministic.
Thus, by fixing $\strat{1}'$, we obtain a (finite) Markov decision process denoted $\StochG^{\strat{1}'}$.
By assumption, we have that
$\inf_{\strat{2} \in \FairStrats{2}} \MDPProb{\strat{2}}_{\StochG^{\strat{1}'},v}(\Diamond T) < 1 $. In addition, by Theorem 10.133 in \cite{BaierK08}, we have that $\inf_{ \strat{2} \in \FairStrats{2}}\MDPProb{\strat{2}}_{\StochG^{\strat{1}'},v}(\Diamond T) = 1 - \sup_{\strat{2} \in \FairStrats{2}} \MDPProb{\strat{2}}_{\StochG^{\strat{1}'},v}(\neg T \U U)$, where $T$ is the set of terminal states,  $\neg T$ is its complement, and
$U = \bigcup \{ V' \mid (V', \delta') \in \EndComp(\StochG^{\strat{1}'}) \text{ and }  T \cap V' = \emptyset\}$.
%Furthermore, in \cite{BaierK08} it is shown how we can define a fair strategy that maximizes the probabilities. 
Let $\strat{2}''$ be the fair strategy that maximizes the value of $\MDPProb{\strat{2}''}_{\StochG^{\strat{1}'},v}(\neg T \U U)$, which exists as shown in the proof of Theorem 10.133 in \cite{BaierK08}.
Then we have that $\MDPProb{\strat{2}''}_{\StochG^{\strat{1}'},v}(\neg T \U U) > 0$ and hence
there is a path $v_0 v_1 \dots v_n$ with positive probability such that $v_i \notin T$ and $v_n$ belongs to an end-component $\EC{C} = (V'', \delta'')$
such that $T \cap V'' = \emptyset$. Since $\strat{2}''$ is fair, by our definition of fair strategy, the end component $\EC{C}$ contains all the 
transitions in $\StochG^{\strat{1}'}$ when restricted  to $V''$.
% when restricted  to $\EC{C}$. 
Therefore, $\MDPProb{\uniformstrat{2}}_{\StochG^{\strat{1}'},v}(\Diamond V'') > 0$, and 
$\MDPProb{\uniformstrat{2}}_{\StochG^{\strat{1}'},v''}(\Box V'') = 1 $ 
from any state in $v'' \in V''$, which implies that $\MDPProb{\uniformstrat{2}}_{\StochG{\strat{1}'},v}(\neg T \U V'') > 0$. 
Hence, $\MDPProb{\uniformstrat{2}}_{\StochG^{\strat{1}'},v}(\Diamond T) < 1$, which contradicts our initial assumption.

\noindent ``Only If'': This part is direct since $\uniformstrat{2}$ is a fair strategy.
%Suppose that  $\Prob{\strat{1}}{\strat{2}}(\mathcal{A})=1$ for 
%every $\strat{1} \in \Strategies{1}$ and  $\strat{2} \in \FairStrats{2}$, and 
%$\Prob{\strat{1}'}{\starredstrat{2}}(\mathcal{A})<1$, for some Verifier's strategy $\strat{1}'$, that is, there is a path 
%$v_0, v_1, \dots, v_n$ in the Markov chain $\StochG^{\strat{1}', \starredstrat{2}}$ such that $v_i \notin T$ and 
%$v_n \in B$ for some BSCC $B$ satisfying $T \cap B = \emptyset$. By the definition of $\starredstrat{2}$, $B$ is also a 
%maximal end component in the MDP  $\StochG^{\strat{1}'}$. 
%Thus, we can define a fair strategy $\strat{2}'$ that takes path 
%$v_0, \dots, v_n$ in   $\StochG^{\strat{1}'}$ and afterwards behaves in a fair way in the maximal end-component $B$, since 
%$T \cap B = \emptyset$ we have $\Prob{\strat{1}'}{\strat{2}'}(\mathcal{A})<1$ which contradicts our initial assumption.
\qed
\end{myproof} 

\begin{myproof}[of Theorem~\ref{thm:stopping-algorithm}]
  (2) is an immediate consequence of (1). So, we only prove the first
  statement.

  Notice that
  $\StochG^{\uniformstrat{2}} = (V, (V_1, \emptyset, {V_2 \cup V_\Probabilistic}), \delta')$,
  where $\delta'(v,\cdot)=\delta(v,\cdot)$ if $v\in V_1\cup V_\Probabilistic$
  and $\delta'(v,\cdot)=\uniformstrat{2}(v)$ if $v\in V_2$.
  On this MDP, operators $\Apre$ and $\Epre$ are defined by
  \begin{align*}
    \Epre(C) \ = \ {}&\{ v \in V \mid \delta'(v,C) > 0\} \\
    \Apre(C) \ = \ {}&\{ v \in V_2\cup V_\Probabilistic \mid \delta'(v,C) > 0\} \\
		     &\cup \ \{ v \in  V_1 \mid \forall v' \in V : \delta'(v,v') > 0 \Rightarrow v' \in C \}
  \end{align*}
  It is straightforward to verify that $\EFairpre(C)=\Epre(C)$ and
  $\AFairpre(C)=\Apre(C)$ for any $C\subseteq V$.

  As a consequence of this observation and
  Theorem~\ref{thm:uniform-prob}, it suffices to check that
  $\MDPProb{\strat{1}}_{\StochG^{\uniformstrat{2}},v}(\Diamond T)=1$ for every
  strategy $\strat{1}$ iff
  $v \in V\setminus \Epre^*(V \setminus \Apre^*(T))$ $(\ddag)$.

  By Lemma~10.110~\cite{BaierK08}, $v\in V \setminus \Apre^*(T))$ iff
  $\exists \strat{1}\colon \MDPProb{\strat{1}}_{\StochG^{\uniformstrat{2}},v}(\Diamond T) = 0$.
  Therefore, $v\in V \setminus \Apre^*(T))$ iff
  $\exists \strat{1}\colon \MDPProb{\strat{1}}_{\StochG^{\uniformstrat{2}},v}(\Box \neg T) = 1$.
  Taking into account that all terminal states are absorbing, from Lemma~10.111~\cite{BaierK08}, $v\in \Epre^*(V \setminus \Apre^*(T)))$ iff
  $\exists \strat{1}\colon \MDPProb{\strat{1}}_{\StochG^{\uniformstrat{2}},v}(\Diamond T) < 1$
  from which $(\ddag)$ follows.
  \qed
\end{myproof}

\begin{myproof}[of Theorem~\ref{th:semmimarkov2}]
  Consider the event $\Diamond^{k} v' = \{ \omega \in
  \GamePaths_\StochG \mid \omega_k = v'\}$, for $k\geq 0$. That is,
  the set of runs in which $v'$ is reached in exactly $k$ steps.
  We define $\starredstrat{1}$ as follows:
  \[
  \starredstrat{1}(\hat{\omega}v')(v'') =  \Prob{\strat{1}}{\strat{2}}_{\StochG,v}(\Diamond^{k+1} v'' \mid \Diamond^k v') 
  \]
  for every $\hat{\omega}$ such that
  $\Prob{\strat{1}}{\strat{2}}_{\StochG,v}(\hat{\omega}v') > 0$ and
  $|\hat{\omega}v'| = k$.  For $\hat{\omega}$ with
  $\Prob{\strat{1}}{\strat{2}}_{\StochG,v}(\hat{\omega}v') = 0$ we
  define $\starredstrat{1}(\hat{\omega}v')$ to be any arbitrary
  distribution.
  %for $|\hat{\omega}v'| = k$.

  We prove that this strategy satisfies the conditions of the theorem.
  First, we prove that
  \begin{equation}\label{equ:diamondk}
    \Prob{\strat{1}}{\strat{2}}_{\StochG,v}(\Diamond^{k} v') =
    \Prob{\starredstrat{1}}{\strat{2}}_{\StochG,v}(\Diamond^{k} v').
  \end{equation}
  Notice that for $k=0$ we have that,
  \[\Prob{\strat{1}}{\strat{2}}_{\StochG,v}(\Diamond^{0} v') =
  \Prob{\strat{1}}{\strat{2}}_{\StochG,v}(v') = \Delta_v(v') =
  \Prob{\starredstrat{1}}{\strat{2}}_{\StochG,v}(v')=
  \Prob{\starredstrat{1}}{\strat{2}}_{\StochG,v}(\Diamond^{0} v').\]
  For $k + 1 > 0$, we note that
  \[
  \Prob{\strat{1}}{\strat{2}}_{\StochG,v}(\Diamond^{k+1} v') = \sum_{\hat{\omega} \in V^{k+1}} \Prob{\strat{1}}{\strat{2}}_{\StochG,v}(\hat{\omega}v')  = \sum_{v'' \in \pre(v')}\sum_{\hat{\omega} \in V^k} \Prob{\strat{1}}{\strat{2}}_{\StochG,v}(\hat{\omega}v''v'),
  \] 
  and hence, for this case, it sufficies to show that, for all
  $v''\in\pre(v)$,
  \[
    \sum_{\hat{\omega} \in V^k} \Prob{\strat{1}}{\strat{2}}_{\StochG,v}(\hat{\omega}v''v') = \sum_{\hat{\omega} \in V^k} \Prob{\starredstrat{1}}{\strat{2}}_{\StochG,v}(\hat{\omega}v''v').
  \]

  The proof uses induction on $k$ doing case analysis. We observe that
  the base case follows similarly to the inductive case and we
  properly point it out when it corresponds.
  If $v'' \in V_2$, we proceed as follows.
  \begin{align}	
    \label{th:semmimarkov2-eq1-l1}
    \sum_{\hat{\omega} \in V^k}\Prob{\strat{1}}{\strat{2}}_{\StochG,v}(\hat{\omega}v''v')
    & = \sum_{\hat{\omega} \in V^k}\Prob{\strat{1}}{\strat{2}}_{\StochG,v}(\hat{\omega}v'') \strat{2}(\hat{\omega}v'')(v')\\
    \label{th:semmimarkov2-eq1-l2}
    & = \alpha \sum_{\hat{\omega} \in V^k}\Prob{\strat{1}}{\strat{2}}_{\StochG,v}(\hat{\omega}v'') \\
    \label{th:semmimarkov2-eq1-l3}
    & = \alpha \sum_{\hat{\omega} \in V^k}\Prob{\starredstrat{1}}{\strat{2}}_{\StochG,v}(\hat{\omega}v'') \\
    \label{th:semmimarkov2-eq1-l4}
    & = \sum_{\hat{\omega} \in V^k}\Prob{\starredstrat{1}}{\strat{2}}_{\StochG,v}(\hat{\omega}v''v')
  \end{align}
  (\ref{th:semmimarkov2-eq1-l1}) follows by definition.
  For (\ref{th:semmimarkov2-eq1-l2}), we define $\alpha =
  \strat{2}(\hat{\omega}v'')(v')$ if $\hat{\omega}$ starts at $v$.  By
  recalling that the semi-Markov nature of $\strat{2}$ defines the
  same $\alpha$ for any $\hat{\omega}\in V^k$ starting at $v$, and that
  $\Prob{\strat{1}}{\strat{2}}_{\StochG,v}(\hat{\omega}v'') = 0$ if
  $\hat{\omega}$ does not start at $v$, $\alpha$ results to be common
  factor.
  (\ref{th:semmimarkov2-eq1-l3}) follows either by induction
  hypothesis, or, in the base case, by observing that
  $\Prob{\strat{1}}{\strat{2}}_{\StochG,v}(v'') =
  \Delta_v(v'')=\Prob{\starredstrat{1}}{\strat{2}}_{\StochG,v}(v'')$.
  (\ref{th:semmimarkov2-eq1-l4}) resolves just like
  (\ref{th:semmimarkov2-eq1-l2}).

  For the case in which $v'' \in V_1$, we proceed as follows.
  \begin{align}	
    \sum_{\hat{\omega} \in V^k}\Prob{\strat{1}}{\strat{2}}_{\StochG,v}(\hat{\omega}v''v') \hspace{-6em} & \notag\\
    \label{th:semmimarkov2-eq2-l1}
    & = \sum_{\hat{\omega} \in V^k}\Prob{\strat{1}}{\strat{2}}_{\StochG,v}(\hat{\omega}v'') \strat{1}(\hat{\omega}v'')(v')\\
    \label{th:semmimarkov2-eq2-l2}
    & = \left( \sum_{\hat{\omega} \in V^k}\Prob{\strat{1}}{\strat{2}}_{\StochG,v}(\hat{\omega}v'') \right) \frac{\displaystyle\sum_{\hat{\omega} \in V^k}\Prob{\strat{1}}{\strat{2}}_{\StochG,v}(\hat{\omega}v'') \strat{1}(\hat{\omega}v'')(v')}{\displaystyle\sum_{\hat{\omega} \in V^k}\Prob{\strat{1}}{\strat{2}}_{\StochG,v}(\hat{\omega}v'')} \\
    \label{th:semmimarkov2-eq2-l3}
    & = \sum_{\hat{\omega} \in V^k}\Prob{\strat{1}}{\strat{2}}_{\StochG,v}(\hat{\omega}v'') \Prob{\strat{1}}{\strat{2}}_{\StochG,v}(\Diamond^{k+1} v' \mid \Diamond^k v'') \\
    \label{th:semmimarkov2-eq2-l4}
    & = \sum_{\hat{\omega} \in V^k}\Prob{\starredstrat{1}}{\strat{2}}_{\StochG,v}(\hat{\omega}v'') \starredstrat{1}(\hat{\omega}v'')(v)\\
    \label{th:semmimarkov2-eq2-l5}
    & = \sum_{\hat{\omega} \in V^k}\Prob{\starredstrat{1}}{\strat{2}}_{\StochG,v}(\hat{\omega}v''v')
  \end{align}
  (\ref{th:semmimarkov2-eq2-l1}) follows by definition.
  (\ref{th:semmimarkov2-eq2-l2}) is obtained by multiplying and
  dividing the term by
  $\sum_{\hat{\omega}v'' \in V^k}\Prob{\strat{1}}{\strat{2}}_{\StochG,v}(\hat{\omega}v'')$.
  (\ref{th:semmimarkov2-eq2-l3}) follows from the definition of
  conditional probability by noting that
  $\Prob{\strat{1}}{\strat{2}}_{\StochG,v}({\Diamond^{k+1} v'} \cap {\Diamond^k v''}) =
  \sum_{\hat{\omega} \in V^k}\Prob{\strat{1}}{\strat{2}}_{\StochG,v}(\hat{\omega}v'') \strat{1}(\hat{\omega}v'')(v')$
  and 
  $\Prob{\strat{1}}{\strat{2}}_{\StochG,v}(\Diamond^{k} v'') =
  \sum_{\hat{\omega} \in V^k}\Prob{\strat{1}}{\strat{2}}_{\StochG,v}(\hat{\omega}v'')$.
  Finally, (\ref{th:semmimarkov2-eq2-l4}) follows by the definition of
  $\starredstrat{1}$ and (\ref{th:semmimarkov2-eq2-l5}) by the
  definition of $\Prob{\starredstrat{1}}{\strat{2}}_{\StochG,v}$.

  The proof for case $v'' \in V_\Probabilistic$ follows just like for
  the case first case, with the difference that instead of considering
  $\strat{2}(\hat{\omega}v'')(v')$ we need to consider
  $\delta(v'',v')$.

  Now,  we have:
  \begin{align*}
  \Expect{\strat{1}}{\strat{2}}_{\StochG,v}[\Rewards]   &  = \sum^\infty_{N=0} \sum_{\hat{\omega} \in V^{N+1}} \Prob{\strat{1}}{\strat{2}}_{\StochG,v}(\hat{\omega})\reward(\hat{\omega}_N)\\
  & = \sum^\infty_{N=0} \sum_{v \in V} \Prob{\strat{1}}{\strat{2}}_{\StochG,v}(\Diamond^N v)\reward(v) \\
  & =  \sum^\infty_{N=0} \sum_{v \in V} \Prob{\starredstrat{1}}{\strat{2}}_{\StochG,v}(\Diamond^N v)\reward(v) \\
  & = \Expect{\starredstrat{1}}{\strat{2}}_{\StochG,v}[\Rewards] 
  \end{align*}
  \qed
\end{myproof}

\begin{myproof}[of Lemma~\ref{lemma:bound-prob-stationary-strats}]
  Since $\strat{1}$ and $\strat{2}$ are memoryless, we can fix these
  strategies and obtain an absorbing finite Markov chain
  $\StochG^{\strat{1},\strat{2}} = (V, (\emptyset,\emptyset, V_1 \cup V_2 \cup V_P), \delta^{\strat{1}, \strat{2}})$,
  then the result follows from basic properties of Markov chains. For
  the sake of completeness, we provide a proof.
  
  Let $|V|$ be the number of states of the MC
  $\StochG^{\strat{1},\strat{2}}$ and $T$ its set of terminal
  vertices.
  Because $\Prob{\strat{1}}{\strat{2}}_{\StochG,v}(\Diamond T) = 1$
  and the MC is finite, from any vertex, we have a path of length $k$
  with $k \leq|V|$ to a terminal vertex.
  Furthermore, let
  $\lambda = \min \{ \delta^{\strat{1},\strat{2}}(v,v') > 0 \mid v,v' \in V \}$,
  which is well-defined since $\StochG^{\strat{1},\strat{2}}$ has a finite
  number of states. Thus for any $N\geq 0$ we have:
  \[
  \sum_{\hat{\omega} \in (V \setminus T)^{N+1}} \Prob{\strat{1}}{\strat{2}}_{\StochG,v}(\hat{\omega}) \leq (1 - \lambda^{|V|})^{\left\lfloor{\frac{N}{|V|}}\right\rfloor}
  \]
  %% \[
  %% \sum_{\hat{\omega} \in v(V \setminus T)^N} \Prob{\strat{1}}{\strat{2}}_{\StochG,v}(\hat{\omega}) < (1 - \lambda^{|V|})^N
  %% \]
  Thus, if $\lambda<1$,
  \begin{align*}
    \sum^\infty_{N=0}\sum_{\hat{\omega} \in (V \setminus T)^{N+1}} \Prob{\strat{1}}{\strat{2}}_{\StochG,v}(\hat{\omega}) & \ \leq \ \sum^{\infty}_{N=0}(1 - \lambda^{|V|})^{\left\lfloor{\frac{N}{|V|}}\right\rfloor} \\
    & \ = \ {|V|}\sum^{\infty}_{i=0}(1 - \lambda^{|V|})^i \ = \ \frac{|V|}{1 - \lambda^{|V|}}.
  \end{align*}
  If $\lambda = 1$, it is easy to check that
  $\sum^\infty_{N=0}\sum_{\hat{\omega} \in (V \setminus T)^{N+1}} \Prob{\strat{1}}{\strat{2}}_{\StochG,v}(\hat{\omega}) \leq {|V|}$.
  \qed
\end{myproof}

\begin{myproof}[of Theorem~\ref{th:games-are-bounded}]
  Rather than prooducing the approriate reductions to apply Theorem
  4.2.12 in \cite[p.~174]{FilarV96}, we produce a direct proof based
  on the ideas of the proof of that Theorem.
  
  Fix a fair strategy $\strat{2} \in \MemorylessFairStrats{2}$ and
  consider the MDP
  $\StochG^\strat{2} = (V, (V_1, \emptyset, V_2 \cup V_\Probabilistic), \delta^{\strat{2}})$.
  Define the following functional:
  \[
  \Gamma'(f)(v) =
  \begin{cases}
    \sum_{v' \in \post(v)} \delta^\strat{2}(v,v')  f(v') & \text{ if } v \in V_2 \cup V_\Probabilistic  \\
    \max \{1  + f(v') \mid v' \in \post(v) \} & \text{ if } v \in  V_1 \setminus T \\
    0 & \text{ if } v \in T.
  \end{cases}
  \]
  It is well-known that, since $\StochG^{\strat{2}}$ is stopping, this
  functional is a contraction mapping and then it has a unique
  fixpoint (see, e.g., proof of Theorem 4.2.2 in \cite{FilarV96}),
  which gives us the optimal value for Player~1 in MDP
  $\StochG^{\strat{2}}$.  Let $\nu$ denote the fixpoint of $\Gamma'$.
  Since $\nu(v_n)$ is the optimal value for a given vertex $v_{n}$,
  we have that
  \begin{equation}\label{eq:th:games-are-bounded-eq1}
    \nu(v_n) \geq 1 + \sum_{v_{n+1} \in \post(v_n)} \delta^{\strat{1}^n, \strat{2}}(v_n, v_{n+1}) \nu(v_{n+1})
  \end{equation}
  for any memoryless strategy $\strat{1}^n$ (the value obtained by the
  fixpoint is optimal among memoryless strategies, thus it cannot be
  improved by $\strat{1}^n$).
  Take any sequence $\hat{\omega} = v_0 \dots v_n$ with
  $\hat{\omega}_n = v_n$ and multiply
  (\ref{eq:th:games-are-bounded-eq1}) by
  $\prod^{n-1}_{i=0} \delta^{\strat{1}^i, \strat{2}}(\hat{\omega}_i, \hat{\omega}_{i+1})$
  on both sides.  This yields
  \begin{align}
    \prod^{n-1}_{i=0} \delta^{\strat{1}^i, \strat{2}}(\hat{\omega}_i, \hat{\omega}_{i+1}) \nu(\hat{\omega}_n)  \hspace{-6.5em}& \notag\\
    \geq \ & \prod^{n-1}_{i=0} \delta^{\strat{1}^i, \strat{2}}(\hat{\omega}_i, \hat{\omega}_{i+1}) \label{eq:th:games-are-bounded-eq2} \\
    & + \ \prod^{n-1}_{i=0} \delta^{\strat{1}^i, \strat{2}}(\hat{\omega}_i, \hat{\omega}_{i+1})\sum_{v_{n+1} \in \post(v_n)} \delta^{\strat{1}^n, \strat{2}}(v_n, v_{n+1}) \nu(v_{n+1}) \notag
  \end{align}
  In the equation above, we assume
  $\prod^{-1}_{i=0} \delta^{\strat{1}^i, \strat{2}}(\hat{\omega}_i, \hat{\omega}_{i+1}) = 1$
  for the case in which $n=0$.
  
  Considering all sequences of length $n+1$ in $v(V \setminus T)^n$,
  from (\ref{eq:th:games-are-bounded-eq2}), and rewriting the last
  line of the inequality, we get
  \begin{align}
    \sum_{\hat{\omega} \in v(V \setminus T)^n}\prod^{n-1}_{i=0} \delta^{\strat{1}^i, \strat{2}}(\hat{\omega}_i, \hat{\omega}_{i+1}) \nu(\hat{\omega}_n)  \hspace{-5.5em}& \notag\\
    \geq \ & \sum_{\hat{\omega} \in v(V \setminus T)^n}\prod^{n-1}_{i=0} \delta^{\strat{1}^i, \strat{2}}(\hat{\omega}_i, \hat{\omega}_{i+1}) \label{eq:th:games-are-bounded-eq3}\\
    & + \ \sum_{\hat{\omega} \in v(V \setminus T)^{n+1}}(\prod^{n}_{i=0} \delta^{\strat{1}^i, \strat{2}}(\hat{\omega}_i, \hat{\omega}_{i+1}) \nu(\hat{\omega}_{n+1})) \notag
  \end{align}
  Summing all the possible sequences up to $N$ we obtain
  \begin{align}
    \sum^{N}_{n=0}\sum_{\hat{\omega} \in v(V \setminus T)^n}\prod^{n-1}_{i=0} \delta^{\strat{1}^i, \strat{2}}(\hat{\omega}_i, \hat{\omega}_{i+1}) \nu(\hat{\omega}_n)   \hspace{-6.5em}& \notag\\
    \geq \ & \sum^{N}_{n=0} \sum_{\hat{\omega} \in v(V \setminus T)^n}\prod^{n-1}_{i=0} \delta^{\strat{1}^i, \strat{2}}(\hat{\omega}_i, \hat{\omega}_{i+1}) \label{eq:th:games-are-bounded-eq4} \\
    & + \ \sum^{N}_{n=0} \sum_{\hat{\omega} \in v(V \setminus T)^{n+1}}(\prod^{n}_{i=0} \delta^{\strat{1}^i, \strat{2}}(\hat{\omega}_i, \hat{\omega}_{i+1}) \nu(\hat{\omega}_{n+1})) \notag
  \end{align}
  (\ref{eq:th:games-are-bounded-eq4}) can be equivalently rewritten
  as
  \begin{align}
    \sum^{N}_{n=0}\sum_{\hat{\omega} \in v(V \setminus T)^n}\prod^{n-1}_{i=0} \delta^{\strat{1}^i, \strat{2}}(\hat{\omega}_i, \hat{\omega}_{i+1}) \nu(\hat{\omega}_n)  \hspace{-6.5em}& \notag\\
    \geq \ & \sum^{N}_{n=0} \sum_{\hat{\omega} \in v(V \setminus T)^n}\prod^{n-1}_{i=0} \delta^{\strat{1}^i, \strat{2}}(\hat{\omega}_i, \hat{\omega}_{i+1}) \label{eq:th:games-are-bounded-eq5}\\
    & + \ \sum^{N+1}_{n=1} \sum_{\hat{\omega} \in v(V \setminus T)^{n}}(\prod^{n-1}_{i=0} \delta^{\strat{1}^i, \strat{2}}(\hat{\omega}_i, \hat{\omega}_{i+1}) \nu(\hat{\omega}_{n})) \notag
  \end{align}
  By subtracting
  $\sum^{N+1}_{n=1} \sum_{\hat{\omega} \in v(V \setminus T)^{n}}(\prod^{n-1}_{i=0} \delta^{\strat{1}^i, \strat{2}}(\hat{\omega}_i, \hat{\omega}_{i+1}) \nu(v_{n}))$
  from both sides of the inequality
  (\ref{eq:th:games-are-bounded-eq5}), we get
  \begin{align}
    \label{eq:th:games-are-bounded-eq6}
    \nu(v)  - \sum_{\hat{\omega} \in v(V \setminus T)^{N+1}}\prod^{N}_{i=0} \delta^{\strat{1}^i, \strat{2}}(\hat{\omega}_i,\hat{\omega}_{i+1}) \nu(\hat{\omega}_{N+1})  \hspace{-6.5em}& \\
    \geq \ & \sum^{N}_{n=0} \sum_{\hat{\omega} \in v(V \setminus T)^n}\prod^{n-1}_{i=0} \delta^{\strat{1}^i, \strat{2}}(\hat{\omega}_i, \hat{\omega}_{i+1}) \notag
  \end{align}
  Applying limits to both sides of the inequality
  (\ref{eq:th:games-are-bounded-eq6}) we obtain
  \begin{align}
    \label{eq:th:games-are-bounded-eq7}
    \nu(v)  - \lim_{N \to \infty} \sum_{\hat{\omega} \in v(V \setminus T)^{N+1}}\prod^{N}_{i=0} \delta^{\strat{1}^i, \strat{2}}(\hat{\omega}_i,\hat{\omega}_{i+1}) \val(\hat{\omega}_{N+1})  \hspace{-8.5em}& \\
    \geq \ & \lim_{N \to \infty} \sum^{N}_{n=0} \sum_{\hat{\omega} \in v(V \setminus T)^n}\prod^{n-1}_{i=0} \delta^{\strat{1}^i, \strat{2}}(\hat{\omega}_i, \hat{\omega}_{i+1}) \notag
  \end{align}
  Now, by Lemma \ref{lemma:bound-prob-stationary-strats}, $\nu(v)$ is
  bounded for every $v$.  In addition,
  \[
  \lim_{N \to \infty}\sum_{\hat{\omega} \in v(V \setminus T)^{N+1}}\prod^{N}_{i=0} \delta^{\strat{1}^i, \strat{2}}(v_i,v_{i+1}) \nu(v_{N+1}) = 0,
  \]
  Thus, the right term of the inequality
  (\ref{eq:th:games-are-bounded-eq7}) is a finite number.  That means
  that for every sequence of memoryless strategies
  $\strat{1}^0,\strat{1}^1, \dots$ we get
  \[
  \lim_{N \to \infty} \sum^{N}_{n=0} \sum_{\hat{\omega} \in v(V \setminus T)^n}\prod^{n-1}_{i=0} \delta^{\strat{1}^i, \strat{2}}(\hat{\omega}_i, \hat{\omega}_{i+1}) < \infty,
  \]
  Noting that any semi-Markov strategy $\strat{1}'$ can be seen as a
  sequence of memoryless strategies, it follows that
  \[
  \lim_{N \to \infty} \sum^{N}_{n=0} \sum_{\hat{\omega} \in v(V \setminus T)^n} \MDPProb{\strat{1}'}_{\StochG^{\strat{2}}, v}(\hat{\omega}) < \infty,
  \]
  for any $\strat{1}'$ semi-Markovian.  Furthermore, since every
  strategy $\strat{1}$ has an equivalent semi-Markov strategy (see
  proof of Theorem \ref{th:semmimarkov2}), the result follows.
% First, we describe how a  stochastic game as defined in this paper can be equivalently defined as a \emph{transient game} as defined in \cite{FilarV96}.
%A transient game consists of a finite set of states $S$, for each state $s \in S$, $A^i(s)$ is a finite set of actions for Player $i$ (for $i \in \{0,1\}$).  A probabilistic transition function noted $p(s' \mid s, a_1,a_2)$ giving the probability of reaching state $s'$ from state $s$ when actions $a_1 \in A_1$ and $a_2 \in A_2$ are selected, and a reward function $r(s,a_1,a_2) \in \mathbb{R}$ given the rewards obtained if actions $a_1$ and $a_2$ are executed from state $s$.  A strategy for Player $1$ is a 
%sequence $f_0,f_1,\dots$ of functions $f : S^* A^1 A^2 S \rightarrow  \Dist$
  \qed
\end{myproof}

\begin{myproof}[of Theorem~\ref{th:memoryless-strat-p2-bounded-expectation}]
  Take $M = \max \{r(v) \mid v \in V\}$.  This number is well defined
  since the number of vertices is finite.
  The expected value of the game under strategies
  $\strat{1} \in \Strategies{1}$ and $\strat{2} \in \FairStrats{2}$
  is given by
  \begin{align}	
    \label{th:memoryless-strat-p2-bounded-expectation-eq1-l0}
    \Expect{\strat{1}}{\strat{2}}_{\StochG,v}[\Rewards]
    & = \sum^{\infty}_{N=0} \sum_{\hat{\omega} \in V^{N+1}} \Prob{\strat{1}}{\strat{2}}_{\StochG,v}(\hat{\omega})\reward(\hat{\omega}_N) \\
    \label{th:memoryless-strat-p2-bounded-expectation-eq1-l1}	
    & = \sum^{\infty}_{N=0} \ \ \sum_{\hat{\omega} \in (V^{N+1} \cap (V \setminus T)^*T)} \Prob{\strat{1}}{\strat{2}}_{\StochG,v}(\hat{\omega})  \reward(\hat{\omega}_N) \ \ + \notag\\
    &\hspace*{4em} \sum_{\hat{\omega} \in (V^{N+1} \cap (V \setminus T)^*)} \Prob{\strat{1}}{\strat{2}}_{\StochG,v}(\hat{\omega}) \reward(\hat{\omega}_N)   \\
    \label{th:memoryless-strat-p2-bounded-expectation-eq1-l2}
    & =  \sum^{\infty}_{N=0}  (\sum_{\hat{\omega} \in V^{N+1} \cap (V \setminus T)^*} \Prob{\strat{1}}{\strat{2}}_{\StochG,v}(\hat{\omega}) \reward(\hat{\omega}_N) ) \\
    \label{th:memoryless-strat-p2-bounded-expectation-eq1-l3}
    & \leq M \sum^{\infty}_{N=0}  \sum_{\hat{\omega} \in V^{N+1} \cap (V \setminus T)^*} \Prob{\strat{1}}{\strat{2}}_{\StochG,v}(\hat{\omega})  \\
    \label{th:memoryless-strat-p2-bounded-expectation-eq1-l4}
    & < \infty 
  \end{align}
  (\ref{th:memoryless-strat-p2-bounded-expectation-eq1-l0}) is the
  definition of the expected value.
  The fact that $V^{N+1} \cap (V \setminus T)^*T$ and
  $V^{N+1} \cap (V \setminus T)^*$ are disjoint justifies
  (\ref{th:memoryless-strat-p2-bounded-expectation-eq1-l1}).
  (\ref{th:memoryless-strat-p2-bounded-expectation-eq1-l2}) follows
  from the assumption that $\reward(v) = 0$ for all $v \in T$ and
  (\ref{th:memoryless-strat-p2-bounded-expectation-eq1-l3}) from the
  definition of $M$.
  Finally, (\ref{th:memoryless-strat-p2-bounded-expectation-eq1-l4})
  is a consequence of Theorem \ref{th:games-are-bounded}.
  \qed
\end{myproof}

\begin{myproof}[of Theorem~\ref{th:infima-in-dmf}]
  By fixing strategy $\strat{1}$ on $\StochG$ we obtain the MDP
  $\StochG^{\strat{1}} = (V, (\emptyset,V_2,V_1\cup V_\Probabilistic),\delta^{\strat{1}})$
  where $\delta^{\strat{1}}(v,\cdot)=\strat{1}(v)$ if $v\in V_1$, and
  $\delta^{\strat{1}}(v,\cdot)=\delta(v,\cdot)$ if $v\in V_2\cup V_\Probabilistic$.
  Notice that the set $V_1$ of Player~1 vertices in $\StochG$ became
  part of the probabilistic vertices of $\StochG^{\strat{1}}$.  Thus,
  non-deterministic choices can only be present at vertices in $V_2$.
  By definition,
  $\inf_{\strat{2} \in \FairStrats{2}} \ExpectMDP{\strat{2}}_{\StochG^{\strat{1}},v}[\Rewards] =
  \inf_{\strat{2} \in \FairStrats{2}} \Expect{\strat{1}}{\strat{2}}_{\StochG,v}[\Rewards]$
  for all $v\in V$.

  Though the proof in general differs, the basic strategy is inspired
  by the proof of Lemma~10.102 in~\cite{BaierK08}: We first construct
  a reduced MDP $\StochG^{\strat{1}}_{\min}$ which preserves the
  optimizing values of $\StochG^{\strat{1}}$ at each vertex, and then
  use the structure of $\StochG^{\strat{1}}_{\min}$ to derive an
  optimal deterministic memoryless strategy.

  For all $v\in V$,
  let $x_v = \inf_{\strat{2} \in \FairStrats{2}} \Expect{\strat{1}}{\strat{2}}_{\StochG,v}[\Rewards]$ (by Corollary~\ref{coro:inf-for-strat2-is-bounded} this value is well-defined) and,
  for all $v\in V_2$, let
  \begin{equation}\label{th:infima-in-dmf-def-postmin}
 	\postmin(v) = \{{v'\in V} \mid {\delta^{\strat{1}}(v,v')=1} \land {x_v=\reward(v)+x_{v'}}\}.
  \end{equation}
  We define the MDP
  $\StochG^{\strat{1}}_{\min} = (V, (\emptyset,V_2,V_1\cup V_\Probabilistic),\delta^{\strat{1}}_{\min})$
  where $\delta^{\strat{1}}_{\min}(v,v')=\delta^{\strat{1}}(v,v')$ if $v\in V_1\cup
  V_\Probabilistic$, or $v\in V_2$ and $v'\in\postmin(v)$. Otherwise
  $\delta^{\strat{1}}_{\min}(v,v')=0$.
  Thus, $\StochG^{\strat{1}}_{\min}$ is the same as $\StochG^{\strat{1}}$ except that all
  transitions $\delta^{\strat{1}}(v,v')=1$ where $v\in V_2$ and
  $x_v < \reward(v)+x_{v'}$ have been removed.
  Notice that
  $\inf_{\strat{2} \in \FairStrats{2}} \ExpectMDP{\strat{2}}_{\StochG^{\strat{1}}_{\min},v}[\Rewards] =
  \inf_{\strat{2} \in \FairStrats{2}} \ExpectMDP{\strat{2}}_{\StochG^{\strat{1}},v}[\Rewards] =
  x_v$
%  \inf_{\strat{2} \in \FairStrats{2}} \Expect{\strat{1}}{\strat{2}}_{\StochG,v}[\Rewards] =
%  x_v$
  for all $v\in V$.

  Before continuing, we prove the following claim
  \begin{claim}
    $\StochG^{\strat{1}}_{\min}$ is stopping under fairness.
  \end{claim}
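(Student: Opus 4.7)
The plan is to proceed by contradiction. Suppose $\StochG^{\strat{1}}_{\min}$ is not stopping under fairness; then by the MDP analogue of Lemma~\ref{lemma:memoryless-strat}, there exists a reachable end component $\EC{C}=(V',\delta')$ of $\StochG^{\strat{1}}_{\min}$ with $V'\cap T=\emptyset$ satisfying the fair-trap closure: $\postmin(v)\subseteq V'$ for every $v\in V_2\cap V'$, and $\post^{\delta^{\strat{1}}}(v)\subseteq V'$ for every $v\in (V_1\cup V_\Probabilistic)\cap V'$.

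The heart of the argument is to show that $\reward\equiv 0$ on $V'$ and that $x$ is constant on $V'$. I would fix any memoryless distribution supported on $\postmin(v)$ for each $v\in V_2\cap V'$, and consider the resulting finite Markov chain $\mathcal{M}$ induced on $V'$: the closure property makes $V'$ a closed, hence positive recurrent, communicating class of $\mathcal{M}$. The defining identity (\ref{th:infima-in-dmf-def-postmin}) gives $x_v=\reward(v)+x_{v'}$ whenever $v\in V_2\cap V'$ and $v'\in\postmin(v)$; for $v\in(V_1\cup V_\Probabilistic)\cap V'$, a one-step conditioning argument on any near-infimum fair strategy (using $\post(v)\subseteq V'$) yields the Bellman-type identity $x_v=\reward(v)+\sum_{v'}\delta^{\strat{1}}(v,v')\,x_{v'}$. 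Thus $x$ is $\reward$-harmonic for $\mathcal{M}$; iterating $n$ steps and invoking boundedness of $x$ on $V'$ (Corollary~\ref{coro:inf-for-strat2-is-bounded}), the total expected reward of $\mathcal{M}$ from any $v\in V'$ is finite, which, since rewards are non-negative and positive recurrence makes every state of $V'$ be visited infinitely often a.s., forces $\reward(v)=0$ for every $v\in V'$. Hence $x$ is harmonic on the finite irreducible recurrent class $V'$ and therefore constant, say equal to $X$.

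Next I extract a uniform exit gap. If $\postmin(v)=\post(v)$ held for every $v\in V_2\cap V'$, then all successors would remain inside $V'$ within $\StochG^{\strat{1}}$, and the uniform Player~2 strategy on $\post$ would be fair yet stay in $V'$ forever with probability one, contradicting that $\StochG$ is stopping under fairness (via Theorem~\ref{thm:uniform-prob}). So there exist $v\in V_2\cap V'$ and $v'\in\post(v)\setminus\postmin(v)$. For such an exit transition, the general inequality $x_v\leq\reward(v)+x_{v'}$ combined with the strict inequality implicit in $v'\notin\postmin(v)$ yields, using $\reward(v)=0$ and $x_v=X$, that $x_{v'}>X$. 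Since the set of exit vertices is finite, there exists $\delta>0$ with $x_{v'}\geq X+\delta$ at every exit vertex $v'$.

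The contradiction now follows. Fix $\hat v\in V'$ and let $\strat{2}\in\FairStrats{2}$ be arbitrary. Since $\StochG$ stops under fairness, under $\strat{2}$ the play almost surely reaches $T$ and, since $V'\cap T=\emptyset$, almost surely exits $V'$ at some exit vertex. The tail-nature of $\FP^2$ implies that the suffix of a fair play is itself fair, so the post-exit residual of $\strat{2}$ is fair from the exit vertex $v'$, and its conditional expected total reward from $v'$ is at least $x_{v'}\geq X+\delta$. Combining with $\reward\equiv 0$ on $V'$, $\Expect{\strat{1}}{\strat{2}}_{\StochG,\hat v}[\Rewards]\geq X+\delta$; taking the infimum over $\strat{2}\in\FairStrats{2}$ yields $x_{\hat v}\geq X+\delta>X=x_{\hat v}$, the desired contradiction. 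The main technical obstacle will be rigorously establishing the Bellman-type identity at probabilistic and Player~1 vertices of $V'$ without a circular appeal to Theorem~\ref{th:infima-in-dmf} itself, and formally handling the ``fair restart'' argument used to bound the conditional post-exit expectation by $x_{v'}$.
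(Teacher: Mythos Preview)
Your proposal is correct and follows the same architecture as the paper's proof: assume failure, extract a fair-trapped end component $(V',\delta')$ disjoint from $T$, show $\reward\equiv 0$ and $x$ constant on $V'$, exhibit a strictly positive gap at every exit from $V'$, and derive a contradiction by running a fair strategy from inside $V'$.

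The intermediate steps differ in flavour. For the zero-reward/constant-$x$ step, the paper takes $\hat v\in\argmin_{v'\in V'}x_{v'}$ and propagates the minimum along successors using non-negativity of $\reward$ and strong connectivity of the end component; this is entirely elementary and avoids the Markov-chain machinery you invoke (positive recurrence, harmonic functions), though it relies on the same one-step Bellman identity at probabilistic and $V_1$ vertices that you flag as an obstacle (the paper glosses over this, writing only that $x_{\hat v}$ ``depends on the convex combination of the values of its successors''). For the final contradiction, the paper picks a single $\epsilon$-optimal fair strategy with $\epsilon=(M-m)/2$ and computes its value directly, whereas you lower-bound the value for every fair $\strat{2}$ and pass to the infimum; your route is arguably cleaner, and both rely on the same ``fair restart'' principle you identify (the paper implicitly uses it in the factorisation at its step~(\ref{th:infima-in-dmf-l4})). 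So the obstacles you single out are real but shared: neither is a gap in your outline, and both can be discharged with the care you anticipate.
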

  \begin{proofofclaim}
    We proceed by contradiction.  Suppose that there is a fair
    strategy $\strat{2}'$ such that, for some node $v \in V$ we have
    $\MDPProb{\strat{2}'}_{\StochG_{\min}^{\strat{1}},v}(\Diamond T) < 1$.
    Hence,  $\MDPProb{\strat{2}'}_{\StochG_{\min}^{\strat{1}},v}(\Box \neg T) > 0$.
    This implies that there is an end component $\EC{C} = (V',
    \delta')$ in the MDP $\StochG_{\min}^{\strat{1}}$ such that
    $T \cap V' = \emptyset$ and
    $\MDPProb{\strat{2}'}_{\StochG_{\min}^{\strat{1}},v}(\Diamond V') > 0$
    (Theorem 10.133 \cite[p.~889]{BaierK08}).
    Since $\StochG_{\min}^{\strat{1}}$ is a sub-MDP of
    $\StochG^{\strat{1}}$, $\EC{C}$ is also an end component of
    $\StochG^{\strat{1}}$.
%%%%%
    Also, note that $\EC{C}$ cannot be a maximal end component in
    $\StochG^{\strat{1}}$, otherwise we have found a fair strategy
    reaching a maximal end component and not reaching $T$, thus
    violating the assumption that $\StochG$ is stopping under
    fairness.
    
    Let $m = \min \{ x_{v'} \mid v' \in V' \}$ and
    $\hat{v} \in \argmin  \{ x_{v'}  \mid v' \in V' \}$.
    If $\hat{v}\in V_2$, by definition of $\postmin(v)$,
    $x_{\hat{v}} = \reward(\hat{v}) + x_{v'}$, for all
    $v' \in \postmin(v)$, and hence, necesaryly $\reward(\hat{v})=0$
    and $x_{\hat{v}} = x_{v'}$ since all values are non-negative.
    The same holds in case $\hat{v}$ is a probabilistic vertex,
    because $x_{\hat{v}}$ depends on the convex combination of the
    values of its successors.
    Thus, inductively, $x_{\hat{v}} = x_{v'}$ and $\reward(v')=0$ for
    all $v'\in V'$.

    Now, let
    $M =  \min \{ \inf_{\strat{2} \in \FairStrats{2}}  \ExpectMDP{\strat{2}}_{\StochG^{\strat{1}},v'}[\Rewards] \mid v' \in (\post(V') \setminus V')\}$
    and 
    $F=\{v \in V \mid v \in (\post(V') \setminus V')\}$.
    Note that, by definition of $\StochG_{min}^{\strat{1}}$, we have that $M > m$.
    Set $\epsilon = \frac{M -m}{2}$ and consider an $\epsilon$-optimal
    strategy $\hat{\strat{2}}$ for vertex $\hat{v}$, i.e.,
    $\ExpectMDP{\hat{\strat{2}}}_{\StochG^{\strat{1}},\hat{v}}[\Rewards] \leq x_{\hat{v}} + \epsilon$.
    
    Note that, since $\StochG^{\strat{1}}$ is stopping under fairness,
    when using strategy $\hat{\strat{2}}$, any play almost surely
    leaves $\EC{C}$ (recall that $V' \cap T = \emptyset$) and hence
    $\MDPProb{\hat{\strat{2}}}_{\StochG^{\strat{1}},\hat{v}} (\Diamond F) = 1$.
    Since all the rewards of the vertices in $\EC{C}$ are $0$ (as
    proven above) we can show that
    $\ExpectMDP{\hat{\strat{2}}}_{\StochG^{\strat{1}},\hat{v}}[\Rewards]
    \geq M$ as follows:
    \begin{align}	
      \ExpectMDP{\hat{\strat{2}}}_{\StochG^{\strat{1}},\hat{v}}[\Rewards] \hspace{-4em} & \notag\\
      \label{th:infima-in-dmf-eq1-l1}
      &= \sum_{\hat{\omega} \in (V \setminus T)^*T} \MDPProb{\hat{\strat{2}}}_{\StochG^{\strat{1}},\hat{v}}(\hat{\omega})\ \Rewards(\hat{\omega}) \\
      \label{th:infima-in-dmf-eq1-l2}
      &=\sum_{\hat{\omega} \in \hat{v}{V'}^*F(V \setminus T)^*T} \MDPProb{\hat{\strat{2}}}_{\StochG^{\strat{1}},\hat{v}}(\hat{\omega})\ \Rewards(\hat{\omega}) \\
      \label{th:infima-in-dmf-l3}
      &= \sum_{v' \in F} \sum_{\hat{\omega} \in \hat{v}{V'}^*v'} \MDPProb{\hat{\strat{2}}}_{\StochG^{\strat{1}},\hat{v}}(\hat{\omega}) \sum_{\hat{\omega}' \in v' (V \setminus T)^*T}  \MDPProb{\hat{\strat{2}}}_{\StochG^{\strat{1}},v'}(\hat{\omega}')\ \Rewards(\hat{\omega}')\\
      \label{th:infima-in-dmf-l4}
      &\geq \sum_{v' \in F} \sum_{\hat{\omega} \in \hat{v}{V'}^*v'} \MDPProb{\hat{\strat{2}}}_{\StochG^{\strat{1}},\hat{v}}(\hat{\omega})\ x_{v'} \\
      \label{th:infima-in-dmf-l5}
      &\geq \sum_{v' \in F} \sum_{\hat{\omega} \in \hat{v}{V'}^*v'} \MDPProb{\hat{\strat{2}}}_{\StochG^{\strat{1}},\hat{v}}(\hat{\omega})\ M \\
      \label{th:infima-in-dmf-l6}
      &= M  \sum_{v' \in F} \sum_{\hat{\omega} \in \hat{v}{V'}^*v'} \MDPProb{\hat{\strat{2}}}_{\StochG^{\strat{1}},\hat{v}}(\hat{\omega}) \\
      \label{th:infima-in-dmf-l7}
      &= M
    \end{align}
    (\ref{th:infima-in-dmf-eq1-l1}) is the definition of expectation.
    (\ref{th:infima-in-dmf-eq1-l2}) follows by the observation that
    any path $\hat{\omega} \in (V \setminus T)^*T$ needs to start in
    $\hat{v}\in V'$ and needs to pass through the frontier $F$ to
    leave $\EC{C}$ and reach some terminal state in $T$.
    (\ref{th:infima-in-dmf-l3}) is obtained by taking into account
    that $\reward(v) = 0$ for any $v \in V'$.
    (\ref{th:infima-in-dmf-l4}) follows from the fact that 
    $x_{v'} \leq \sum_{\hat{\omega}' \in v' (V \setminus T)^*T}  \MDPProb{\hat{\strat{2}}}_{\StochG^{\strat{1}},v'}(\hat{\omega}')\ \Rewards(\hat{\omega}')$.
    (\ref{th:infima-in-dmf-l5}) follows by the definition of $M$ and
    (\ref{th:infima-in-dmf-l6}) by factorizing $M$.
    Finally, (\ref{th:infima-in-dmf-l7}) follows from the fact 
    $\sum_{v' \in F} \sum_{\hat{\omega} \in \hat{v}{V'}^*v'} \MDPProb{\hat{\strat{2}}}_{\StochG^{\strat{1}},\hat{v}}(\hat{\omega}) =
    \MDPProb{\hat{\strat{2}}}_{\StochG^{\strat{1}},\hat{v}} (\Diamond F) = 1$.
    Thus we have that
    $x_{\hat{v}} + \epsilon \geq \ExpectMDP{\hat{\strat{2}}}_{\StochG^{\strat{1}},\hat{v}}[\Rewards] \geq M$.
    Since also $M >  x_{\hat{v}} + \epsilon$, we reach  a contradiction.
    Hence $\StochG^{\strat{1}}_{\min}$ must be stopping under fairness.
    \hfill\emph{(End of claim)}\qed
  \end{proofofclaim}

  For every $v\in V$, let $\disttoT{v}$ be the length of the shortest
  path fragment to some terminal vertex in $T$ in the MDP
  $\StochG^{\strat{1}}_{\min}$.   In
  particular, $\disttoT{v}=0$ for all $v\in T$.  Notice that $\disttoT{v}$ is defined for all
  $v\in V$ because $\StochG^{\strat{1}}_{\min}$ inherits from $\StochG$ the
  property of being stopping under fair strategies (and hence almost
  surely reaches $T$ for any fair strategy in $\FairStrats{2}$).
  
  Now,  for every $v\in V_2$
  such that $\disttoT{v} \geq 1$, define $\starredstrat{2}(v)(v')=1$
  for some $v'$ such that $\delta^{\strat{1}}_{\min}(v,v')=1$ and
  $\disttoT{v}=\disttoT{v'}+1$; and $\starredstrat{2}(v)(v'')=0$ for $v''\neq v'$.  Notice that such $v'$ always exists,
  and that $\starredstrat{2}$ is a memoryless and deterministic
  strategy.

  $\starredstrat{2}$ induces the (finite) Markov chain
  $\StochG^{\strat{1},\starredstrat{2}} = (V, (\emptyset,\emptyset,V_1\cup V_2\cup V_\Probabilistic),\delta^{\strat{1},\starredstrat{2}})$,
  where
  \[\delta^{\strat{1},\starredstrat{2}}(v,v') =
    \begin{cases}
      \delta(v,v')     & \text{ if } v\in V_\Probabilistic \\
      \strat{1}(v)(v') & \text{ if } v\in V_1 \\
      \starredstrat{2}(v)(v') & \text{ if } v\in V_2.
    \end{cases}
  \]
  Taking into account the definition of $\delta^{\strat{1},\starredstrat{2}}$,
  the expected reward values
  $\ExpectMDP{}_{\StochG^{\strat{1},\starredstrat{2}},v}[\Rewards]$ are obtained
  by the unique solution of the following linear equation system:
  \begin{align*}
    y_v = {} & 0 && \text{ if } v\in T \\
    y_v = {} &\textstyle \reward(v) + \sum_{v\in V} \delta(v,v') \cdot y_{v'} && \text{ if } v\in V_\Probabilistic \setminus T\\
    y_v = {} &\textstyle \reward(v) + \sum_{v\in V} \strat{1}(v)(v') \cdot y_{v'}  && \text{ if } v\in V_1 \setminus T\\
    y_v = {} & \reward(v) +  y_{v'} && \text{ if } v\in V_2 \setminus T \text{ and } \starredstrat{2}(v)(v') = 1
  \end{align*}
  To see that this equation system has a unique solution, consider its
  matrix form $y = Ay+ r$, wherein $A$ is a matrix defined by
  $A_{i,j} = \delta^{\pi_1,\pi^*_2}(v_i,v_j)$ if $v_i \notin T$ and
  $A_{i,j} = 0$ if $v_i \in T$, and $r$ is the reward vector where
  $r_i = \reward(v_i)$.
  $A$ is a square substochastic matrix
  \cite{DBLP:journals/moc/Azimzadeh19} in which there is a path from
  any vertex to a terminal vertex.
  Thus, by Corollary 2.6 in \cite{DBLP:journals/moc/Azimzadeh19},
  $(I-A)^{-1}$ exists and hence $(I-A)^{-1} r$ is the unique solution
  to the equation system given above.

  Since $\starredstrat{2}(v)(v') = 1$ implies that $v'\in\postmin(v)$,
  it follows that $(x_v)_{v\in V}$ also solves the equation system
  above.  By uniqueness of the solution we get
  \[\Expect{\strat{1}}{\starredstrat{2}}_{\StochG,v}[\Rewards] = 
    \ExpectMDP{}_{\StochG^{\strat{1},\starredstrat{2}},v}[\Rewards] =
    y_v = x_v =
    \inf_{\strat{2} \in \FairStrats{2}} \Expect{\strat{1}}{\strat{2}}_{\StochG,v}[\Rewards].
  \]

  In the last step of the proof we show $\starredstrat{2}$ is fair.
  By contradiction, suppose this is not the case.  Hence,
  $\Prob{\strat{1}}{\starredstrat{2}}_{\StochG,v}(\FP^2) < 1$ for some
  $v\in V$.  Thus, there must exist vertices $\hat{v}\in V_2$
  and $\hat{v}'\in\post(\hat{v})$ such that 
  $\Prob{\strat{1}}{\starredstrat{2}}_{\StochG,v}({\Box\Diamond\hat{v}}\land{\Diamond\Box\neg\hat{v}'})>0$,
  where
  ${\Box\Diamond\hat{v}}\land{\Diamond\Box\neg\hat{v}'} =\{ {\omega\in\GamePaths_{\StochG}} \mid {\hat{v} \in \inf(\omega)} \land {\hat{v}' \notin \inf(\omega)} \}$.

  By Theorem 10.56 in~\cite{BaierK08}, there is a BSCC
  %% \footnote{A \emph{bottom strongly connected component (BSCC)} is a
  %% strongly connected component $B\subseteq V$ that cannot reach any
  %% vertex outside $B$, that is
  %% $\delta^{\strat{1},\starredstrat{2}}(v,B)=1$ for all $v\in B$.}%
  $B$ in $\StochG^{\strat{1},\starredstrat{2}}$ satisfying $\hat{v}\in
  B$ and $\hat{v}'\notin B$, such that
  $\Prob{\strat{1}}{\starredstrat{2}}_{\StochG,v}({\Box\Diamond\hat{v}}\land{\Diamond\Box\neg\hat{v}'})
  = \Prob{\strat{1}}{\starredstrat{2}}_{\StochG,v}({\Diamond B})$.
  (A \emph{bottom strongly connected component (BSCC)} is a strongly
  connected component $B\subseteq V$ that cannot reach any vertex
  outside $B$, that is $\delta^{\strat{1},\starredstrat{2}}(v,B)=1$
  for all $v\in B$.)

  Take $v^*\in B$ with minimal distance in
  $\StochG^{\strat{1}}_{\min}$ to a terminal state in $T$, that is
  $\disttoT{v^*}\leq\disttoT{v}$ for all $v\in B$.  Since terminal
  vertices are absorbing, we have that $\hat{v}\notin T$, and hence
  $T\cap B=\emptyset$.  Therefore $\disttoT{v^*}>0$.
  If $v^*\in V_1 \cup V_\Probabilistic$, by definition of
  $\disttoT{v^*}$, $\delta^{\strat{1}}_{\min}(v^*,v')>0$
  for some $v'$ such that $\disttoT{v'}+1=\disttoT{v^*}$. Then
  $v'\notin B$, contradicting the fact that
  $\delta^{\strat{1}}_{\min}(v^*,B)=\delta^{\strat{1},\starredstrat{2}}(v^*,B)=1$.
  Hence, $v^*\notin V_1 \cup V_\Probabilistic$ $(\dagger)$.
  If $v^*\in V_2$, by definition,
  $\starredstrat{2}(v^*)(v')=1$ if and only if
  $\delta^{\strat{1}}_{\min}(v^*,v')=1$ and
  $\disttoT{v^*}=\disttoT{v'}+1$. Again $v'\notin B$, 
  which contradicts the fact that
  $\starredstrat{2}(v^*)(B)=\delta^{\strat{1},\starredstrat{2}}(v^*,B)=1$.
  Therefore $v^*\notin V_2$ $(\ddagger)$.
  
  By $(\dagger)$ and $(\ddagger)$, $v^*\notin V \supseteq B$
  contradicting our assumption that $v^*\in B$.  Therefore,
  $\starredstrat{2}$ is fair.
  \qed
\end{myproof}

%New lemmas

To prove Theorem \ref{th:semimarkov-to-memoryless},  we need the following lemma.

\begin{lemma}\label{lemma:sum-of-nonterminal-is-zero}
  Let $\StochG$ a stochastic game that is stopping under fairness.
  Then for any $\strat{1} \in \Strategies{1}$,
  $\strat{2} \in \FairStrats{2}$, and $v \in V$ it holds:
  \[
  \lim_{N \to \infty} \sum_{\hat{\omega} \in (V \setminus T)^N} \Prob{\strat{1}}{\strat{2}}_{\StochG,v}(\hat{\omega}) = 0
  \]
  where $T$ is the set of terminal nodes of $\StochG$.
\end{lemma}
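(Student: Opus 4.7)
The plan is to interpret the sum $\sum_{\hat{\omega} \in (V \setminus T)^N} \Prob{\strat{1}}{\strat{2}}_{\StochG,v}(\hat{\omega})$ as the probability of the event that after $N$ steps the play has not yet visited a terminal state, and then appeal to the stopping under fairness hypothesis together with the assumption that $\strat{2}\in\FairStrats{2}$.

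First, I would fix notation and let $A_N = \{\omega\in\GamePaths_{\StochG} \mid \forall i < N\colon \omega_i \notin T\}$ be the set of paths whose first $N$ states avoid $T$. Since the cones of the finite prefixes of length $N$ of $A_N$ are precisely $\{\cylinder{\hat{\omega}} \mid \hat{\omega}\in (V\setminus T)^N\}$, and these cones are pairwise disjoint and cover $A_N$, by $\sigma$-additivity we have
\[
\sum_{\hat{\omega} \in (V \setminus T)^N} \Prob{\strat{1}}{\strat{2}}_{\StochG,v}(\hat{\omega}) \ = \ \Prob{\strat{1}}{\strat{2}}_{\StochG,v}(A_N).
\]
Observe that the events $A_N$ form a decreasing chain $A_0 \supseteq A_1 \supseteq A_2 \supseteq \dots$, and $\bigcap_{N\geq 0} A_N = \{\omega \mid \forall i\geq 0\colon \omega_i \notin T\} = \GamePaths_{\StochG}\setminus \Diamond T$.

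Second, by continuity of the probability measure along decreasing sequences of events,
\[
\lim_{N\to\infty} \Prob{\strat{1}}{\strat{2}}_{\StochG,v}(A_N) \ = \ \Prob{\strat{1}}{\strat{2}}_{\StochG,v}\!\left(\bigcap_{N\geq 0} A_N\right) \ = \ 1 - \Prob{\strat{1}}{\strat{2}}_{\StochG,v}(\Diamond T).
\]
Since $\StochG$ is stopping under fairness and $\strat{2}\in\FairStrats{2}$, Definition~\ref{def:stopping-under-fairness} yields $\Prob{\strat{1}}{\strat{2}}_{\StochG,v}(\Diamond T)=1$, so the limit is $0$, as required.

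There is no serious obstacle in this argument; the only point that requires a little care is the identification of the finite sum with the probability of a cylinder-like event $A_N$, which relies on the fact that terminal states are absorbing so that a finite path in $(V\setminus T)^N$ uniquely characterizes the event of avoiding $T$ for the first $N$ steps, and on the disjointness of the cones of distinct finite paths of the same length. After this identification, the conclusion follows immediately from continuity of measure together with the standing fairness and stopping assumptions.
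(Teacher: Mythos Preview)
Your proof is correct and follows essentially the same approach as the paper: both arguments identify the sum with the probability of the event ``the first $N$ states avoid $T$,'' then use continuity of measure together with the stopping-under-fairness assumption to conclude. The only cosmetic difference is that the paper passes to the complementary increasing events $\Diamond^{\leq N} T$ and applies continuity from below, whereas you apply continuity from above directly to the decreasing events $A_N$; these are dual formulations of the same step. (Your aside about absorbingness of terminal states is not actually needed for the identification $A_N=\bigcup_{\hat{\omega}\in(V\setminus T)^N}\cylinder{\hat{\omega}}$, which is purely set-theoretic, but this does not affect correctness.)
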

\begin{proof}
  %\remarkPRD{simplifiqu\'e significativamente esta prueba. Verificar}
  First notice that the complement of $(V \setminus T)^N$ is
  $\Diamond^{\leq N} T = \cup^N_{i=0} \Diamond^i T$.
  Moreover, $\Diamond^{\leq N} T \subseteq \Diamond^{\leq N+1} T$, for all $N\geq 0$,
  and $\lim_{N \to \infty} \Diamond^{\leq N} T = \Diamond T$.
  Then,
  \begin{align*}
    \lim_{N \to \infty} \sum_{\hat{\omega} \in (V \setminus T)^N} \Prob{\strat{1}}{\strat{2}}_{\StochG,v}(\hat{\omega}) \
    & = \lim_{N \to \infty} \Prob{\strat{1}}{\strat{2}}_{\StochG,v}((V \setminus T)^N) \\
    & = \lim_{N \to \infty} 1 - \Prob{\strat{1}}{\strat{2}}_{\StochG,v}(\Diamond^{\leq N} T ) \\
    & = 1 - \lim_{N \to \infty} \Prob{\strat{1}}{\strat{2}}_{\StochG,v}(\Diamond^{\leq N} T ) \\
    & = 1 - \Prob{\strat{1}}{\strat{2}}_{\StochG,v}(\Diamond T ) \\
    & = 0
  \end{align*}
  The last equality is due to stopping under fairness, and all the
  previous ones, to standard measure theory results.
  \qed
\end{proof}

\begin{myproof}[of Theorem \ref{th:semimarkov-to-memoryless}]
  Consider a stochastic game $\StochG$ that is stopping under fairness, and a vertex $v$. First, note that any semi-Markov strategy $\strat{1}$ can be thought of as a  sequence of memoryless strategies: $\strat{1}^{0},\strat{1}^{1},\strat{1}^{2}, \dots$,  defined as follows:
  \begin{equation}\label{eq:def-memoryless-strat-for-semimarkov}
    \strat{1}^{n}(\hat{\omega}v)(v') = \strat{1}(v_0 \dots v_n)(v'), \text{ for any $\hat{\omega}$ and $v_0 \dots v_n$ with $v_n = v$}
  \end{equation}

  Now,  for any strategies $\strat{1} \in \SemiMarkovStrats{1}$ and $\strat{2} \in \MemorylessFairStrats{2}$, we have that:
  \[
  \Expect{\strat{1}}{\strat{2}}_{\StochG,v}[\Rewards] = 
  \sum^\infty_{n=0} \sum_{\hat{\omega} \in vV^n} \prod^{n-1}_{i=0} \delta^{\strat{1}^i,\strat{2}}(\hat{\omega}_i, \hat{\omega}_{i+1}) \reward(\hat{\omega}_n),
  \]
  where $\delta^{\strat{1}^i,\strat{2}}$ is the probability transition function of the Markov chain $\StochG^{\strat{1}^i,\strat{2}}$.  We assume that $\prod^{-1}_{i=0} \delta^{\strat{1}^i,\strat{2}}(\hat{\omega}_i, \hat{\omega}_{i+1}) = 1$.

  Fix a $\strat{2} \in \MemorylessFairStrats{2}$, and let $\starredstrat{1} \in \MemorylessStrats{1}$ be an optimal memoryless strategy for Player 1  for strategy $\strat{2}$.  We will prove that  no semi-Markov  strategy improves the value obtained by $\starredstrat{1}$.  Let $\strat{1}$ be an arbitrary and fixed semi-Markov strategy for Player 1.  First, note that, since $\starredstrat{1}$ is optimal for memoryless strategies, we have:
  \begin{equation}\label{eq:optimality-appendix}
    \reward(v_n) + \sum_{v_{n+1} \in \post(v_n)} \delta^{\strat{1}^n,\strat{2}}(v_n, v_{n+1}) \Expect{\starredstrat{1}}{\strat{2}}_{\StochG,v_{n+1}}[\Rewards] \leq \Expect{\starredstrat{1}}{\strat{2}}_{\StochG,v_n}[\Rewards]
  \end{equation}
  for any vertex $v_n$ and memoryless strategy $\strat{1}^n$ as defined in (\ref{eq:def-memoryless-strat-for-semimarkov}).  Now, let $\hat{\omega}$ be any sequence such that
  $\hat{\omega}_n = v_n$, we multiply both sides of (\ref{eq:optimality-appendix}) by $\prod^{n-1}_{i=0} \delta^{\strat{1}^i,\strat{2}}(\hat{\omega}_i,\hat{\omega}_{i+1})$,
  and get:
  \begin{align*}
    & \prod^{n-1}_{i=0}\delta^{\strat{1}^i, \strat{2}}_{\StochG,v}(\hat{\omega}_i, \hat{\omega}_{i+1}) \reward(v_n) \\
    & \quad + \ \prod^{n-1}_{i=0}\delta^{\strat{1}^i,\strat{2}}(\hat{\omega}_i, \hat{\omega}_{i+1}) \sum_{v_{n+1} \in \post(\hat{\omega}_n)}  \delta^{\strat{1}^n,\strat{2}}(\hat{\omega}_n, v_{n+1}) \Expect{\starredstrat{1}}{\strat{2}}_{\StochG,v_{n+1}}[\Rewards] \\
    & \hspace{18em} \leq 
    \prod^{n-1}_{i=0}\delta^{\strat{1}^i,\strat{2}}(\hat{\omega}_i,\hat{\omega}_{i+1}) \Expect{\starredstrat{1}}{\strat{2}}_{\StochG,v_n}[\Rewards]
  \end{align*}
  Thus,  considering all the sequences $\hat{\omega}$ of length $n+1$ we obtain:
  \begin{align*}
    & \sum_{\hat{\omega} \in vV^{n}} \prod^{n-1}_{i=0}\delta^{\strat{1}^i,\strat{2}}(\hat{\omega}_i,  \hat{\omega}_{i+1}) \reward(\omega_n) \\
    & \quad + \ \sum_{\hat{\omega} \in vV^{n+1}} \prod^{n}_{i=0}\delta^{\strat{1}^i,\strat{2}}(\hat{\omega}_i,\hat{\omega}_{i+1})  \Expect{\starredstrat{1}}{\strat{2}}_{\StochG,\hat{\omega}_{n+1}}[\Rewards]\\
    & \hspace{13em} \leq 
	\sum_{\hat{\omega} \in vV^{n}} \prod^{n-1}_{i=0}\delta^{\strat{1}^i,\strat{2}}(\hat{\omega}_i, \hat{\omega}_{i+1}) \Expect{\starredstrat{1}}{\strat{2}}_{\StochG,\hat{\omega}_n}[\Rewards]
  \end{align*}
  Summing up from $n=0$ to $N$:
  \begin{align*}
    & \sum^{N}_{n=0}\sum_{\hat{\omega} \in vV^{n}} \prod^{n-1}_{i=0}\delta^{\strat{1}^i,\strat{2}}(\hat{\omega}_i,\hat{\omega}_{i+1}) \reward(\hat{\omega}_n) \\
    & \quad + \ \sum^{N}_{n=0} \sum_{\hat{\omega} \in vV^{n+1}} \prod^{n}_{i=0}\delta^{\strat{1}^i,\strat{2}}(\hat{\omega}_i,\hat{\omega}_{i+1})  \Expect{\starredstrat{1}}{\strat{2}}_{\StochG,\hat{\omega}_{n+1}}[\Rewards] \\
    & \hspace{15em} \leq 
	\sum^{N}_{n=0}
	\sum_{\hat{\omega} \in vV^{n}} \prod^{n-1}_{i=0}\delta^{\strat{1}^i,\strat{2}}(\hat{\omega}_i, \hat{\omega}_{i+1}) \Expect{\starredstrat{1}}{\strat{2}}_{\StochG,\hat{\omega}_n}[\Rewards],
  \end{align*}
  %
  %(we assume that $\prod^{-1}_{i=0} \delta^{\strat{1}^i,\strat{2}}(\hat{\omega}_i, \hat{\omega}_{i+1}) = 1$)
  This is the same as:
  \begin{align*}
    & \sum^{N}_{n=0}\sum_{\hat{\omega} \in vV^{n}} \prod^{n-1}_{i=0}\delta^{\strat{1}^i,\strat{2}}(\hat{\omega}_i,\hat{\omega}_{i+1}) \reward(\hat{\omega}_n) \\
    & \quad + \
	\sum^{N+1}_{n=1} \sum_{\hat{\omega} \in vV^{n}} \prod^{n-1}_{i=0}\delta^{\strat{1}^i,\strat{2}}_{\StochG,v}(\hat{\omega}_i,\hat{\omega}_{i+1})  \Expect{\starredstrat{1}}{\strat{2}}_{\StochG,\hat{\omega}_{n}}[\Rewards] \\
    & \hspace{13em} \leq 
	 \sum^{N}_{n=0}
	\sum_{\hat{\omega} \in vV^{n}} \prod^{n-1}_{i=0}\delta^{\strat{1}^i,\strat{2}}(\hat{\omega}_i, \hat{\omega}_{i+1}) \Expect{\starredstrat{1}}{\strat{2}}_{\StochG,\hat{\omega}_n}[\Rewards]
  \end{align*}
  Thus, subtracting $\sum^{N+1}_{n=1} \sum_{\hat{\omega} \in vV^{n}} \prod^{n-1}_{i=0}\delta^{\strat{1}^i,\strat{2}}(\hat{\omega}_i,\hat{\omega}_{i+1})  \Expect{\starredstrat{1}}{\strat{2}}_{\StochG,\hat{\omega}_{n}}[\Rewards]$ from both sides we get:
  \begin{align*}
    & \sum^{N}_{n=0}\sum_{\hat{\omega} \in vV^{n}} \prod^{n-1}_{i=0}\delta^{\strat{1}^i,\strat{2}}(\hat{\omega}_i, \hat{\omega}_{i+1}) \reward(v_n)\\
    & \hspace{7em} \leq 
      \Expect{\starredstrat{1}}{\strat{2}}_{\StochG,v}[\Rewards] \ - \sum_{\hat{\omega} \in vV^{N+1}}\prod^{N}_{n=1}\delta^{\strat{1}^n,\strat{2}}(\hat{\omega}_i,\hat{\omega}_{i+1}) \Expect{\starredstrat{1}}{\strat{2}}_{\StochG, \hat{\omega}_{N+1}}[\Rewards]
  \end{align*}
  Applying limits on both sides, we obtain

  \begin{align}
    & \lim_{N \rightarrow \infty} \sum^{N}_{n=0}\sum_{\hat{\omega} \in vV^{n}} \prod^{n-1}_{i=0}\delta^{\strat{1}^i,\strat{2}}(\hat{\omega}_i, \hat{\omega}_{i+1}) \reward(v_n) \label{th:semimarkov-to-memoryless-eq0}\\
    & \hspace{5em} \leq 
      \Expect{\starredstrat{1}}{\strat{2}}_{\StochG,v}[\Rewards] \ - \lim_{N \rightarrow \infty} \sum_{\hat{\omega} \in vV^{N+1}}\prod^{N}_{n=1}\delta^{\strat{1}^n,\strat{2}}(\hat{\omega}_i,\hat{\omega}_{i+1}) \Expect{\starredstrat{1}}{\strat{2}}_{\StochG, \hat{\omega}_{N+1}}[\Rewards] \notag
  \end{align}

  Observe that
  \begin{align}
    \lim_{N \rightarrow \infty} \sum_{\hat{\omega} \in vV^{N+1}}\prod^{N}_{n=1}\delta^{\strat{1}^n,\strat{2}}(\hat{\omega}_i,\hat{\omega}_{i+1}) \Expect{\starredstrat{1}}{\strat{2}}_{\StochG, \hat{\omega}_{N+1}}[\Rewards] \hspace{-16em} & \notag \\
    = \ & \lim_{N \rightarrow \infty} \sum_{\hat{\omega} \in v(V\setminus T)^{N+1}}\prod^{N}_{n=1}\delta^{\strat{1}^n,\strat{2}}(\hat{\omega}_i,\hat{\omega}_{i+1}) \Expect{\starredstrat{1}}{\strat{2}}_{\StochG, \hat{\omega}_{N+1}}[\Rewards] \label{th:semimarkov-to-memoryless-lim-eq1}\\
    & \phantom{\lim_{N \rightarrow \infty}} + \
    \sum_{\hat{\omega} \in v\bigcup_{i=0}^N(V\setminus T)^iT^{N+1-i}}\prod^{N}_{n=1}\delta^{\strat{1}^n,\strat{2}}(\hat{\omega}_i,\hat{\omega}_{i+1}) \Expect{\starredstrat{1}}{\strat{2}}_{\StochG, \hat{\omega}_{N+1}}[\Rewards] \notag \\
    = \ & \lim_{N \rightarrow \infty} \sum_{\hat{\omega} \in v(V\setminus T)^{N+1}}\prod^{N}_{n=1}\delta^{\strat{1}^n,\strat{2}}(\hat{\omega}_i,\hat{\omega}_{i+1}) \Expect{\starredstrat{1}}{\strat{2}}_{\StochG, \hat{\omega}_{N+1}}[\Rewards] \label{th:semimarkov-to-memoryless-lim-eq2}\\
    \leq \ & \lim_{N \rightarrow \infty} \sum_{\hat{\omega} \in v(V\setminus T)^{N+1}}\prod^{N}_{n=1}\delta^{\strat{1}^n,\strat{2}}(\hat{\omega}_i,\hat{\omega}_{i+1}) \ M \label{th:semimarkov-to-memoryless-lim-eq3}\\
    = \ & M \ \lim_{N \rightarrow \infty} \sum_{\hat{\omega} \in v(V\setminus T)^{N+1}} \Prob{\strat{1}}{\strat{2}}_{\StochG,v}(\hat{\omega}) \label{th:semimarkov-to-memoryless-lim-eq4}\\
    = \ & 0  \label{th:semimarkov-to-memoryless-lim-eq5}
  \end{align}
  (\ref{th:semimarkov-to-memoryless-lim-eq1}) folows by noticing that,
  since terminal states in set $T$ are absorving,
  $V^{N+1}=(V\setminus T)^{N+1}\cup\bigcup_{i=0}^N(V\setminus T)^iT^{N+1-i}$.
  Since
  $\Expect{\starredstrat{1}}{\strat{2}}_{\StochG, \hat{\omega}_{N+1}}[\Rewards]=0$
  for all $\hat{\omega}_{N+1}\in T$, (\ref{th:semimarkov-to-memoryless-lim-eq2})
  folows.
  By taking
  $M = \max_{v\in V} \Expect{\starredstrat{1}}{\strat{2}}_{\StochG, v}[\Rewards]$,
  which we know it exists because of
  Theorem~\ref{th:memoryless-strat-p2-bounded-expectation}, we can conclude
  (\ref{th:semimarkov-to-memoryless-lim-eq3}).
  (\ref{th:semimarkov-to-memoryless-lim-eq4}) follows by definition of
  $\Prob{\strat{1}}{\strat{2}}_{\StochG,v}(\hat{\omega})$ and we
  conclude (\ref{th:semimarkov-to-memoryless-lim-eq5}) using
  Lemma~\ref{lemma:sum-of-nonterminal-is-zero}.
  
  %%   and taking into account that:
  %% \[ \lim_{N \rightarrow \infty}\sum_{\hat{\omega} \in vV^{N+1}}\prod^{N}_{n=1}\delta^{\strat{1}^n,\strat{2}}(\hat{\omega}_i,\hat{\omega}_{i+1}) \Expect{\starredstrat{1}}{\strat{2}}_{\StochG, \hat{\omega}_{N+1}}[\Rewards] = 0,
  %% \]
  %% which follows from Lemma \ref{lemma:sum-of-nonterminal-is-zero} and Theorem \ref{th:memoryless-strat-p2-bounded-expectation}.

  Considering this last observation, from
  (\ref{th:semimarkov-to-memoryless-eq0}), we obtain
  \[
  \lim_{N \rightarrow \infty}	\sum^{N}_{n=0}\sum_{\hat{\omega} \in vV^{n}} \prod^{n-1}_{i=0}\delta^{\strat{1}^i,\strat{2}}(\hat{\omega}_i, \hat{\omega}_{i+1}) \reward(v_n) \ \leq \ \Expect{\starredstrat{1}}{\strat{2}}_{\StochG,v}[\Rewards]
  \]
  which  is equivalent to:
  \[
  \Expect{\strat{1}}{\strat{2}}_{\StochG,v}[\Rewards] \leq \Expect{\starredstrat{1}}{\strat{2}}_{\StochG,v}[\Rewards].
  \]
  Hence, the arbitrary semi-Markov strategy $\strat{1}$ does not improve on $\starredstrat{1}$.
  \qed
\end{myproof}

\end{document}